\documentclass[reprint,aps,prx,superscriptaddress,nofootinbib]{revtex4-2}

\usepackage[dvipsnames]{xcolor}
\usepackage[colorlinks=true,linkcolor=blue,citecolor=blue,urlcolor=blue]{hyperref}

\usepackage{orcidlink}

\usepackage{amsmath,amsfonts,amssymb,amsthm}
\usepackage{physics,mathtools,bbm}

\usepackage{graphicx}
\graphicspath{{./figures/}}

\newtheorem{Definition}{Definition}
\newtheorem{Lemma}{Lemma}
\newtheorem*{Lemma*}{Lemma}
\newtheorem{Theorem}{Theorem}
\newtheorem*{Theorem*}{Theorem}
\newtheorem{Corollary}{Corollary}
\newtheorem{Conjecture}{Conjecture}

\newcommand{\Z}{\mathbb{Z}}
\newcommand{\Cl}{\mathcal{C}\ell}
\newcommand{\Herm}{\mathrm{Herm}}
\newcommand{\Conv}{\mathrm{Conv}}
\newcommand{\STAB}{\mathrm{STAB}}
\newcommand{\ASTAB}{\mathrm{ASTAB}}

\newcommand{\WP}{\mathrm{WP}}
\newcommand{\AWP}{\mathrm{AWP}}
\newcommand{\PSD}{\mathrm{PSD}}

\usepackage{bbold}
\DeclareMathAlphabet{\mathbbold}{U}{bbold}{m}{n}
\DeclareMathAlphabet{\mathbb}{U}{msb}{m}{n}
\newcommand{\bbzero}{\mathbbold{0}}

\begin{document}

\title{Basis-independent stabilizerness and maximally noisy magic states}

\author{Michael Zurel\,\orcidlink{0000-0003-0333-1174}}
\email{mzurel@sfu.ca}
\affiliation{Department of Mathematics, Simon Fraser University, Burnaby, Canada}

\author{Jack Davis\,\orcidlink{0000-0003-1705-1798}}
\email{jack.davis@inria.fr}
\affiliation{DIENS, \'Ecole Normale Sup\'erieure, PSL University, CNRS INRIA, 45 rue d'Ulm, Paris 75005, France}

\date{\today}

\begin{abstract}
    Absolutely stabilizer states are those that remain convex mixtures of stabilizer states after conjugation by any unitary.  Here we give a characterization of such states for multiple qudits of all prime dimensions by introducing a polytope of their allowed spectra.  We illustrate this through the examples of one qubit, two qubits, and one qutrit. In particular, the set of absolutely stabilizer states for a single qubit is a ball inscribed in the stabilizer octahedron, but for higher dimensions the geometry is more complicated.  For odd-prime-dimensional qudits, we also give a complete characterization of absolutely Wigner-positive states, i.e., states whose Wigner function remains nonnegative after conjugation by any unitary. In so doing, we show there are absolutely Wigner-positive states that are not absolutely stabilizer, which can be seen as a unitarily-invariant version of bound magic.  We then study the radii of the largest balls contained in the sets of absolutely stabilizer states and absolutely Wigner-positive states.  These radii respectively tell us the lowest possible purity of nonstabilizer and Wigner-negative states. Conversely, we also find the radius of the smallest ball containing the set of absolutely Wigner-positive states, giving a tight purity-based necessary condition thereof.
\end{abstract}

\maketitle

\section{Introduction}

Stabilizer states are of fundamental importance in quantum computation. They provide the structural foundation upon which much of the field rests, and are indispensable in the sub-fields of quantum error correction and classical simulation~\cite{Gottesman1997, Gottesman1999, AaronsonGottesman2004, CampbellBrowne2009, BravyiSmolin2016, BravyiGosset2016, BravyiHoward2019, SeddonCampbell2021}. Such states define the complementary set of non-stabilizer, or magic states, as those inexpressible as a convex mixture of pure stabilizer states. Classifying, manipulating, and distilling such magic states is essential for achieving universality in the gate-injection model of fault-tolerant quantum computation~\cite{GottesmanChuang1999, BravyiKitaev2005, HowardEmerson2014, VeitchEmerson2014, HowardCampbell2017, SeddonCampbell2019, HeimendahlGross2022}. Consequently, characterizing the structure and geometry of mixed stabilizer states is central for understanding the onset of quantum advantage.

The property of a state being stabilizer or non-stabilizer is inherently basis-dependent: a stabilizer state with respect to the computational basis can be a magic state in another basis. This is an immediate consequence of the unitary group acting transitively on pure states, with a prototypical example being the non-Clifford $T$ gate mapping the stabilizer state $\ket{+}$ into the magic state $\ket{T}$. While the fault-tolerant implementation of such a unitary is highly constrained by the Eastin-Knill theorem~\cite{EastinKnill2009}, this observation nonetheless suggests an important question: which mixed quantum states remain stabilizer under arbitrary unitary conjugation?  Having a characterization of such a set is highly desirable as it would allow one to test if a given state is stabilizer given only its set of eigenvalues.

Motivated by recent work implying the existence of such \emph{absolutely stabilizer} states via fidelity-based magic detection~\cite{LiuRoth2025}, here we provide a systematic characterization of the set of absolutely stabilizer states, which we also dub the \emph{muggle set}. Our main result is showing that membership is governed by linear inequalities on the spectrum of a given state.
In particular, we show that beyond the single-qubit case, the set of absolutely stabilizer states is not the inscribed ball of the stabilizer polytope, nor any other Hilbert-Schmidt ball in the space of Hermitian operators. Even for the cases of a single qutrit and two qubits, counterexamples can be constructed. This is because the purity no longer determines the spectrum of a state uniquely, implying that the subset of fixed-purity states decomposes into disjoint unitary orbits. Consequently, any unitarily-invariant set, including in particular the muggle set, need not coincide with a Hilbert–Schmidt ball.  Note that our usage of unitarily invariance objects differs from \cite{WagnerGalvao2025}, which focuses on sets of states rather than individual ones.

Our main technical approach is to analyze the polar dual of the stabilizer polytope, called the $\Lambda$ polytope, where vertices of the former correspond to facets of the latter, and vice-versa~\cite{Heimendahl2019,ZurelRaussendorf2020,ZurelHeimendahl2024}. In the unitarily-invariant setting of absolutely stabilizer states, we derive a criterion relating a state's spectrum and the spectra of the vertices of $\Lambda$. The result is a polytope in the simplex of state spectra, which we call the \emph{absolutely stabilizer spectral polytope}, or \textit{muggle polytope} for short.  The absolutely stabilizer states are then those with spectra in this polytope.

In the context of multi-qubit systems, we then exploit the structure of the $\Lambda$ polytope using the closed and noncontextual (CNC) framework~\cite{RaussendorfZurel2020} to dramatically reduce the number of vertices that must be evaluated to test if a given state is absolutely stabilizer.  This leads to a natural conjecture, which we analytically verify for low-dimensional examples and give supporting numerical evidence in higher dimensions. In the two-qubit case, this refinement is provably optimal and realizes the muggle polytope as having exactly 18 facets, a reduction in redundancy from 22{,}320 vertices of $\Lambda$.

Using similar techniques, we go on to analyze the Hilbert-Schmidt inradius of the stabilizer polytope, i.e., the radius of the largest Hilbert-Schmidt ball contained in the polytope. This radius
depends strongly on whether the local dimension is even or odd, which we find to be a consequence of the existence of state-independent proofs of contextuality using multiqubit Pauli operators~\cite{Mermin1993}, which don't exist in odd dimensions~\cite{OkayRaussendorf2017,RaussendorfFeldmann2023}. For odd-prime dimensions we derive the inradius exactly, while for qubits the radius we find relies on a conjecture about the structure of the $\Lambda$ polytope. Our conjecture exactly matches that of Conjecture~1 in the recent work~\cite{LiuRoth2025}, offering a different perspective via the framework of closed and noncontextual sets of Pauli operators.

Finally, we discuss the related notion of absolute Wigner positivity~(AWP) in odd-prime dimensions~\cite{SalazarJunior2022}; see also~\cite{AbgaryanTorosyan2020, DenisMartin2024}. Such states are those whose Wigner function~\cite{Wootters1987, Gross2006, WangSu2019} remains nonnegative under arbitrary unitary conjugation. Our techniques for characterizing the spectra carry over directly to the AWP setting, which yields the \emph{AWP polytopes} in the simplex of spectra.  Similar polytopes have been constructed for other kinds of Wigner functions~\cite{AbgaryanTorosyan2020, DenisMartin2024}, but here we study them for the Wigner function most common to quantum computation~\cite{Wootters1987, Gross2006, WangSu2019}. This object is more tractable than the muggle polytope due to relative simplicity of the Wigner polytope (i.e.~the convex hull of the phase space point operators) as compared to the stabilizer polytope. We find that the AWP set is similarly not the Hilbert-Schmidt inball of the Wigner polytope, and moreover that in the case of a single qutrit the muggle polytope is a strict subset of the AWP polytope. That is, there exist mixed qutrit states that are non-stabilizer yet no global unitary can generate negativity. This inclusion of spectral polytopes can be seen as a unitarily-invariant version of bound magic~\cite{VeitchEmerson2012}; see also~\cite{CampbellBrowne2010, PrakashGupta2020}. Finally, we show that the inball of the Wigner polytope matches the inball of the stabilizer polytope in odd-prime dimensions.

The rest of the article is organized as follows. In Section~\ref{Section:Background} we recall the relevant background, including the stabilizer subtheory, the Wigner function, and the framework of closed and noncontextual sets of Pauli operators. Section~\ref{Section:MuggleMain} then presents our spectral characterization of absolutely stabilizer states, and we illustrate this characterization through some examples. In Section~\ref{Section:AWP}, we fully characterize the set of absolutely Wigner positive states. Finally, in Section~\ref{Section:StabilizerInradius} we find the radii of the largest ball contained in the sets of absolutely stabilizer and absolutely Wigner positive states, and we compare these radii against other related radii. Section~\ref{Section:Discussion} concludes with a discussion of our results.

\section{Background}\label{Section:Background}

\subsection{The stabilizer formalism}

Here we recall certain aspects of the stabilizer formalism~\cite{Gottesman1997,Gottesman1999,Appleby2005,Gross2006,DeBeaudrap2013,GrossWalter2013,SingalHsieh2023,ZurelHeimendahl2024}. While some of the following definitions and statements are valid in all $n$-qudit Hilbert spaces $\mathcal{H} \simeq (\mathbb C^d)^{\otimes n}$, we restrict attention throughout to prime qudit dimension $d$.

The qudit Pauli $X$ and $Z$ operators for a single qudit ($n=1$) are defined by their action on the computational basis $\ket{j}$,
\begin{equation}
    X\ket{j}=\ket{j+1}, \quad Z \ket{j} = \omega^j \ket{j}, \quad \forall j\in \Z_d := \Z/d\Z,
\end{equation}
where the arithmetic inside a ket is modulo $d$ and $\omega = e^{2\pi i/d}$ is the first primitive $d^{\text{th}}$ root of unity. For a system of $n$ qudits these are used to build the set of discrete-variable displacement operators, here called generalized Pauli operators.  These are labeled by vectors of the vector space $\Z_d^{n} \times \Z_d^n \simeq \Z_d^{2n} \simeq (\Z_d \times \Z_d)^{\times n}$. In particular, for $u=(u_z, u_x)\in\Z_d^n\times\Z_d^n$, the generalized Pauli operator is
\begin{equation}\label{Weyl_def_tau}
    T_{u}:=\tau^{- u_z \cdot u_x}\bigotimes_{k=1}^n Z^{[u_z]_k} X^{[u_x]_k},
\end{equation}
where $\tau = (-1)^d e^{i\frac{\pi}{d}}$~\cite{DeBeaudrap2013}. The \emph{Pauli group} is the group generated by the generalized Pauli operators, $\mathcal{P} = \langle \{ T_{a} \;|\; a \in \Z_d^{2n} \}\rangle$, and the \emph{Clifford group}, $\Cl$, is the normalizer of the Pauli group with respect to the unitary group $\mathrm{U}(d^n)$ modulo phases. The centre of the Pauli group $\mathcal{Z}(\mathcal P)$ is isomorphic to the additive group $\Z_d$ for odd $d$ and to $\Z_4$ for $d=2$, but in all cases $\mathcal{P} / \mathcal{Z}(\mathcal P)\simeq\Z_d^{2n}$.

The vector space $\Z_d^{2n}$ is equipped with the standard symplectic form
\begin{equation}
    [u,v] := u_z \cdot v_x - u_x \cdot v_z \mod d,
\end{equation}
which computes the group commutator between pairs of Pauli operators,
\begin{equation}\label{DV_CCR}
    [T_{u},T_{v}] := T_{u}T_{v}T_{u}^{-1}T_{v}^{-1} = \omega^{[u,v]}\mathbb{I}.
\end{equation}
A subspace $I \subset \Z_d^{2n}$ over which the symplectic form vanishes is called \emph{isotropic}, while an isotropic subspace of dimension $n$, which is the maximal dimension possible for such subspaces, is called \emph{Lagrangian}.  Isotropic subspaces are in one-to-one correspondence with equivalence classes of abelian subgroups of $\mathcal{P}$ up to phases. Bases of a fixed isotropic subspace are in correspondence with generating sets of the associated abelian subgroup, again up to phases.

We also define a $\Z_d$-valued function $\beta$ that tracks how \emph{commuting} Pauli operators compose through the relation
\begin{equation}\label{Equation:BetaFunction}
    T_u T_v=\omega^{-\beta(u,v)}T_{u+v},
\end{equation}
for all $u,v\in \Z_d^{2n}$ such that $[u,v]=0$. With the phase convention chosen above in Eq.~\eqref{Weyl_def_tau}, the odd-dimensional case reduces to $T_u T_v = \omega^{[u,v]\cdot2^{-1}}T_{u+v}$, and so $\beta\equiv 0$. For qubits this is no longer the case, and, in fact, no phase prefactor replacing the one in Eq.~\eqref{Weyl_def_tau} exists that can make $\beta\equiv0$ for qubits~\cite{RaussendorfFeldmann2023}.

This has important consequences for sets of commuting Pauli operators, which can be measured simultaneously. In particular, if a measurement of commuting operators $T_u$ and $T_v$ is performed yielding outcomes $\omega^{a}$ and $\omega^{b}$ respectively, the outcome $\omega^c$ for the measurement of observable $T_{u+v}$ is uniquely determined via the operator relation
\begin{equation}\label{Equation:NoncontextualityOperatorRelation}
    \omega^{-a-b}T_uT_v=\omega^{-a-b-\beta(u,v)}T_{u+v}=\omega^{-c}T_{u+v}.
\end{equation}
More generally, if a set of commuting Paulis $\{T_{u_i}\}_{u_i \in \Z_d^{2n}}$ are measured yielding outcomes $\{\omega^{r(u_i)}\}_{u_i}$, the outcomes for the isotropic subspace $I:=\langle u_1,u_2,\dots\rangle$ generated by these elements are fully determined. Therefore, we can label a measurement by an isotropic subspace $I$ and a set of measurement outcomes via a function $r:I\to\Z_d$ satisfying consistency conditions like Eq.~\eqref{Equation:NoncontextualityOperatorRelation} for all pairs in $I$.  Equivalently, with Eq.~\eqref{Equation:BetaFunction}, $r$ satisfies
\begin{equation}\label{stab_consistent}
    r(u)+r(v)=r(u+v)-\beta(u,v)\quad\forall u,v\in I.
\end{equation}
Hence assigning measurement outcomes to a collection of commuting Pauli operators is not always linear over an isotropic basis.
For such a pair $(I,r)$, the corresponding projector is
\begin{equation}\label{Equation:PauliProjector}
    \Pi_I^r=\frac{1}{|I|}\sum\limits_{u\in I}\omega^{-r(u)}T_u.
\end{equation}
In the case where $I=L$ is a Lagrangian subspace, the projector $\Pi_L^r$ is rank one and can be identified with a pure quantum state, this is a pure stabilizer state.  In fact, the set of all pure stabilizer states $\ketbra{L,r}{L,r}$ can be identified with such pairs $(L,r)$.

\subsection{The stabilizer and \texorpdfstring{$\Lambda$}{Lambda} polytopes}

We write $\Herm(\mathcal{H})$ for the real vector space of Hermitian operators on the Hilbert space $\mathcal{H} \simeq (\mathbb C^d)^{\otimes n}$, while $\Herm_c(\mathcal{H})$ denotes the affine subspace of $\Herm(\mathcal{H})$ defined by $\Herm_c(\mathcal{H}):=\left\{X\in\Herm(\mathcal{H})\;\big|\;\Tr(X)=c\right\}$. Let $\mathcal{S}$ denote the set of pure $n$-qudit stabilizer states. The \emph{stabilizer polytope} is the convex hull of pure stabilizer state projectors,
\begin{equation}
    \STAB:=\Conv(\{\,\ketbra{\sigma}{\sigma}\;\big|\;\ket{\sigma}\in\mathcal{S}\}).
\end{equation}
Any element of $\STAB$ is often referred to as a stabilizer state, and the abstract property of membership is sometimes called stabilizerness. The $n$-qudit $\Lambda$ \emph{polytope} is
\begin{equation}\label{Equation:LambdaDefinition}
    \Lambda := \left\{X\in\Herm_1(\mathcal{H}) \, \big|\,\Tr(X\ketbra{\sigma}{\sigma})\ge0\;\,\forall\ket{\sigma}\in\mathcal{S}\right\}.
\end{equation}
These polytopes have many applications, including being used to define classical simulation algorithms for quantum computation~\cite{HowardCampbell2017, ZurelRaussendorf2020, ZurelHeimendahl2024} and to characterize magic state distillation~\cite{Reichardt2009}.

Both $\STAB$ and $\Lambda$ live in the affine space $\Herm_1(\mathcal{H})$ of unit-trace Hermitian matrices. It will sometimes be convenient to translate these polytopes to the traceless subspace $\Herm_0(\mathcal{H})$, the generalization of Bloch space 
\begin{equation}
    \STAB_0 := \left\{X-\frac{1}{d^n}\mathbb{I}\;\bigg|\;X\in\STAB\right\},
\end{equation}
and
\begin{equation}
    \Lambda_0 := \left\{X-\frac{1}{d^n}\mathbb{I}\;\bigg|\;X\in\Lambda\right\}.
\end{equation}
With respect to the Hilbert-Schmidt inner product, $\langle X,Y\rangle=\Tr(XY)$, $\Lambda_0$ is the polar dual of $-d^n\STAB_0$. That is,
\begin{equation}\label{lambda_polar_dual_dnSTAB}
\begin{aligned}
    \Lambda_0 =& (-d^n\STAB_0)^\circ\\
    =& \{X\in\Herm_0(\mathcal{H})\;|\;\Tr(XY)\le1\;\forall Y\in-d^n\STAB_0\}.
\end{aligned}
\end{equation}

\subsection{The Wigner function}

The Wigner function is a quasiprobability representation that describes quantum states, transformations, and measurements as real-valued functions over $\Z_d^{2n}$.  Here we use the well-studied variant, valid in at least odd-prime dimensions, where negativity therein is linked to quantum contextuality and Wigner-positive circuits can be efficiently simulated on a classical computer~\cite{Wootters1987, Gross2006, VeitchEmerson2012, MariEisert2012, PashayanBartlett2015, DelfosseRaussendorf2017, WangSu2019}.  It arises via an operator basis known as the displaced parity basis or the set of \emph{phase space point operators}.  First define the multi-qudit parity operator as an expansion in the computational basis
\begin{equation}\label{Equation:ParityOperator}
    A_0 := \sum_{j \in \Z_d^n} \ketbra{-j}{j} = \frac{1}{d^n}\sum_{u \in \Z_d^{2n}} T_u,
\end{equation}
where again arithmetic inside kets is modulo $d$.  The remaining phase space point operators are displaced versions thereof:
\begin{equation}
    A_u = T_u A_0 T^\dagger_u \in \Herm_1(\mathcal H),\quad\forall u\in\Z_d^{2n}.
\end{equation}
These operators are Hermitian, have unit trace, and form an orthogonal basis in the space of operators, satisfying the orthogonality relation $\Tr(A_uA_v)=d^n\delta_{u,v}$.

The Wigner function, $W_\rho:\Z_d^{2n}\to\mathbb{R}$, of a state $\rho$ is given by the expansion coefficients of a state in this basis,
\begin{equation}\label{Equation:WignerDefinition1}
    \rho=\sum\limits_{v\in\Z_d^{2n}}W_\rho(v)A_v,
\end{equation}
which, by orthogonality, can be extracted via
\begin{equation}\label{Equation:WignerDefinition2}
    W_\rho(u)=\frac{1}{d^n}\Tr(\rho A_u).
\end{equation}

The \emph{Wigner polytope}, denoted $\WP$, is the set of Hermitian operators with nonnegative Wigner function.  By orthogonality of the phase space point operators and the two descriptions of the Wigner function, Eq.~\eqref{Equation:WignerDefinition1} and Eq.~\eqref{Equation:WignerDefinition2}, $\WP$ has dual descriptions: (1)~as the convex hull of the phase space point operators
\begin{equation}\label{Equation:WignerPolytope1}
    \WP=\Conv(\{A_u\;|\;u\in\Z_d^{2n}\}),
\end{equation}
and (2)~as an intersection of halfspaces
\begin{equation}\label{Equation:WignerPolytope2}
    \WP=\{X\in\Herm_1(\mathcal{H})\;|\;\Tr(XA_u)\ge0\;\forall u\in\Z_d^{2n}\}.
\end{equation}
Using a similar argument as in the previous subsection, this makes $\WP_0$ the polar dual of $-d^n\WP_0$ with respect to the Hilbert-Schmidt inner product.

A \emph{Wigner-positive} state is one whose Wigner function is everywhere nonnegative: $W_\rho(u) \geq 0$ for all $u \in \Z_d^{2n}$ (we follow convention and do not refer to such states as Wigner-nonnegative).  The set of Wigner-positive states is the intersection of the Wigner polytope with the set of quantum states.  This is known to be a strict superset of the stabilizer polytope, a phenomenon known as \emph{bound magic}~\cite{VeitchEmerson2012}.  See also~\cite{CampbellBrowne2010, PrakashGupta2020} for related notions of bound magic.

\subsection{CNC operators}

Closed and noncontextual (CNC) operators were introduced in Ref.~\cite{RaussendorfZurel2020} in order to define a quasiprobability function for qubits that is useful for simulating quantum computations, similar to the Wigner function for odd-dimensional qudits~\cite{MariEisert2012,VeitchEmerson2012}. They are defined with respect to certain sets of Pauli observables which do not contain state-independent proofs of contextuality like the Mermin square~\cite{Mermin1993}.

\begin{Definition}
    A set $\Omega\subset\Z_d^{2n}$ is \emph{closed under inference} if for all $u,v\in\Omega$,
    \begin{equation}
        [u,v]=0 \;\Longrightarrow\; u+v\in\Omega.
    \end{equation}
\end{Definition}

In terms of observables, a set $\Omega$ is closed under inference if for any two commuting Pauli operators $T_u$ and $T_v$ with $u,v\in\Omega$, the vector $u+v$ corresponding to the product of the Pauli operators, $T_uT_v\propto T_{u+v}$, is also in $\Omega$. The idea is, if $T_u$ and $T_v$ commute, and measurement outcomes of $T_u$ and $T_v$ are known, then the outcome of a measurement of $T_{u+v}$ could be inferred. Closure under inference is also called partial closure in Ref.~\cite{KimAbramsky2023}.

\begin{Definition}
    Let $\Omega\subset\Z_d^{2n}$ be a set that is closed under inference.  A \emph{noncontextual value assignment} for $\Omega$ is a function $\gamma:\Omega\to\Z_d$ such that for all $u,v\in\Omega$ with $[u,v]=0$,
    \begin{equation}\label{Equation:NoncontextualityCondition}
        \gamma(u)+\gamma(v) = \gamma(u+v) - \beta(u,v).
    \end{equation}
\end{Definition}

Such a value assignment can be seen as a generalization of the above stabilizer consistency relation Eq.~\eqref{stab_consistent} from isotropic subspaces to sets that are closed under inference.

\begin{Definition}
    A set $\Omega\subset\Z_d^{2n}$ is \emph{noncontextual} if there exists a noncontextual value assignment function $\gamma:\Omega\to\Z_d$.
\end{Definition}

A set $\Omega\subset\Z_d^{2n}$ that is both closed under inference and noncontextual is called \emph{CNC}. A CNC operator is defined as follows.

\begin{Definition}
    For a pair $(\Omega,\gamma)$ where $\Omega\subset\Z_d^{2n}$ is CNC and $\gamma:\Omega\to\Z_d$ is a noncontextual value assignment for $\Omega$, the corresponding \emph{CNC operator} is
    \begin{equation}\label{Equation:CNCOperators}
        A_\Omega^\gamma=\frac{1}{d^n}\sum\limits_{v\in\Omega}\omega^{-\gamma(v)}T_v.
    \end{equation}
\end{Definition}

The CNC sets $\Omega\subset\Z_d^{2n}$ and noncontextual value assignments $\gamma:\Omega\to\Z_d$ have been characterized for qudits of all prime dimensions~\cite{RaussendorfZurel2020,ZurelHeimendahl2024b}. The characterization looks different for qubits than for odd-prime-dimensional qudits.

\subsubsection{Qubits}

Not every set of multiqubit Pauli observables is noncontextual because of the existence of Mermin square-style proofs of contextuality~\cite{Mermin1993}. Intuitively, a CNC set is simply a collection of multiqubit Pauli observables that does not contain a state-independent proof of contextuality. In fact, if a set is closed under inference, then for it to be noncontextual, it suffices that it does not contain a set of operators with the same commutation relations as the nine Mermin square observables~\cite[Supplementary Material, Lemma~3]{ZurelRaussendorf2020} (see Figure~\ref{Figure:CNCMerminSquare} for some examples). With this condition, the multiqubit CNC sets can be fully characterized and classified. This is achieved by the following theorem.

\begin{figure}
    \centering
    \includegraphics[width=0.45\linewidth]{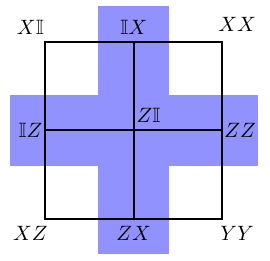}
    \hspace{0.05\linewidth}
    \includegraphics[width=0.45\linewidth]{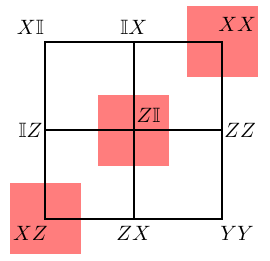}
    \caption{CNC sets cannot contain observables forming a Mermin square-style proof of contextuality. (a)~A subset of an $m=1$ CNC set, and (b)~a subset of an $m=2$ CNC set on two qubits (see Theorem~\ref{Theorem:CNCClassification}). In both cases, the Mermin square proof is avoided.}
    \label{Figure:CNCMerminSquare}
\end{figure}

\begin{Theorem}[\cite{RaussendorfZurel2020}; \S IV]\label{Theorem:CNCClassification}
    A set $\Omega\subset\Z_2^{2n}$ is CNC if and only if it has the form
    \begin{equation}\label{Equation:QubitCNCSets}
        \Omega=\bigcup\limits_{k=1}^\xi\langle a_k,I\rangle,
    \end{equation}
    where $I\subset\Z_2^{2n}$ is an $(n-m)$-dimensional isotropic subspace for some $m\in\{1,\dots,n\}$, $1\le\xi\le 2m+1$, $\{a_k\}_k$ pairwise anticommute, and all $a_k$ commute with every element of $I$. Further, a CNC set is inclusion-maximal if and only if $\xi=2m+1$.
\end{Theorem}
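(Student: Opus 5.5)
We need to prove the classification of multiqubit CNC sets stated in Theorem~\ref{Theorem:CNCClassification}. The plan has two directions plus maximality. Throughout we use the criterion quoted above: a set closed under inference is noncontextual if and only if it contains no nine elements with the commutation relations of the Mermin square~\cite[Supplementary Material, Lemma~3]{ZurelRaussendorf2020}.

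\emph{Sufficiency.} Let $\Omega=\bigcup_k\langle a_k,I\rangle$ be as in the statement.

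First we check closure under inference. Take $u=a_k+x$ and $v=a_l+y$ with $x,y\in I$ (or $u$ or $v$ in $I$ itself). Since the $a_k$ commute with $I$, we have $[u,v]=[a_k,a_l]$. This vanishes only if $k=l$, because distinct $a_k$ anticommute. In that case $u+v=x+y\in I\subset\Omega$. If instead $u\in I$, then $u+v\in\langle a_l,I\rangle$.

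Next we construct a noncontextual value assignment. Fix any valid $r$ on each isotropic subspace $\langle a_k,I\rangle$ satisfying Eq.~\eqref{stab_consistent}, with all of them agreeing on $I$. This is possible because each $\langle a_k,I\rangle$ is isotropic and any valid assignment on $I$ extends to it by choosing $\gamma(a_k)$ freely. Every commuting pair in $\Omega$ lies in a single $\langle a_k,I\rangle$, so Eq.~\eqref{Equation:NoncontextualityCondition} only needs to hold within each piece, where it holds by construction.

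Finally, the bound $\xi\le 2m+1$ is not needed for sufficiency. It only records that $\langle I\rangle^\perp/I\simeq\Z_2^{2m}$ supports at most $2m+1$ pairwise anticommuting elements.

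\emph{Necessity.} Let $\Omega$ be CNC.

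Define $I:=\{u\in\Omega : [u,v]=0\ \forall v\in\Omega\}$, together with $0$. First we show $I$ is a subspace and that $\Omega+I\subset\Omega$. Take $u\in I$ and $v\in\Omega$. They commute, so $u+v\in\Omega$ by closure. Moreover $u+v$ commutes with everything that $v$ commutes with. Any $w$ anticommuting with $v$ also anticommutes with $u+v$. This shows $\Omega$ is a union of cosets of $I$ and that $I$ is closed under addition. Isotropy of $I$ is immediate.

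Next pass to the quotient $\bar\Omega=(\Omega\setminus I)/I$, viewed inside $I^\perp/I$ with the induced symplectic form. The main claim is that distinct classes in $\bar\Omega$ pairwise anticommute. Suppose $\bar u\neq\bar v$ commute. Then $u+v\in\Omega$ by closure, and $u$, $v$, $u+v$ span a commuting triple outside $I$. By definition of $I$, there is some $w\in\Omega$ anticommuting with $u$. I would then try to use closure repeatedly to generate, from $u$, $v$, $w$ and further elements anticommuting with $v$, a $3\times3$ grid with Mermin-square commutation relations, contradicting noncontextuality.

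This step is the main obstacle. The case analysis depends on whether $w$ commutes with $v$. I would first try the case where $w$ anticommutes with $u$ and commutes with $v$. In that case $v+w\in\Omega$, and one seeks a fourth element $z$ anticommuting with $v$ to complete the square. Such a $z$ exists because $v\notin I$. The remaining configurations should reduce to this one by relabeling $v\leftrightarrow u+v$.

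Once pairwise anticommutation is established, choose representatives $a_k$ of the classes in $\bar\Omega$. This gives exactly Eq.~\eqref{Equation:QubitCNCSets}. Set $m$ by $\dim I=n-m$.

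\emph{Maximality.} In $\Z_2^{2m}$, the pairwise anticommuting sets of maximal size have $2m+1$ elements, for example the Majorana-type operators together with their product. Any set with $\xi<2m+1$ can be extended by one more anticommuting element, so it is not inclusion-maximal. Conversely, suppose $\xi=2m+1$ and some larger CNC set $\Omega'\supsetneq\Omega$ exists. Then its own isotropic part $I'$ satisfies $I'\subset I$, since the elements of $I'$ must commute with all of $\Omega$ and the $a_k$ generate $I^\perp/I$. A dimension count then shows that $\Omega'$ would require more than $2m'+1$ pairwise anticommuting classes, which is a contradiction.
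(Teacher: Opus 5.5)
\paragraph{Verdict.} The paper gives no proof of its own here; it imports the result from \cite{RaussendorfZurel2020}. Your plan is the natural one: take $I$ to be the elements of $\Omega$ commuting with all of $\Omega$, pass to $\Omega/I\subset I^\perp/I\simeq\Z_2^{2m}$, exclude commuting distinct nonzero classes by a Mermin square, then count anticommuting sets. Sufficiency and the coset structure are fine. But the decisive step of necessity is left open, and the maximality argument rests on a false claim.

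\paragraph{The Mermin-square step.} The existence claim you offer is too weak. From $v\notin I$ you only get some $z\in\Omega$ with $[z,v]=1$. The nine elements $u,v,u+v$; $z,w,z+w$; $u+z,v+w,u+v+z+w$ lie in $\Omega$ and have Mermin-square commutation relations only if also $[z,u]=[z,w]=0$, i.e.\ only if $(u,w)$ and $(v,z)$ are orthogonal hyperbolic pairs. The missing idea is that closure lets you repair $z$. First arrange $[v,w]=0$; your relabeling $v\leftrightarrow u+v$ achieves this, since exactly one of $v,u+v$ commutes with $w$. Then replace $z$ by $z+w$ if $([z,u],[z,w])=(1,0)$, by $z+u$ if it is $(0,1)$, and by $(z+v+w)+u$ if it is $(1,1)$. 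Each addition is of an element of $\Omega$ that commutes with the current one, so the result stays in $\Omega$, and $[z,v]=1$ is preserved. Only with this repair does the contradiction with noncontextuality follow.

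\paragraph{Maximality, forward direction.} It is false that a pairwise anticommuting set of $\xi<2m+1$ classes always extends by one more anticommuting class. Suppose $\xi$ is odd and $\sum_k\bar a_k=0$ in $I^\perp/I$; for example $\Omega=\{0,XI,ZI,YI\}$ on two qubits, with $I=\{0\}$, $m=2$, $\xi=3$. Then anything anticommuting with $\bar a_1,\dots,\bar a_{\xi-1}$ commutes with $\bar a_\xi$. Such an $\Omega$ is still non-maximal, but the enlargement must come from $I$. The span of the $\bar a_k$ is nondegenerate of dimension $\xi-1<2m$, so its symplectic complement contains some $\bar c\neq0$. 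Then $\bigcup_k\langle a_k,c,I\rangle$ is a strictly larger CNC set; in the example, add $IX$. Your extension works only when $\xi$ is even or $\sum_k\bar a_k\neq0$, since only then are the $\bar a_k$ independent.

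\paragraph{Maximality, converse.} Knowing $I'\subseteq I$ gives only $m'\ge m$, so counting classes yields no contradiction by itself. You first need $I'=I$. This holds because any $x\in I\setminus I'$ and $a_1$ would be distinct, nonzero, commuting classes of $\Omega'/I'$. Then $\Omega'$ is a union of $I$-cosets with at least $2m+2$ pairwise anticommuting nonzero classes, contradicting the bound $2m+1$.
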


To get any noncontextual value assignments $\gamma$ on a CNC set $\Omega\subset\Z_2^{2n}$, we can choose the values of $\gamma$ freely on a basis for the subspace $I$, and on the elements $a_1,\dots,a_\xi$. Then the values assigned to the remaining elements of $\Omega$ can be inferred using Eq.~\eqref{Equation:NoncontextualityCondition}.
This characterization of the multiqubit CNC pairs $(\Omega,\gamma)$ is equivalent to the following characterization of the multiqubit CNC operators.

\begin{Corollary}\label{Corollary:CNCStructure}
    For any CNC set $\Omega\subset\Z_2^{2n}$ and any noncontextual value assignment $\gamma:\Omega\to\Z_2$, the corresponding CNC operator has the form
    \begin{equation}
    A_\Omega^\gamma = g \left( A_{\tilde{\Omega}}^{\tilde{\gamma}} \otimes \ketbra{\sigma}{\sigma} \right)g^\dagger,
    \end{equation}
    where $\ket{\sigma}$ is an $(n-m)$-qubit stabilizer state, $g$ is a $n$-qubit Clifford group element, and the nonzero vectors in the CNC set $\tilde{\Omega}$ pair-wise anticommute.
\end{Corollary}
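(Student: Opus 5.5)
Starting from Theorem~\ref{Theorem:CNCClassification}, any CNC set has the form $\Omega=\bigcup_{k=1}^{\xi}\langle a_k,I\rangle$, where $I$ is isotropic of dimension $n-m$ and the $a_k$ pairwise anticommute and commute with $I$. The plan is to pick a Clifford frame adapted to this structure. Since the $a_k$ commute with $I$, they lie in $I^\perp$, and $I^\perp/I$ is a symplectic space of dimension $2m$.

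\textbf{Step 1: choose the Clifford.} Because the symplectic group acts transitively on suitable bases, we can find a symplectic map $S$ with the following properties:
\begin{itemize}
\item $S$ sends $I$ to the span of the $Z$-type vectors on the last $n-m$ qubits;
\item $S$ sends $I^\perp$ to the span of those vectors together with all vectors supported on the first $m$ qubits.
\end{itemize}
To build $S$, extend a basis of $I$ to a symplectic basis of $\Z_2^{2n}$. Each $a_k$ then decomposes as $a_k=\tilde a_k + i_k$, with $\tilde a_k$ in the first-$m$-qubit part and $i_k\in I$. Since $\langle a_k,I\rangle=\langle \tilde a_k, I\rangle$, we may replace each $a_k$ by $\tilde a_k$ without changing $\Omega$. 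The $\tilde a_k$ still pairwise anticommute, because elements of $I$ commute with everything in $I^\perp$. Let $g$ be a Clifford unitary implementing $S$. Then $g^\dagger T_v g\propto T_{S^{-1}v}$ with phases in $\{\pm1\}$ for Hermitian Paulis, and the noncontextual value assignment $\gamma$ transforms to a noncontextual value assignment $\gamma'$ on $S^{-1}\Omega$.

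\textbf{Step 2: factor the sum.} In the new frame, write $\tilde\Omega=\{0\}\cup\{\tilde a_k\}\subset\Z_2^{2m}$. Then each element of $\Omega$ is a pair $(w,z)$ with $w\in\tilde\Omega$ and $z\in I'$, and for $m\ge1$ this decomposition is unique. Since $w$ and $z$ are supported on disjoint qubits, $T_{(w,z)}=T_w\otimes T_z$ up to the phase convention, and $\beta$ between them vanishes. The noncontextuality condition \eqref{Equation:NoncontextualityCondition} applied to commuting pairs then gives $\gamma'(w+z)=\gamma'(w)+\gamma'(z)$. This lets us factor
\[
\frac{1}{2^n}\sum_{w,z}(-1)^{\gamma'(w)+\gamma'(z)}T_w\otimes T_z=\Big(\frac{1}{2^m}\sum_{w\in\tilde\Omega}(-1)^{\tilde\gamma(w)}T_w\Big)\otimes\Big(\frac{1}{2^{n-m}}\sum_{z\in I'}(-1)^{r(z)}T_z\Big).
\]
The second factor is $\Pi_{I'}^{r}$ from \eqref{Equation:PauliProjector}. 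Since $I'$ is Lagrangian on $n-m$ qubits, this is a rank-one stabilizer state $\ketbra{\sigma}{\sigma}$, with $r=\gamma'|_{I'}$ a valid consistent assignment. The first factor is $A^{\tilde\gamma}_{\tilde\Omega}$, and the nonzero elements of $\tilde\Omega$ pairwise anticommute by construction. Note that $\tilde\Omega$ is trivially closed under inference, since no two of its nonzero elements commute, and $\tilde\gamma$ is automatically noncontextual provided $\tilde\gamma(0)=0$, which follows from \eqref{Equation:NoncontextualityCondition} applied with $u=v=0$.

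\textbf{Main obstacle.} The step I expect to be most delicate is the phase bookkeeping. First, the tensor factorization $T_{(w,z)}=T_w\otimes T_z$ must be checked against the $\tau$ convention in \eqref{Weyl_def_tau}; for $d=2$ this holds because $u_z\cdot u_x$ splits additively across qubits. Second, the Clifford conjugation introduces signs, and these must be absorbed into $\gamma'$ in a way that preserves noncontextuality. I would handle this by defining $\gamma'$ directly through $g^\dagger A_\Omega^\gamma g$ and verifying the noncontextuality relation via the operator identity $T_uT_v=(-1)^{\beta(u,v)}T_{u+v}$, which is conjugation-covariant. The degenerate case $m=n$, where $I=0$, is handled by taking $\ket{\sigma}$ to be trivial.
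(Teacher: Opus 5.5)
Your proposal is correct and follows the argument the paper leaves implicit when it presents this corollary as a direct restatement of Theorem~\ref{Theorem:CNCClassification}: choose a Clifford frame sending $I$ to $Z$-type operators on the last $n-m$ qubits, absorb the $I$-components of the $a_k$, and factor the sum using $\beta(w,z)=0$ across the tensor cut. The only blemish is a direction slip: if $gT_vg^\dagger\propto T_{Sv}$ and $S$ maps $I$ to standard form, then it is $gA_\Omega^\gamma g^\dagger$ (supported on $S\Omega$) that factors, not $g^\dagger A_\Omega^\gamma g$ --- which is harmless, since $g^\dagger$ is also a Clifford element.
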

Since there are no nontrivial constraints imposed by Eq.~\eqref{Equation:NoncontextualityCondition} on the value assignment $\tilde{\gamma}:\tilde{\Omega}\to\Z_2$, the value of $\tilde{\gamma}$ can be chosen freely on the nonzero elements of $\tilde{\Omega}$.

For every number of $n$ of qubits, CNC operators $A_\Omega^\gamma$ corresponding to maximal CNC sets $\Omega$ are vertices of the $\Lambda$ polytopes~\cite[Supplementary Material, Sec.~IV]{ZurelRaussendorf2020}. On $n$ qubits, there are $n$ orbits of CNC-type vertices of $\Lambda$ under the action of the Clifford group. They are labeled by the parameter $m$, which takes integer values between $1$ and $n$ and determines the size and commutation structure of the CNC set via Theorem~\ref{Theorem:CNCClassification}.

\subsubsection{Odd-prime-dimensional qudits}

In contrast to the qubit case, with odd-prime-dimensional qudits there are no state-independent proofs of contextuality among the Pauli operators~\cite{OkayRaussendorf2017,RaussendorfFeldmann2023}, so a set $\Omega\subset\Z_d^{2n}$ being CNC is equivalent to it being closed under inference. The CNC pairs $(\Omega,\gamma)$ have also been characterized for odd-prime dimensional qudits.

\begin{Theorem}[\cite{ZurelHeimendahl2024b}; Theorem~1]
    A set $\Omega\subset\Z_d^{2n}$ is closed under inference if and only if
    \begin{itemize}
        \item[(i)] $\Omega$ is a subspace of $\Z_d^{2n}$ (i.e., $\Omega$ is closed under addition), or
        \item[(ii)] $\Omega$ has the form
        \begin{equation}\label{Equation:QuditCNCSets}
            \Omega=\bigcup\limits_{k=1}^\xi\langle a_k,I\rangle
        \end{equation}
        where $I$ is an $(n-m)$-dimensional isotropic subspace of $\Z_d^{2n}$ for some $m\in\{1,\dots,n\}$, $\{a_k\}_k$ pair-wise do not commute, and all $a_k$ commute with every element of $I$.
    \end{itemize}
\end{Theorem}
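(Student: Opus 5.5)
The theorem has two directions, and I would prove them separately.

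\emph{Sufficiency.} Suppose $\Omega$ has form (i) or (ii); I show it is closed under inference. In case (i) this is immediate, because a subspace is closed under all sums. In case (ii), take $u,v\in\Omega$ with $[u,v]=0$ and write $u=\alpha a_k+i$, $v=\alpha' a_l+i'$ with $i,i'\in I$. Since $I$ is isotropic and every $a_k$ commutes with $I$, we get $[u,v]=\alpha\alpha'[a_k,a_l]$. If $k\neq l$ and both $\alpha,\alpha'$ are nonzero, then $[u,v]\neq0$, because $[a_k,a_l]\neq0$ and $\Z_d$ is a field. So commuting pairs either lie in a common $\langle a_k,I\rangle$, or one of them lies in $I\subseteq\langle a_l,I\rangle$. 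In both cases $u+v$ stays in the same subspace $\langle a_k,I\rangle$, hence in $\Omega$.

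\emph{Necessity.} Let $\Omega$ be closed under inference and not a subspace. The first step is to check that $\Omega$ is a union of cosets through $0$ of isotropic structure. Concretely:
\begin{itemize}
\item For $u\in\Omega$, $[u,u]=0$ gives $2u\in\Omega$, and iterating gives $\langle u\rangle\subseteq\Omega$. (This also gives $0\in\Omega$ if $\Omega$ is nonempty.)
\item Set $I:=\{u\in\Omega : [u,v]=0\ \forall v\in\Omega\}$, the commutant of $\Omega$ inside $\Omega$. It is closed under addition, since its elements commute with each other and with all of $\Omega$. So $I$ is an isotropic subspace, and inference gives $\Omega+I\subseteq\Omega$.
\item For any $a\in\Omega$, every element of $\langle a,I\rangle$ commutes with $a$ and with $I$, so $\langle a,I\rangle\subseteq\Omega$.
\end{itemize}

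The key step is to show that if $a,b\in\Omega\setminus I$ commute, then $\langle a,b,I\rangle\subseteq\Omega$, and moreover $b\in\langle a,I\rangle$. Suppose $b\notin\langle a,I\rangle$. Since $a\notin I$, there is some $c\in\Omega$ with $[a,c]\neq0$. Consider elements such as $a+\lambda b$ and $c+\mu a$, and try to produce a commuting pair whose sum lies in $\Omega$. Closure would then force enough of the span into $\Omega$ that either $\Omega$ becomes a subspace, or the element $b-\lambda a$ (suitably chosen) lands in $I$. Either outcome is a contradiction. Given this, the relation ``$u$ commutes with $v$'' restricted to $\Omega\setminus I$ is an equivalence relation whose classes are the sets $\langle a_k,I\rangle\setminus I$. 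Choosing one representative $a_k$ per class produces form (ii), with the $a_k$ pairwise non-commuting.

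The counting $\dim I=n-m$ with $m\ge1$ then follows from isotropy: $\langle a_k,I\rangle$ is an isotropic subspace, so $\dim I\le n-1$. The main obstacle is the key lemma above. Here I expect a genuine case analysis over a third element $c$, using that $d$ is odd prime: the scalars $\lambda$ can be tuned so that $[a+\lambda b,c]=0$, and then closure is iterated. If that fails, the fallback is to show directly that $\Omega$ containing two commuting, independent, non-central elements forces $\Omega=\langle\Omega\rangle$. That version would go by induction on the number of non-commuting classes.
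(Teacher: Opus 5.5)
\textbf{There is a genuine gap: the key lemma is never proved.}

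Your sufficiency direction is correct. So is the scaffolding of the necessity direction:
\begin{itemize}
\item closure under scalar multiples;
\item the center $I$ is an isotropic subspace with $\Omega+I=\Omega$;
\item $\langle a,I\rangle\subseteq\Omega$ (the precise reason is that $\lambda a\in\Omega$ and $[\lambda a,i]=0$ for $i\in I$);
\item the passage from your key lemma to form (ii) with $\dim I\le n-1$.
\end{itemize}
The gap is the key lemma itself: that two commuting elements of $\Omega\setminus I$, independent modulo $I$, force $\Omega$ to be a subspace. This is the entire content of the necessity direction, and it is only gestured at (``try to produce a commuting pair'', ``either outcome is a contradiction''). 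Your fallback restates the same claim. The induction over commutation classes is circular, because those classes are only defined once the lemma holds. The paper offers no proof to compare against; it imports the theorem from Ref.~\cite{ZurelHeimendahl2024b}.

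\textbf{Why the mechanism you name cannot work as stated.} Choosing $\lambda$ with $[a+\lambda b,c]=0$ just solves a linear equation, and it works verbatim over $\Z_2$. Yet the lemma is false for $d=2$. Take the nine Mermin-square vectors $XI,IX,XX,IZ,ZI,ZZ,XZ,ZX,YY$ together with $0$, as a subset of $\Z_2^4$.
\begin{itemize}
\item It is closed under inference: each element commutes exactly with the four others in its row and column, and each row or column plus $0$ is a subspace.
\item It has trivial center, and it contains the commuting independent pair $XI,IX$.
\item It is neither a subspace ($XI+ZI=YI$ is missing) nor of form (ii).
\end{itemize}
So a correct proof must use the invertibility of $2$ modulo $d$ at an identifiable step.

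\textbf{Where oddness can enter.} One way to locate this step:
\begin{itemize}
\item Pass to $I^\perp/I$. The image of $\Omega$ is again closed under inference and has trivial center, and $\Omega$ is its full preimage.
\item Take an isotropic plane $\langle w,w'\rangle\subseteq\Omega$ and $c,e\in\Omega$ with $[w,c]=0$, $[w',c]=1$, $[w,e]=1$. Put $t=[w',e]$ and $s=[c,e]$.
\item Choose $\lambda\ne0$ with $\lambda s\ne1$, and set $\nu=\lambda/(1-\lambda s)$. Then $x=w'-tw+\lambda e$ and $y=w+\nu c$ lie in $\Omega$ and commute.
\item Next, $x+y$ commutes with $z=-\lambda e+\lambda(1+\nu s)(c-sw)\in\Omega$.
\item This gives $x+y+z=w'+\mu w+2\nu c\in\Omega$ for some $\mu$. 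It produces a new isotropic plane $\langle w,w'+2\nu c\rangle\subseteq\Omega$ only because $2\nu\ne0$.
\end{itemize}
What your proposal is missing is steps of this kind, followed by an argument propagating from $\langle w,w',c\rangle$ to all of $\Omega$.
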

Unlike the qubit case, here the maximum value of $\xi$ in Eq.~\eqref{Equation:QuditCNCSets} is not known~\cite{Chin2005}.

One example of a maximal CNC set is the set $\Omega=\Z_d^{2n}$ of all Pauli operators. For multiple qudits ($n\ge2$), the noncontextuality condition, Eq.~\eqref{Equation:NoncontextualityCondition}, is enough to force the noncontextual value assignments, $\gamma:\Z_d^{2n}\to\Z_d$, to be linear functions~\cite{DelfosseRaussendorf2017}. The corresponding CNC operators, Eq.~\eqref{Equation:CNCOperators}, are exactly the phase space point operators of the Wigner function~\cite{Wootters1987, Gross2006, Gross2008, DelfosseRaussendorf2017}. These operators are vertices of the multi-qudit $\Lambda$ polytopes for every number of qudits~\cite[Theorem~3]{ZurelHeimendahl2024}.

For a single qudit ($n=1$), the situation is different. Since the noncontextuality condition does not impose linearity of the value assignment on all of $\Z_d^{2n}$, but only on commuting pairs of elements in $\Z_d^{2n}$, there exist CNC pairs $(\Z_d^{2n},\gamma)$ where $\gamma$ is non-linear. There are $d^2$ CNC operators corresponding to Wigner function phase space points, but there are another $d^{d+1}-d^2$ CNC operators corresponding to nonlinear value assignments on $\Z_d^{2n}$. For a single-qudit, these are also vertices of $\Lambda$, and in fact, all of the single-qudit $\Lambda$-polytope vertices are CNC operators with $\Omega=\Z_d^{2n}$. There are also CNC pairs $(\Omega,\gamma)$ with non-linear $\gamma$ for multiple qudits, but only when $\Omega\subsetneq\Z_d^{2n}$~\cite{ZurelHeimendahl2024b}. These are not known to be vertices of $\Lambda$ in general.

\section{Absolutely stabilizer states}\label{Section:MuggleMain}

We now discuss our central topic of absolutely stabilizer states, beginning with their formal definition.  We then derive our spectral characterization via a study of the $\Lambda$ polytope.  After separately specializing to qubits and odd-prime qudits, which are structurally different, we discuss the explicit cases of a single qubit, two qubits, and a single qutrit.

\subsection{Definition}

\begin{Definition}
    A state $\rho$ is \emph{absolutely stabilizer} if it remains a convex mixture of pure stabilizer states under arbitrary unitary conjugation,
    \begin{equation}
        U \rho U^\dagger \in \STAB, \quad \forall \, U \in \mathrm{U}(d^n).
    \end{equation}
\end{Definition}

It immediately follows that the set of such states is the intersection of all unitarily transformed stabilizer polytopes,
\begin{equation}\label{Equation:MuggleStates}
    \begin{aligned}
        \ASTAB := \!\! \bigcap\limits_{U\in \,\mathrm{U}(d^n)} \!\! U \,\STAB \, U^\dagger.
    \end{aligned}
\end{equation}
We also call $\ASTAB$ the \emph{muggle set} and absolutely stabilizer states \emph{muggle states}, which comes from the equivalent characterization of such states as those that cannot be transformed into any magic state with any unitary.  The name \emph{absolutely stabilizer} comes from the analogous notion of absolute separability in the theory of entanglement~\cite{KusZyczkowski2001} and in more general resource theories. The affinely shifted version is denoted
\begin{equation}\label{Equation:MuggleStates_Bloch}
\begin{aligned}
    \ASTAB_0  &=  \ASTAB - \frac{1}{d^n} \mathbb I \\
    &= \! \!\!\bigcap\limits_{U\in \mathrm{U}(d^n)} \!\!\! U \, \STAB_0 \, U^\dagger.
\end{aligned}
\end{equation}

Before characterizing the set of absolutely stabilizer states, we comment on a more general problem to illustrate why the problem is non-trivial. The shifted stabilizer polytope $\STAB_0$ lives in the generalized Bloch space $\Herm_0(\mathcal{H})$, which is a real Hilbert space with respect to the Hilbert-Schmidt inner product. Consider a subgroup $G$ of the associated orthogonal group, $G \subset \mathrm{O}(\Herm_0(\mathcal H))$, and denote its action on matrices as $g(\cdot)$. The generalization of  Eq.~\eqref{Equation:MuggleStates_Bloch} becomes
\begin{equation}\label{radial_function}
    \mathcal{O}_G := \bigcap\limits_{g\in G}g(\STAB_0).
\end{equation}
If this group action was transitive on the Hilbert-Schmidt unit sphere, then clearly the intersection $\mathcal O_G$ would be a Hilbert-Schmidt ball. In particular, it would be the inscribed ball of the stabilizer polytope. This happens for the single-qubit case because unitary conjugations are in one-to-one correspondence with Hilbert-Schmidt rotations. But for $d^n > 2$ this is no longer true: unitary conjugations preserve the entire spectrum of a Hermitian operator while Hilbert-Schmidt rotations only preserve the Euclidean norm of the spectrum (i.e.~$\norm{A}_2 = \sqrt{\Tr(A^2)}$). More precisely, the set of unitary conjugations is a subgroup of orthogonal transformations,
\begin{equation}
\begin{aligned}
    \mathrm{Ad}(\mathrm{SU}(d^n)) \subset \mathrm{O}(\Herm_0(\mathcal{H})) \simeq \mathrm{O}(d^{2n}-1),
\end{aligned}
\end{equation}
where $G=\mathrm{Ad}(\mathrm{SU}(d^n)) = \mathrm{Ad}(\mathrm{U}(d^n))$ is the image of the adjoint representation of the unitary group $\mathrm{SU}(d^n)$ over $\Herm_0(\mathcal H) \simeq \mathfrak{su}(d^n)$, seen as the Lie algebra $\mathfrak{su}(d^n)$.\footnote{We are using the physicist's version, $i \, \mathfrak{su}(d^n)$, without loss of generality.} Hence conjugate unitary action is not transitive on the Hilbert-Schmidt sphere, and so we cannot expect $\mathcal O_{\mathrm{Ad}(\mathrm{SU}(d^n))}$ to be a rescaled version thereof. In fact, since we are focusing on a subgroup of the rotation group we expect the muggle set to be a superset of the stabilizer inball:
\begin{equation}
    \mathcal{O}_{\mathrm{O}(\Herm_0(\mathcal{H}))} \subset \mathcal{O}_{\mathrm{Ad}(\mathrm{SU}(d^n))} = \ASTAB_0.
\end{equation}

A useful (but slightly dangerous) visual analogy for seeing how lower symmetry implies a larger invariant set can be made for a single qubit by considering the intersection of all rotations of $\STAB_0$ vs just the subgroup of $z$-axis rotations. The latter is clearly a superset of the former.

\subsection{Spectral characterization of absolutely stabilizer states}

The set of absolutely stabilizer states is by definition unitarily-invariant. Membership of $\ASTAB$ is therefore determined only by the spectrum of a given state.  In this section we show that $\ASTAB$ can be characterized as a polytope in the space of state spectra. We call such convex sets \emph{absolutely stabilizer spectral polytopes}, or \emph{muggle polytopes}. This characterization requires the following lemma relating the facets of the stabilizer polytope to the vertices of the $\Lambda$ polytope.

\begin{Lemma}\label{Lemma:STABHRepresentation}
    For any $X\in\Herm_1(\mathcal{H})$, $X\in\STAB$ if and only if $\Tr(XY)\ge0$ for all vertices $Y\in\mathrm{vert}(\Lambda)$.
\end{Lemma}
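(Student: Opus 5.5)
The plan is to use polar duality, Eq.~\eqref{lambda_polar_dual_dnSTAB}, together with the bipolar theorem, and then reduce from all of $\Lambda$ to its vertices by convexity.

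First, the forward direction. Suppose $X\in\STAB$, so $X=\sum_i p_i\ketbra{\sigma_i}{\sigma_i}$ with $p_i\ge0$ and $\sum_i p_i=1$. By the definition of $\Lambda$ in Eq.~\eqref{Equation:LambdaDefinition}, every $Y\in\Lambda$, and in particular every vertex, satisfies $\Tr(Y\ketbra{\sigma_i}{\sigma_i})\ge0$. Linearity then gives $\Tr(XY)\ge0$.

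For the converse, I first extend the hypothesis from vertices to all of $\Lambda$. Since $\Lambda$ is a polytope it is bounded, and hence equals the convex hull of its vertices. Boundedness follows because the stabilizer states span $\Herm(\mathcal{H})$: the pure stabilizer states include the eigenprojectors of every Pauli, so $\STAB_0$ contains a neighbourhood of $0$ in $\Herm_0$, and the polar of such a set is bounded. Since $\Tr(X\,\cdot)$ is linear, $\Tr(XY)\ge0$ on the vertices implies $\Tr(XY)\ge0$ for all $Y\in\Lambda$.

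Next, translate to the traceless picture. Write $X=\frac{1}{d^n}\mathbb{I}+X_0$ and $Y=\frac{1}{d^n}\mathbb{I}+Y_0$, with $X_0\in\Herm_0$ and $Y_0\in\Lambda_0$. Then
\begin{equation}
\Tr(XY)=\frac{1}{d^n}+\Tr(X_0Y_0),
\end{equation}
so the condition reads $\Tr\!\big((-d^nX_0)Y_0\big)\le1$ for all $Y_0\in\Lambda_0$. In other words, $-d^nX_0\in\Lambda_0^\circ$. By Eq.~\eqref{lambda_polar_dual_dnSTAB}, $\Lambda_0=(-d^n\STAB_0)^\circ$. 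The set $-d^n\STAB_0$ is closed, convex, and contains $0$, because $\frac{1}{d^n}\mathbb{I}\in\STAB$ as the uniform mixture over any stabilizer basis. The bipolar theorem within $\Herm_0(\mathcal{H})$ therefore gives $\Lambda_0^\circ=-d^n\STAB_0$. Hence $X_0\in\STAB_0$, that is, $X\in\STAB$.

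The main point needing care is justifying the bipolar step and the boundedness of $\Lambda$. Both reduce to $\STAB_0$ being a full-dimensional compact convex set in $\Herm_0$ with $0$ in its interior. I expect this to follow routinely from the spanning property of stabilizer projectors noted above.
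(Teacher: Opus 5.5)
Your proposal is correct and follows essentially the paper's argument: shift to the traceless picture via $\Tr(XY)=\frac{1}{d^n}+\Tr(X_0Y_0)$ and apply the bipolar theorem to $\Lambda_0=(-d^n\STAB_0)^\circ$. You additionally spell out the elementary forward direction and the passage from vertices to all of $\Lambda$ via boundedness, which the paper uses implicitly by treating $\Lambda$ as a polytope. One small correction: for $n\ge2$ the eigenprojectors of a Pauli have rank $d^{n-1}$, so they are not pure stabilizer states. Their normalizations are, however, uniform mixtures of pure stabilizer states, and that is all your boundedness argument needs.
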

\begin{proof}[Proof of Lemma~\ref{Lemma:STABHRepresentation}]
    For $X,Y\in\Herm_1(\mathcal{H})$, we write
    \begin{equation}
        X=\frac{1}{d^n}\mathbb{I}+X_0 \quad\text{and}\quad Y=\frac{1}{d^n}\mathbb{I}+Y_0,
    \end{equation}
    where $X_0,Y_0\in\Herm_0(\mathcal{H})$. Then the Hilbert-Schmidt inner product of $X$ and $Y$ is
    \begin{equation}
        \Tr(XY)=\frac{1}{d^n}+\Tr(X_0Y_0).
    \end{equation}
    Therefore, we have
    \begin{equation}
    \begin{aligned}
        &\{X\in\Herm_1(\mathcal{H})\;|\;\Tr(XY)\ge0\;\forall Y\in\mathrm{vert}(\Lambda)\}\\
        =& \frac{1}{d^n}\mathbb{I} + \left\{X_0\in\Herm_0(\mathcal{H}) \bigg| \begin{array}{r} -d^n\Tr(X_0Y_0)\le1 \\ \forall \, Y_0\in\mathrm{vert}(\Lambda_0)\end{array}\right\}.\label{Equation:STABDoubleDual}
    \end{aligned}
    \end{equation}
    That this set is equal to $\STAB$ follows from the polar duality between $\Lambda_0$ and $-d^n\STAB_0$, or equivalently from the polar duality between $\STAB_0$ and $-d^n\Lambda_0$. More explicitly, the second term on the right hand side of Eq.~\eqref{Equation:STABDoubleDual} is the polar dual of the polar dual of $\STAB_0$, and so with the fact that $\STAB_0$ contains the origin (since $\STAB$ contains the maximally mixed state), by a standard result in polytope theory this is equal to $\STAB$~\cite[Theorem~2.11]{Ziegler1995}.
\end{proof}

We now arrive at the first result of this section: a characterization of the spectra of absolutely stabilizer states.

\begin{Theorem}\label{Theorem:FirstSpectralCharacterization}
    For any number of qudits $n$ of any prime Hilbert space dimension $d$, an $n$-qudit state $\rho$ is absolutely stabilizer if and only if
    \begin{equation}\label{Equation:SpectralMuggleInequalities}
        \sum_{k=1}^{d^n}\lambda_k^{\uparrow}(\rho)\lambda_k^{\downarrow}(A)\ge0\qquad\forall A\in\mathrm{vert}(\Lambda),
    \end{equation}
    where $\lambda^\uparrow(X)$ (resp.\ $\lambda^\downarrow(X)$) denotes the vector of eigenvalues of the operator $X$ in non-decreasing (resp.\ non-increasing) order.
\end{Theorem}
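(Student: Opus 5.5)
The plan is to combine Lemma~\ref{Lemma:STABHRepresentation} with the classical fact (von Neumann / Richter trace inequality) that for Hermitian $X,Y$,
\begin{equation}
    \min_{U\in\mathrm{U}(d^n)}\Tr(UXU^\dagger Y)=\sum_{k=1}^{d^n}\lambda_k^{\uparrow}(X)\lambda_k^{\downarrow}(Y),
\end{equation}
with the minimum attained by a unitary that aligns the eigenbasis of $UXU^\dagger$ with that of $Y$, pairing eigenvalues in opposite order. I will derive the theorem as a direct chain of equivalences.

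First, by definition $\rho\in\ASTAB$ iff $U\rho U^\dagger\in\STAB$ for all $U$. Since $U\rho U^\dagger\in\Herm_1(\mathcal{H})$, Lemma~\ref{Lemma:STABHRepresentation} shows this is equivalent to $\Tr(U\rho U^\dagger A)\ge0$ for all $U\in\mathrm{U}(d^n)$ and all $A\in\mathrm{vert}(\Lambda)$. Swapping the order of the two universal quantifiers, for each fixed vertex $A$ the condition is $\min_U\Tr(U\rho U^\dagger A)\ge0$. The minimum exists because $\mathrm{U}(d^n)$ is compact and the map is continuous. By the displayed identity, this minimum equals $\sum_k\lambda_k^\uparrow(\rho)\lambda_k^\downarrow(A)$, which gives Eq.~\eqref{Equation:SpectralMuggleInequalities}.

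The only substantive step is the trace inequality, which I will prove or cite. To prove it, write $\rho=\sum_k\lambda_k^\uparrow(\rho)\ketbra{e_k}{e_k}$ and $A=\sum_j\lambda_j^\downarrow(A)\ketbra{f_j}{f_j}$. Then
\begin{equation}
    \Tr(U\rho U^\dagger A)=\sum_{k,j}\lambda_k^\uparrow(\rho)\lambda_j^\downarrow(A)\,\abs{\bra{f_j}U\ket{e_k}}^2 .
\end{equation}
The matrix $\abs{\bra{f_j}U\ket{e_k}}^2$ is doubly stochastic, so the expression is a linear function on the Birkhoff polytope. Its minimum is therefore attained at a permutation matrix. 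By the rearrangement inequality, the minimizing permutation pairs increasing with decreasing eigenvalues. Choosing $U\ket{e_k}=\ket{f_k}$ realizes this permutation, so the minimum is achieved. I expect this step to be the main (though standard) obstacle; everything else is bookkeeping.

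Finally, I note that $\Lambda$ is a polytope, so $\mathrm{vert}(\Lambda)$ is finite and the characterization consists of finitely many linear inequalities in $\lambda^\uparrow(\rho)$. This finiteness is what makes the result a polytope in the simplex of spectra.
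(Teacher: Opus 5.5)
Your proposal is correct and follows the paper's proof essentially step for step. Lemma~\ref{Lemma:STABHRepresentation} reduces membership of every $U\rho U^\dagger$ in $\STAB$ to the vertex inequalities, and the minimum over unitaries is then evaluated by the Ky Fan/von Neumann trace identity (Corollary~\ref{Corollary:KyFan}). The only difference is that you prove that identity yourself, using the doubly stochastic matrix $|\langle f_j|U|e_k\rangle|^2$, Birkhoff's theorem and the rearrangement inequality, whereas the paper cites it; your argument for it is sound.
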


The proof relies on the following corollary of a theorem of Ky Fan~\cite{KyFan1949,KyFan1950}.
\begin{Corollary}[Ref.~\cite{MarshallArnold2011}, Proposition~20.A.2.a]\label{Corollary:KyFan}
    If $G$ is an $n\times n$ Hermitian matrix with eigenvalues $\lambda_1\ge\cdots\ge\lambda_n$, and $H$ is an $n\times n$ Hermitian matrix with eigenvalues $\mu_1\ge\cdots\ge\mu_n$, then
    \begin{align}
        \max_{U\in\mathrm{U}(n)}\Tr(UGU^\dagger H) =& \sum\limits_{i=1}^n\lambda_i\mu_i,\quad\text{and}\\
        \min_{U\in\mathrm{U}(n)}\Tr(UGU^\dagger H) =& \sum\limits_{i=1}^n\lambda_i\mu_{n-i+1}.
    \end{align}
\end{Corollary}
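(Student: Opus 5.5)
The plan is to reduce the trace to a linear functional on doubly stochastic matrices and then optimize over the Birkhoff polytope. First diagonalize both matrices: write $G=V_G\,\mathrm{diag}(\lambda)\,V_G^\dagger$ and $H=V_H\,\mathrm{diag}(\mu)\,V_H^\dagger$ with $V_G,V_H\in\mathrm{U}(n)$. For any $U\in\mathrm{U}(n)$, set $W:=V_H^\dagger U V_G$, which ranges over all of $\mathrm{U}(n)$ as $U$ does. A direct expansion gives
\begin{equation*}
    \Tr(UGU^\dagger H)=\Tr\bigl(W\,\mathrm{diag}(\lambda)\,W^\dagger\,\mathrm{diag}(\mu)\bigr)=\sum_{i,j=1}^n \mu_i\,|W_{ij}|^2\,\lambda_j .
\end{equation*}
Define $S_{ij}:=|W_{ij}|^2$. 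Unitarity of $W$ forces all rows and columns of $S$ to sum to one, so $S$ is doubly stochastic. The quantity to optimize is therefore $f(S)=\mu^{T}S\lambda$, evaluated on the set of unistochastic matrices.

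The key step is to enlarge the domain to the full Birkhoff polytope of doubly stochastic matrices. Since $f$ is linear, its maximum and minimum over this polytope are attained at vertices. By the Birkhoff--von Neumann theorem, these vertices are the permutation matrices $P_\pi$. Every permutation matrix is itself unitary, and taking $W=P_\pi$ gives $S=P_\pi$. Hence the optimum over the Birkhoff polytope is actually achieved by some $U$. So the max and min over $\mathrm{U}(n)$ equal
\begin{equation*}
    \max_\pi\sum_i\mu_i\lambda_{\pi(i)} \quad\text{and}\quad \min_\pi\sum_i\mu_i\lambda_{\pi(i)},
\end{equation*}
respectively. In particular, the extrema exist as stated.

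It remains to identify the optimal permutations, which is the classical rearrangement inequality. Suppose $\pi$ has an inversion, meaning $i<j$ with $\lambda_{\pi(i)}<\lambda_{\pi(j)}$. Swapping $\pi(i)$ and $\pi(j)$ changes the sum by
\begin{equation*}
    (\mu_i-\mu_j)\bigl(\lambda_{\pi(j)}-\lambda_{\pi(i)}\bigr)\ge0 .
\end{equation*}
Repeating such swaps reaches the identity pairing without ever decreasing the sum, so the maximum is $\sum_i\lambda_i\mu_i$. For the minimum, apply the maximum statement to $-H$, whose eigenvalues in non-increasing order are $-\mu_n\ge\cdots\ge-\mu_1$. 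This yields $\min_U\Tr(UGU^\dagger H)=-\max_U\Tr(UGU^\dagger(-H))=\sum_i\lambda_i\mu_{n-i+1}$.

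The only step needing real care is the passage from unistochastic matrices to the full Birkhoff polytope, since not every doubly stochastic matrix is unistochastic. This causes no difficulty here for two reasons. The unistochastic set lies inside the Birkhoff polytope, so optimizing over the larger set gives valid bounds. Those bounds are then attained at permutation matrices, which are unistochastic.
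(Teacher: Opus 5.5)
Your argument is correct, and it takes a different route from the one the paper relies on. The paper gives no proof of its own: it quotes the statement from Marshall--Olkin--Arnold as a corollary of Ky Fan's maximum principle, which says that the maximum of $\Tr(PH)$ over rank-$k$ projectors $P$ is $\sum_{i\le k}\mu_i$.

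Along that route one writes $UGU^\dagger=\lambda_n\mathbb{I}+\sum_{k=1}^{n-1}(\lambda_k-\lambda_{k+1})P_k$, where $P_k$ projects onto the top-$k$ eigenvectors of $UGU^\dagger$. Each $\Tr(P_kH)$ is then bounded by Ky Fan, using $\lambda_k-\lambda_{k+1}\ge0$, and equality comes from aligning the two eigenbases in matching order. This is the same summation-by-parts mechanism the paper uses in the proof of Theorem~\ref{Theorem:MainSpectralCharacterization}. It keeps everything in terms of partial sums of ordered eigenvalues, i.e.\ majorization, which is the language the rest of the paper works in.

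Your route linearizes over the unistochastic matrix $S_{ij}=|W_{ij}|^2$, relaxes to the Birkhoff polytope, and finishes with the rearrangement inequality. It is self-contained given Birkhoff--von Neumann, with no variational characterization of eigenvalue partial sums assumed. It also gives attainment of both extrema directly via $U=V_HP_\pi V_G^\dagger$, and handles max and min in one stroke. The majorization content is still present, just implicit: $S\lambda$ is the diagonal of $UGU^\dagger$ in the eigenbasis of $H$, and $S\lambda\preceq\lambda$ is Schur's half of the Schur--Horn theorem.

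One cosmetic point: with repeated eigenvalues, your swapping ends at some $\pi$ with $\lambda_{\pi(i)}=\lambda_i$ for all $i$ rather than at the identity. Termination holds because each swap of an inverted pair strictly lowers the number of inverted pairs. Neither point affects the result.
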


\begin{proof}[Proof of Theorem~\ref{Theorem:FirstSpectralCharacterization}]
    Using Lemma~\ref{Lemma:STABHRepresentation} together with the definition of $\ASTAB$, for a state $\rho$ we know that $\rho \in \ASTAB$ if and only if
    \begin{equation}\label{Equation:MuggleHalfspaces}
        \Tr(U\rho U^\dagger A)\ge0,\quad\forall A\in\mathrm{vert}(\Lambda),\;\forall U\in\mathrm{U}(d^n).
    \end{equation}
    For any given vertex $A$ of $\Lambda$, the condition Eq~\eqref{Equation:MuggleHalfspaces} must hold for all unitaries $U\in\mathrm{U}(d^n)$. This is equivalent to the condition,
    \begin{equation}\label{Equation:MuggleHalfspaces2}
        \min_{U\in\mathrm{U}(d^n)}\Tr(U\rho U^\dagger A)\ge0,\quad\forall A\in\mathrm{vert}(\Lambda).
    \end{equation}
    Then the theorem follows from Corollary~\ref{Corollary:KyFan} with $G=\rho$ a candidate absolutely stabilizer state and $H=A$ a vertex of $\Lambda$.
\end{proof}

The conclusion of Theorem~\ref{Theorem:FirstSpectralCharacterization} is that the spectra of absolutely stabilizer states form a polytope contained in the unit simplex of possible state eigenvalues.  The set of all absolutely stabilizer states is then the unitary group orbit of the diagonal matrices corresponding to these allowed spectra. This characterization however contains some redundancy. Namely, not all the vertices of $\Lambda$ are needed. To determine which vertices of $\Lambda$ contribute, we can use the theory of majorization~\cite{MarshallArnold2011}. We make the following definition.

\begin{Definition}\label{Definition:SpectrallyGeneratingVertices}
    A subset $\mathcal{V}_{\mathrm{sGen}}\subseteq \mathrm{vert}(\Lambda)$ is \emph{spectrally generating} if for every vertex $A\in\mathrm{vert}(\Lambda)$ there exists a probability distribution
    $p(B\,|\,A)$ on $\mathcal{V}_{\mathrm{sGen}}$ such that
    \begin{equation}\label{Equation:SpectralGeneratingCondition}
        \lambda^\downarrow(A)\preceq\sum_{B\in\mathcal{V}_{\mathrm{sGen}}}p(B\,|\,A)\,\lambda^\downarrow(B).
    \end{equation}
\end{Definition}

That is, a subset of vertices is here said to be spectrally generating if the spectrum of any other vertex is majorized by a convex combination of the spectra of the given subset.  To determine the allowed spectra of absolutely stabilizer states, we need only the constraints of the form Eq.~\eqref{Equation:SpectralMuggleInequalities} from a spectrally generating set of vertices.

\begin{Theorem}\label{Theorem:MainSpectralCharacterization}
    Let $\mathcal{V}_{\mathrm{sGen}}\subseteq\mathrm{vert}(\Lambda)$ be spectrally generating in the sense of Definition~\ref{Definition:SpectrallyGeneratingVertices}. Then a state $\rho$ is absolutely stabilizer if and only if
    \begin{equation}\label{Equation:ReducedMuggleInequalities}
        \sum_{k=1}^{d^n}\lambda_k^\uparrow(\rho)\lambda_k^\downarrow(A) \ge 0 \qquad \forall A\in\mathcal{V}_{\mathrm{sGen}}.
    \end{equation}
\end{Theorem}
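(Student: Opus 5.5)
By Theorem~\ref{Theorem:FirstSpectralCharacterization}, a state $\rho$ is absolutely stabilizer if and only if $f_\rho(A)\ge0$ for every $A\in\mathrm{vert}(\Lambda)$, where
\begin{equation*}
    f_\rho(A):=\sum_k\lambda_k^\uparrow(\rho)\lambda_k^\downarrow(A).
\end{equation*}
Since $\mathcal{V}_{\mathrm{sGen}}\subseteq\mathrm{vert}(\Lambda)$, the forward direction is immediate. For the converse, I assume Eq.~\eqref{Equation:ReducedMuggleInequalities} holds and fix an arbitrary vertex $A$. The aim is to show $f_\rho(A)\ge\sum_B p(B\,|\,A)f_\rho(B)\ge0$, using the distribution $p(\cdot\,|\,A)$ supplied by Definition~\ref{Definition:SpectrallyGeneratingVertices}.

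Write $x:=\lambda^\uparrow(\rho)$ and consider the linear functional $\phi(y):=\sum_k x_k y_k$ on $\mathbb{R}^{d^n}$. It satisfies $f_\rho(B)=\phi(\lambda^\downarrow(B))$. Set $z:=\sum_B p(B\,|\,A)\lambda^\downarrow(B)$. Each $\lambda^\downarrow(B)$ is non-increasing, so their convex combination $z$ is also non-increasing, i.e.\ $z=z^\downarrow$. By linearity,
\begin{equation*}
    \phi(z)=\sum_B p(B\,|\,A)\,f_\rho(B)\ge0 .
\end{equation*}
It therefore suffices to show $\phi(\lambda^\downarrow(A))\ge\phi(z)$ whenever $\lambda^\downarrow(A)\preceq z$, with both vectors sorted non-increasingly. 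The traces agree, since every vertex of $\Lambda$ has unit trace; this also matches the equal-sum requirement in the definition of majorization.

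The key step is a monotonicity statement: for $x$ non-decreasing, the map $y\mapsto\sum_k x_k y_k^\downarrow$ is Schur-concave. I would prove it by Abel summation. Let $S_j(y):=\sum_{k\le j}y_k^\downarrow$, so that $S_{d^n}$ is the total sum. Then
\begin{equation*}
    \sum_k x_k y^\downarrow_k \;=\; x_{d^n}\,S_{d^n}(y)\;-\;\sum_{j=1}^{d^n-1}(x_{j+1}-x_j)\,S_j(y).
\end{equation*}
The increments $x_{j+1}-x_j$ are nonnegative. Majorization $\lambda^\downarrow(A)\preceq z$ means $S_j(\lambda(A))\le S_j(z)$ for all $j$, with equality at $j=d^n$. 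Substituting these into the identity gives $\phi(\lambda^\downarrow(A))\ge\phi(z)\ge0$. This holds for every vertex $A$, so Theorem~\ref{Theorem:FirstSpectralCharacterization} yields $\rho\in\ASTAB$.

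The only delicate point is this Abel summation step, specifically getting the direction of the majorization inequality right. Pairing ascending $\rho$-eigenvalues with descending $A$-eigenvalues is exactly what makes the functional Schur-concave in the $A$-spectrum, and hence makes the ``more spread out'' spectra in $\mathcal{V}_{\mathrm{sGen}}$ the binding constraints. I would also state explicitly that the sum of sorted vectors stays sorted, because that is what justifies applying $\phi$ directly to $z$ without re-sorting. An alternative is to invoke the Hardy--Littlewood--P\'olya characterization $\lambda(A)=Dz$ with $D$ doubly stochastic, combined with the rearrangement inequality, but the partial-sum argument is more self-contained.
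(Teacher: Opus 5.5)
Your proposal is correct and follows essentially the same route as the paper: reduce to Theorem~\ref{Theorem:FirstSpectralCharacterization}, then use summation by parts. The nonnegative increments of $\lambda^\uparrow(\rho)$ and the partial-sum inequalities from majorization give $f_\rho(A)\ge\sum_B p(B\,|\,A)f_\rho(B)\ge0$. You also state explicitly two points the paper leaves implicit: a convex combination of non-increasing vectors stays sorted, and every vertex of $\Lambda$ has unit trace.
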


\begin{proof}[Proof of Theorem~\ref{Theorem:MainSpectralCharacterization}]
    Fix a spectrally generating set $\mathcal{V}_{\mathrm{sGen}}\subseteq\mathrm{vert}(\Lambda)$. We want to show that the set of states defined by Eq.~\eqref{Equation:ReducedMuggleInequalities} is equal to the set of states defined by Eq.~\eqref{Equation:SpectralMuggleInequalities}. That Eq.~\eqref{Equation:SpectralMuggleInequalities} implies Eq.~\eqref{Equation:ReducedMuggleInequalities} for any state $\rho$ follows immediately from the inclusion $\mathcal{V}_{\mathrm{sGen}}\subseteq\mathrm{vert}(\Lambda)$.

    For the opposite direction, suppose a state $\rho$ satisfies Eq.~\eqref{Equation:ReducedMuggleInequalities}. Consider an arbitrary vertex $A$ of $\Lambda$. By assumption, there exists a probability distribution $p(B\,|\,A)$ on $\mathcal{V}_{\mathrm{sGen}}$ such that
    \begin{equation}
        \lambda^\downarrow(A) \preceq \sum\limits_{B\in\mathcal{V}_{\mathrm{sGen}}}p(B\,|\,A)\lambda^\downarrow(B).
    \end{equation}
    Let $\mu^\downarrow=\sum_B p(B\,|\,A)\lambda^\downarrow(B)$. Since $\rho$ satisfies Eq.~\eqref{Equation:ReducedMuggleInequalities}, we have
    \begin{equation}
        \sum\limits_{k=1}^{d^n}\lambda_k^\uparrow(\rho)\mu_k^\downarrow=\sum\limits_{B\in\mathcal{V}_{\mathrm{sGen}}}p(B\,|\,A)\sum\limits_{k=1}^{d^n}\lambda_k^\uparrow(\rho)\lambda_k^\downarrow(B)\ge0.
    \end{equation}

    Then, using summation by parts,
    \begin{widetext}
    \begin{align}
        \sum\limits_{k=1}^{d^n}\lambda_k^\uparrow(\rho)\lambda_k^\downarrow(A) =& \lambda_{d^n}^\uparrow(\rho)\underbrace{\left(\sum\limits_{\ell=1}^{d^n}\lambda_\ell^\downarrow(A)\right)}_{=1}-\sum\limits_{k=1}^{d^n-1}\underbrace{\left(\sum\limits_{\ell=1}^k\lambda_\ell^\downarrow(A)\right)}_{\le\sum_{\ell=1}^k\mu_\ell^\downarrow}\underbrace{\left(\lambda_{k+1}^\uparrow(\rho)-\lambda_k^\uparrow(\rho)\right)}_{\ge0}\\
        \ge& \lambda_{d^n}^\uparrow(\rho)\left(\sum\limits_{\ell=1}^{d^n}\mu_\ell^\downarrow\right)-\sum\limits_{k=1}^{d^n-1}\left(\sum\limits_{\ell=1}^k\mu_\ell^\downarrow\right)\left(\lambda_{k+1}^\uparrow(\rho)-\lambda_k^\uparrow(\rho)\right) = \sum\limits_{k=1}^{d^n}\lambda_k^\uparrow(\rho)\mu_k^\downarrow\ge0.
    \end{align}
    \end{widetext} 
    Here the first line is the summation by parts formula, the second line follows from the assumption $\lambda^\downarrow(A)\preceq\mu^\downarrow$ together with the relations indicated in the underbraces, and finally we apply the summation by parts formula in reverse. The last inequality was shown above. Therefore, for any state $\rho$, Eq.~\eqref{Equation:SpectralMuggleInequalities} follows from Eq.~\eqref{Equation:ReducedMuggleInequalities}.
\end{proof}

The last piece of the characterization of the set of absolutely stabilizer states is a characterization of a spectrally generating vertices $\mathcal{V}_{sGen}$. Since the spectra of the vertices of $\Lambda$ are different for qubits than they are for odd-prime-dimensional qudits, here we split into two cases.

\subsubsection{Qubits}

The spectrum of a CNC-type vertex of $\Lambda$ is determined by the parameter $m$ described in Theorem~\ref{Theorem:CNCClassification}.  This is described by the following lemma.

\begin{Lemma}\label{Lemma:CNCEigenvaluesQubits}
    For any number of qubits $n$ and any $m\in\{1,\dots,n\}$, every $n$-qubit CNC operator $A_\Omega^\gamma$ corresponding to a maximal CNC set $\Omega$ of type $m$ has eigenvalues
    \begin{align}
        \begin{cases}
            \frac{1+\sqrt{2m+1}}{2^m} & \text{with multiplicity $2^{m-1}$},\\
            \frac{1-\sqrt{2m+1}}{2^m} & \text{with multiplicity $2^{m-1}$},\\
            0 & \text{with multiplicity $2^n-2^m$}.
        \end{cases}
    \end{align}
\end{Lemma}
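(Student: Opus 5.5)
We reduce to a canonical form via Corollary~\ref{Corollary:CNCStructure}. Any CNC operator $A_\Omega^\gamma$ with $\Omega$ maximal of type $m$ is Clifford-conjugate to $A_{\tilde\Omega}^{\tilde\gamma}\otimes\ketbra{\sigma}{\sigma}$. Here $\ket{\sigma}$ is an $(n-m)$-qubit stabilizer state, and $\tilde\Omega\subset\Z_2^{2m}$ consists of $0$ together with $\xi=2m+1$ pairwise anticommuting nonzero vectors $\tilde a_1,\dots,\tilde a_{2m+1}$. Conjugation by $g$ preserves the spectrum, so the spectrum of $A_\Omega^\gamma$ is that of $A_{\tilde\Omega}^{\tilde\gamma}$ (an operator on $2^m$ dimensions) together with $2^n-2^m$ extra zeros from the rank-one projector $\ketbra{\sigma}{\sigma}$. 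It therefore suffices to show that $A_{\tilde\Omega}^{\tilde\gamma}$ has eigenvalues $(1\pm\sqrt{2m+1})/2^m$, each with multiplicity $2^{m-1}$.

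To fix normalizations, note that $A_\Omega^\gamma=\frac{1}{2^n}\sum_{v\in\Omega}(-1)^{\gamma(v)}T_v$ with $\Omega=\bigcup_k\langle a_k,I\rangle$ factorizes as a sum over $I$ (giving the projector onto $\ket\sigma$, normalized by $2^{n-m}$) times a sum over $\tilde\Omega$. The $m$-qubit factor is then
\begin{equation}
A_{\tilde\Omega}^{\tilde\gamma}=\frac{1}{2^m}\Big(\mathbb{I}+\sum_{k=1}^{2m+1}s_kP_k\Big),
\end{equation}
where $P_k=T_{\tilde a_k}$ are Hermitian $m$-qubit Paulis that pairwise anticommute and $s_k=(-1)^{\tilde\gamma(\tilde a_k)}\in\{\pm1\}$. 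Since the factorization is already given by the Corollary, I would only check that the normalization agrees with $\Tr A=1$.

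Set $M=\sum_k s_kP_k$. Each $P_k^2=\mathbb{I}$ and the cross terms cancel pairwise by anticommutation, so $M^2=(2m+1)\mathbb{I}$. Hence the eigenvalues of $M$ lie in $\{\pm\sqrt{2m+1}\}$. Each $P_k$ is a non-identity Pauli and so is traceless, which gives $\Tr M=0$. The two eigenvalues therefore occur with equal multiplicity $2^{m-1}$ on the $2^m$-dimensional space. This yields eigenvalues $(1\pm\sqrt{2m+1})/2^m$ for $A_{\tilde\Omega}^{\tilde\gamma}$, each with multiplicity $2^{m-1}$. Tensoring with the rank-one $\ketbra{\sigma}{\sigma}$ adds the $2^n-2^m$ zeros, which proves the lemma.

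The main obstacle is bookkeeping. We must confirm that maximality ($\xi=2m+1$, by Theorem~\ref{Theorem:CNCClassification}) transfers to $2m+1$ pairwise anticommuting nonzero elements in the reduced set $\tilde\Omega$. We must also confirm that the phase conventions make each $T_{\tilde a_k}$ Hermitian and square to $\mathbb{I}$. For qubits $\tau=-i$, so $T_u=i^{u_z\cdot u_x}Z^{u_z}X^{u_x}$ is Hermitian and the signs $(-1)^{\gamma}$ are real. Finally, if one prefers not to rely on the full tensor-product form, the same argument works directly on $n$ qubits. Write $A_\Omega^\gamma=\Pi_I^{r}\cdot\frac{1}{2^m}\big(\mathbb{I}+\sum_k s_kT_{a_k}\big)$, where $\Pi_I^r$ is a rank-$2^m$ projector commuting with every $T_{a_k}$, and run the same squaring and trace argument inside the range of $\Pi_I^r$.
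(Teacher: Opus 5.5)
Your argument is correct and is essentially the paper's proof. You reduce via Corollary~\ref{Corollary:CNCStructure} to $A_{\tilde\Omega}^{\tilde\gamma}\otimes\ketbra{\sigma}{\sigma}$, which contributes $2^n-2^m$ zeros, and then use that $2^mA_{\tilde\Omega}^{\tilde\gamma}-\mathbb{I}$ squares to $(2m+1)\mathbb{I}$ and is traceless, which forces eigenvalues $\pm\sqrt{2m+1}$ with equal multiplicity. One harmless slip: the paper's convention $\tau=(-1)^de^{i\pi/d}$ gives $\tau=i$, not $-i$, for qubits, but the $T_u$ are Hermitian with either sign, so nothing in your argument changes.
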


\begin{proof}[Proof of Lemma~\ref{Lemma:CNCEigenvaluesQubits}]
    A CNC operator $A_\Omega^\gamma$ of type $m$ has the form
    \begin{equation}
        A_\Omega^\gamma=g(A_{\tilde{\Omega}}^{\tilde{\gamma}}\otimes\ketbra{\sigma}{\sigma})g^\dagger,
    \end{equation}
    where $g$ is an $n$-qubit Clifford group element, $\ket{\sigma}$ is an $(n-m)$-qubit stabilizer state, and $A_{\tilde{\Omega}}^{\tilde{\gamma}}$ is a CNC operator such that all Pauli operators labeled by $\Omega\setminus\{0\}$ pair-wise anti-commute. Conjugation by unitary operators do not change the spectrum, so it suffices to characterize CNC operators of the form $A_{\tilde{\Omega}}^{\tilde{\gamma}}\otimes\ketbra{\sigma}{\sigma}$. The eigenvalues are all products of the eigenvalues of $A_{\tilde{\Omega}}^{\tilde{\gamma}}$ and the eigenvalues of $\ketbra{\sigma}{\sigma}$. But $\ketbra{\sigma}{\sigma}$ is a projector onto a pure $(n-m)$-qubit state, so it has eigenvalue $1$ with multiplicity $1$ and eigenvalue $0$ with multiplicity $2^{n-m}-1$. Thus, $A_{\tilde{\Omega}}^{\tilde{\gamma}}\otimes\ketbra{\sigma}{\sigma}$ has eigenvalue $0$ with multiplicity $2^m(2^{n-m}-1)$, and the remaining eigenvalues are exactly the eigenvalues of $A_{\tilde{\Omega}}^{\tilde{\gamma}}$.

    The operator $2^mA_{\tilde{\Omega}}^{\tilde{\gamma}}-\mathbb{I}$ is a sum of pair-wise anticommuting Pauli operators (the non-identity Pauli operators in $\tilde{\Omega}$). We have
    \begin{align}
        (2^mA_{\tilde{\Omega}}^{\tilde{\gamma}}-\mathbb{I})^2 =& \left(\sum\limits_{b\in\tilde{\Omega}\setminus\{0\}}(-1)^{\tilde{\gamma}(b)}T_b\right)^2\\
        =& \sum\limits_{b\in\tilde{\Omega}\setminus\{0\}}(-1)^{2\tilde{\gamma}(b)}T_b^2\\&+\sum\limits_{\substack{a,b\in\tilde{\Omega}\setminus\{0\}\\a\ne b}}(-1)^{\tilde{\gamma}(a)+\tilde{\gamma}(b)}T_aT_b\\
        =& |\tilde{\Omega}\setminus\{0\}|\mathbb{I}=(2m+1)\mathbb{I}.
    \end{align}
    The last term in the second line above vanishes as a result of the pair-wise anti-commutation of the operators in the set. Therefore, the eigenvalues of $2^mA_{\tilde{\Omega}}^{\tilde{\gamma}}-\mathbb{I}$ are in $\{-\sqrt{2m+1},\sqrt{2m+1}\}$. Since $\Tr(2^mA_{\tilde{\Omega}}^{\tilde{\gamma}}-\mathbb{I})=0$, each eigenvalue must appear with equal multiplicity. The result follows.
\end{proof}

Not all vertices of $\Lambda$ are of CNC type, but there is evidence that vertices of CNC type form a spectrally generating set in the sense of Definition~\ref{Definition:SpectrallyGeneratingVertices}. Therefore, we make the following conjecture.

\begin{Conjecture}\label{Conjecture:LambdaMajorization}
    For any number of qubits and any vertex $A\in\mathrm{vert}(\Lambda)$, the eigenvalues $\lambda^\downarrow(A)$ of $A$ are majorized by a convex combination of the eigenvalues $\lambda^\downarrow(A_\Omega^\gamma)$ of CNC-type vertices of $\Lambda$.
\end{Conjecture}

By directly enumerating the vertices of $\Lambda$ for one and two qubits, we confirm the conjecture for these cases. This is shown in Appendix~\ref{Section:ConjectureEvidence}, together with numerical evidence for this conjecture from higher numbers of qubits.
If the conjecture is true, we obtain an especially clean spectral characterization of the multi-qubit muggle set.

\begin{Theorem}\label{Theorem:QubitConjecturedCharacterization}
    If Conjecture~\ref{Conjecture:LambdaMajorization} is true, then a multiqubit state $\rho$ is absolutely stabilizer if and only if, for all $m\in\{1,\dots,n\}$,
    \begin{equation}
        \frac{1+\sqrt{2m+1}}{2^m}\sum\limits_{k=1}^{2^{m-1}}\lambda_k^\uparrow(\rho) + \frac{1-\sqrt{2m+1}}{2^m}\sum\limits_{k=1}^{2^{m-1}}\lambda_k^\downarrow(\rho)\ge0.
    \end{equation}
\end{Theorem}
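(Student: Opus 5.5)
The plan is to combine Theorem~\ref{Theorem:MainSpectralCharacterization} with Lemma~\ref{Lemma:CNCEigenvaluesQubits}. Assuming Conjecture~\ref{Conjecture:LambdaMajorization}, the set $\mathcal{V}_{\mathrm{CNC}}$ of CNC-type vertices of $\Lambda$ (the operators $A_\Omega^\gamma$ with $\Omega$ maximal CNC) is spectrally generating in the sense of Definition~\ref{Definition:SpectrallyGeneratingVertices}. This holds because, for every vertex $A$, the conjecture supplies a probability distribution on $\mathcal{V}_{\mathrm{CNC}}$ whose averaged ordered spectrum majorizes $\lambda^\downarrow(A)$. The set is a legitimate subset of $\mathrm{vert}(\Lambda)$ since maximal CNC operators are vertices of $\Lambda$. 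Theorem~\ref{Theorem:MainSpectralCharacterization} then says that $\rho$ is absolutely stabilizer if and only if $\sum_k\lambda_k^\uparrow(\rho)\lambda_k^\downarrow(A)\ge0$ for every $A\in\mathcal{V}_{\mathrm{CNC}}$.

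Next I reduce these infinitely many (finitely many, but Clifford-redundant) inequalities to one per type $m$. By Lemma~\ref{Lemma:CNCEigenvaluesQubits}, the ordered spectrum $\lambda^\downarrow(A)$ depends only on the type $m\in\{1,\dots,n\}$. It consists of $2^{m-1}$ copies of $\frac{1+\sqrt{2m+1}}{2^m}$, followed by $2^n-2^m$ zeros, followed by $2^{m-1}$ copies of $\frac{1-\sqrt{2m+1}}{2^m}$. The ordering is correct because the first value is positive and the last is negative for $m\ge1$. So there are exactly $n$ distinct inequalities.

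For each type I evaluate the pairing with $\lambda^\uparrow(\rho)$. The first $2^{m-1}$ positions pair the positive eigenvalue with the $2^{m-1}$ smallest eigenvalues of $\rho$, giving $\frac{1+\sqrt{2m+1}}{2^m}\sum_{k=1}^{2^{m-1}}\lambda_k^\uparrow(\rho)$. The middle positions contribute zero. The last $2^{m-1}$ positions pair the negative eigenvalue with the $2^{m-1}$ largest eigenvalues of $\rho$. Their sum is $\sum_{k=1}^{2^{m-1}}\lambda_k^\downarrow(\rho)$, which yields the second term. This is exactly the displayed inequality.

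The only step needing care is justifying that maximal CNC operators of every type $m$ actually occur as vertices. This is needed so that necessity holds, and it is quoted from the earlier literature. All other steps are direct bookkeeping, so I expect no real obstacle beyond the conjecture itself, which is assumed.
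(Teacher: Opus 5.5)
Your proposal is correct and matches the paper's proof. The paper likewise obtains the result as the immediate combination of Theorem~\ref{Theorem:MainSpectralCharacterization}, Conjecture~\ref{Conjecture:LambdaMajorization} (which makes the CNC-type vertices spectrally generating) and Lemma~\ref{Lemma:CNCEigenvaluesQubits}; your explicit pairing of $\lambda^\uparrow(\rho)$ against $\lambda^\downarrow(A_\Omega^\gamma)$ just fills in the bookkeeping the paper leaves implicit.
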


\begin{proof}[Proof of Theorem~\ref{Theorem:QubitConjecturedCharacterization}]
    This is the immediate conclusion of Theorem~\ref{Theorem:MainSpectralCharacterization}, Conjecture~\ref{Conjecture:LambdaMajorization}, and Lemma~\ref{Lemma:CNCEigenvaluesQubits}.
\end{proof}

\subsubsection{Odd-prime-dimensional qudits}

There are two different classes of CNC operators for odd-prime-dimensional qudits, those with linear value assignments and those with nonlinear value assignments. The eigenvalues of the first class are described by the following lemma.

\begin{Lemma}\label{Lemma:PhasePointEigenvalues}
    For any number $n$ of qudits with odd-prime dimension $d$, the phase space point operators $A_u$ of the Wigner function have eigenvalues
    \begin{align}
        \begin{cases}
            \;\;\,1 & \text{with multiplicity $\frac{1}{2}(d^n+1)$},\\
            -1 & \text{with multiplicity $\frac{1}{2}(d^n-1)$}.
        \end{cases}
    \end{align}
\end{Lemma}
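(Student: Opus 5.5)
Since every $A_u = T_u A_0 T_u^\dagger$ is unitarily conjugate to $A_0$, all phase space point operators share the spectrum of $A_0$, so it suffices to treat the parity operator. By Eq.~\eqref{Equation:ParityOperator}, $A_0=\sum_{j\in\Z_d^n}\ketbra{-j}{j}$ is the permutation matrix of the involution $j\mapsto -j$ on $\Z_d^n$. First I would note that $A_0^2=\sum_j \ketbra{j}{j}=\mathbb{I}$, and $A_0$ is Hermitian (a real symmetric permutation matrix, since $j\mapsto -j$ is its own inverse). Hence its eigenvalues lie in $\{+1,-1\}$.

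To get the multiplicities, I would use the trace. Since $d$ is odd, $-j=j \pmod d$ only for $j=0$, so the only fixed point of the involution on $\Z_d^n$ is $j=0$. Thus $\Tr(A_0)=\#\{j: -j=j\}=1$, consistent with $A_u\in\Herm_1(\mathcal H)$ as already stated in the background. Writing $p$ and $q$ for the multiplicities of $+1$ and $-1$, we have $p+q=d^n$ and $p-q=1$. This gives $p=\tfrac12(d^n+1)$ and $q=\tfrac12(d^n-1)$, which are integers because $d^n$ is odd.

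As a sanity check, the eigenspaces are easy to write down explicitly. The $+1$ eigenspace is spanned by $\ket{0}$ together with $\ket{j}+\ket{-j}$ for one representative $j$ of each of the $(d^n-1)/2$ pairs $\{j,-j\}$, $j\neq0$. The $-1$ eigenspace is spanned by $\ket{j}-\ket{-j}$ for those same representatives. Either route works; there is no real obstacle here. The only point needing care is the identification of $A_0$ with the permutation form in Eq.~\eqref{Equation:ParityOperator}, which the paper already takes as given, and the oddness of $d$, which guarantees a unique fixed point.
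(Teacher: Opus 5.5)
Your proof is correct and follows essentially the same route as the paper: reduce to $A_0$ by unitary conjugation, use $A_0^2=\mathbb{I}$ to confine the spectrum to $\{\pm1\}$, and fix the multiplicities from $\Tr(A_0)=1$. You also justify $\Tr(A_0)=1$ by counting fixed points of $j\mapsto -j$ for odd $d$, and you write down the eigenspaces explicitly; the paper only asserts the trace value and does not give the eigenspaces.
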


\begin{proof}[Proof of Lemma~\ref{Lemma:PhasePointEigenvalues}]
    The phase space point operator $A_u, u\in\Z_d^{2n}$ can be obtained from $A_0$ by $A_u=T_aA_0T_a^\dagger$. Conjugation by unitary operators doesn't change the eigenvalues, so it suffices to characterize the eigenvalues of $A_0$. It can easily be shown using Eq.~\eqref{Equation:ParityOperator} that the parity operator $A_0$ satisfies $A_0^2=\mathbb{I}$. Therefore, the eigenvalues of $A_u$ are in $\{-1,1\}$. Since $\Tr(A_u)=1$, the multiplicity of eigenvalue $+1$ must be one more than the multiplicity $-1$. The result follows.
\end{proof}

Through Eq.~\eqref{Equation:SpectralMuggleInequalities}, this gives a necessary condition for an $n$-qudit state $\rho$ to be an absolutely stabilizer state, namely,
\begin{equation}
    \sum\limits_{k=1}^{(d^n+1)/2}\lambda_k^\uparrow(\rho) - \sum\limits_{k=1}^{(d^n-1)/2}\lambda_k^\downarrow(\rho)\ge0.
\end{equation}

Since the noncontextuality condition Eq.~\eqref{Equation:NoncontextualityCondition} enforces non-linearity only on \emph{commuting} triples $u,v,u+v\in\Omega$, for odd-prime dimensional qudits there are CNC operators which are vertices of $\Lambda$ but are not phase space point operators~\cite{OkayRaussendorf2017,ZurelHeimendahl2024,ZurelHeimendahl2024b}. This can be seen already for a single qutrit. There, there are nine phase space point operators, but the $\Lambda$ polytope has $81$ vertices. The remaining $72$ vertices are CNC operators with $\Omega=\Z_3^2$ and with non-linear value assignments $\gamma$~\cite{DelfosseRaussendorf2017}. 

We could make the same conjecture about CNC operators being a spectrally generating set, Conjecture~\ref{Conjecture:LambdaMajorization}, for odd-prime-dimensional qudits, and this would give us a similarly conditional characterization of the muggle set as Theorem~\ref{Theorem:QubitConjecturedCharacterization}. However, a complete classification of the spectra of multiqudit CNC operators is currently not known. Even for a single-qudit, there are many possible spectra of $\Lambda$ polytope vertices. In this case, all vertices of $\Lambda$ are CNC type, but there are many unitary group orbits of CNC-type operators. They are enumerated up to $d=7$ in Ref.~\cite{ApplebyChaturvedi2008}. They find two orbits (i.e., two distinct spectra) for $d=3$, $9$ for $d=5$, and $210$ for $d=7$. A complete characterization of CNC spectra on odd-prime-dimensional qudits is currently out of reach.

\subsection{Examples}

In this section, we illustrate the set of muggle states through three examples: (1)~one qubit, (2)~two qubits, and (3)~one qutrit.

\subsubsection{One qubit}

The one-qubit case is easy to visualize. The set of physical states is the Bloch ball, a Hilbert-Schmidt ball in the $3$-dimensional real affine space $\Herm_1(\mathbb{C}^2)$. There are six pure stabilizer states, two eigenstates for each of the $X$, $Y$, and $Z$ Pauli operators. These are the vertices of $\STAB \subset \Herm_1(\mathbb C^2)$, which forms a regular octahedron.

On a single-qubit, $\Lambda$ is a cube with vertices $A_{\pm\pm\pm}=\frac{1}{2}(I\pm X\pm Y\pm Z)$. These vertices are all unitarily equivalent with eigenvalues $(1\pm\sqrt{3})/2$. The set of absolutely stabilizer states consists of all states with eigenvalues $(1-\lambda,\lambda)$, where without loss of generality we assume $1-\lambda\ge\lambda$, satisfying
\begin{equation}
    \begin{cases}
        \lambda,1-\lambda\ge0\\
        \frac{1+\sqrt{3}}{2}\lambda+\frac{1-\sqrt{3}}{2}(1-\lambda)\ge0.
    \end{cases}
\end{equation}
These constraints simplify to
\begin{equation}
    \frac{1}{2}-\frac{1}{2\sqrt{3}}\le\lambda\le\frac{1}{2},
\end{equation}
corresponding to diagonal matrices $(\mathbb{I}+zZ)/2$ with $0\le z\le 1/\sqrt{3}$.
The set of all absolutely stabilizer states is the unitary orbit of these diagonal states. These are states of the form $\rho(x,y,z)=\frac{1}{2}(\mathbb{I}+xX+yY+zZ)$ with $x^2+y^2+z^2\le\frac{1}{3}$. This is simply a three-dimensional Hilbert-Schmidt ball in $\Herm_0(\mathbb{C}^2)$, shifted to be centred at the maximally mixed state and inscribed in the stabilizer octahedron, the largest ball inscribed in the stabilizer polytope. This is illustrated in Figure~\ref{Figure:OneQubit}.

\begin{figure}
    \centering
    \includegraphics[width=\linewidth]{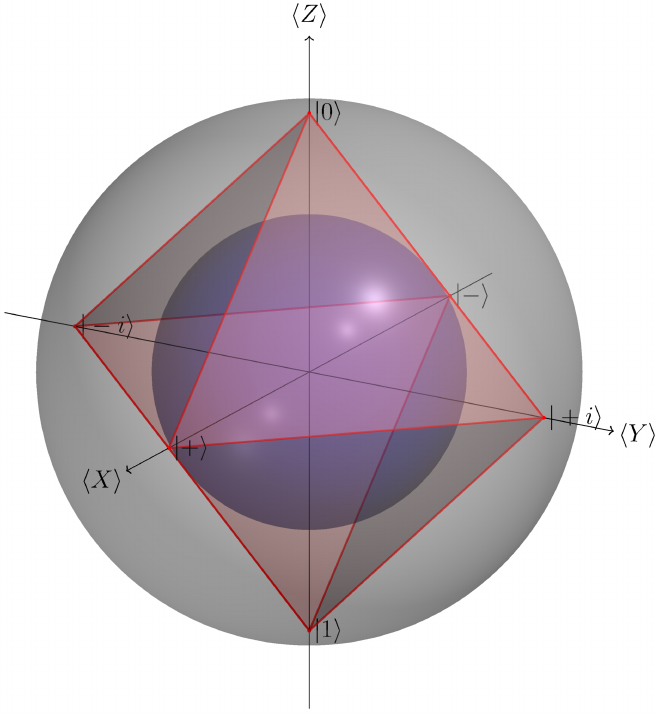}
    \caption{\label{Figure:OneQubit}The single-qubit Bloch ball is shown in gray containing the stabilizer octahedron in red and the set of single-qubit absolutely stabilizer states in blue. In the single-qubit case, the set of absolutely stabilizer states is a Hilbert-Schmidt ball inscribed in the stabilizer octahedron.}
\end{figure}

\subsubsection{Two qubits}

The two-qubit stabilizer polytope is a $15$-dimensional polytope with $60$ vertices, corresponding to the $60$ pure two-qubit stabilizer states. $\Lambda$ has $22{,}320$ vertices which fall into $8$ orbits under the action of the Clifford group with $8$ distinct spectra. Conjecture~\ref{Conjecture:LambdaMajorization} holds in this case (this is shown in Appendix~\ref{Section:ConjectureEvidence}), and so by Theorem~\ref{Theorem:QubitConjecturedCharacterization} we only need the constraints from the spectra of the two unitary group orbits of CNC-type vertices. The spectra are
\begin{equation}
    \lambda^\downarrow(A_\Omega^\gamma)=\begin{cases}
        \left(\frac{1+\sqrt{3}}{2},0,0,\frac{1-\sqrt{3}}{2}\right) & \text{for $m=1$},\\
        \left(\frac{1+\sqrt{5}}{4},\frac{1+\sqrt{5}}{4},\frac{1-\sqrt{5}}{4},\frac{1-\sqrt{5}}{4}\right) & \text{for $m=2$.}
    \end{cases}
\end{equation}

Suppose the eigenvalues of a candidate absolutely stabilizer state $\rho$ are $\lambda_1,\lambda_2,\lambda_3,\lambda_4$. Without loss of generality, we can assume the eigenvalues satisfy
\begin{equation}
    \begin{cases}
        \lambda_1\ge\lambda_2\ge\lambda_3\ge\lambda_4\ge0\\
        \lambda_1+\lambda_2+\lambda_3+\lambda_4=1.
    \end{cases}
\end{equation}
These constraints define the Weyl chamber: the subpolytope in the of the three dimensional unit simplex of nonnegative spectra corresponding to spectra sorted in non-increasing order. Then, to get the allowed (sorted) spectra of absolutely stabilizer states, we impose the two additional constraints
\begin{equation}
    \begin{cases}
        \frac{1-\sqrt{3}}{2}\lambda_1 + \frac{1+\sqrt{3}}{2}\lambda_4 \ge 0\\
        \frac{1-\sqrt{5}}{4}(\lambda_1+\lambda_2) + \frac{1+\sqrt{5}}{4}(\lambda_3+\lambda_4) \ge 0.
    \end{cases}
\end{equation}
These constraints give us a three-dimensional polytope of allowed spectra with vertices.
\begin{equation}
    \begin{cases}
        \left(\frac{1}{4}, \frac{1}{4}, \frac{1}{4}, \frac{1}{4}\right)\\
        \left(\frac{5+\sqrt{3}}{22}, \frac{5+\sqrt{3}}{22}, \frac{5+\sqrt{3}}{22}, \frac{7-3\sqrt{3}}{22}\right)\\
        \left(\frac{7+3\sqrt{3}}{22}, \frac{5-\sqrt{3}}{22}, \frac{5-\sqrt{3}}{22}, \frac{5-\sqrt{3}}{22}\right)\\
        \left(\frac{5+\sqrt{5}}{20}, \frac{5+\sqrt{5}}{20}, \frac{5-\sqrt{5}}{20}, \frac{5-\sqrt{5}}{20}\right)\\
        \left(\frac{5+\sqrt{5}}{20}, \frac{5+\sqrt{5}}{20}, \frac{5\sqrt{3}-4\sqrt{5}+\sqrt{15}}{20}, \frac{10-5\sqrt{3}+2\sqrt{5}-\sqrt{15}}{20}\right)\\
        \left(\frac{10+5\sqrt{3}-2\sqrt{5}-\sqrt{15}}{20}, \frac{-5\sqrt{3}+4\sqrt{5}+\sqrt{15}}{20}, \frac{5-\sqrt{5}}{20}, \frac{5-\sqrt{5}}{20}\right)
    \end{cases}
\end{equation}
The set of all unordered spectra is obtained by taking the convex hull of these $6$ vertices together with all of their coordinate permutation. The result is a three-dimensional polytope of spectra with $40$ vertices and $18$ facets. The set of all absolutely stabilizer states is the set of all unitary conjugations of the diagonal matrices corresponding to these allowed spectra. A cross section of the $2$-qubit state space, together with the stabilizer polytope and muggle set, is shown in Figure~\ref{Figure:TwoQubits}.

\begin{figure}
    \centering
    \includegraphics[width=\linewidth]{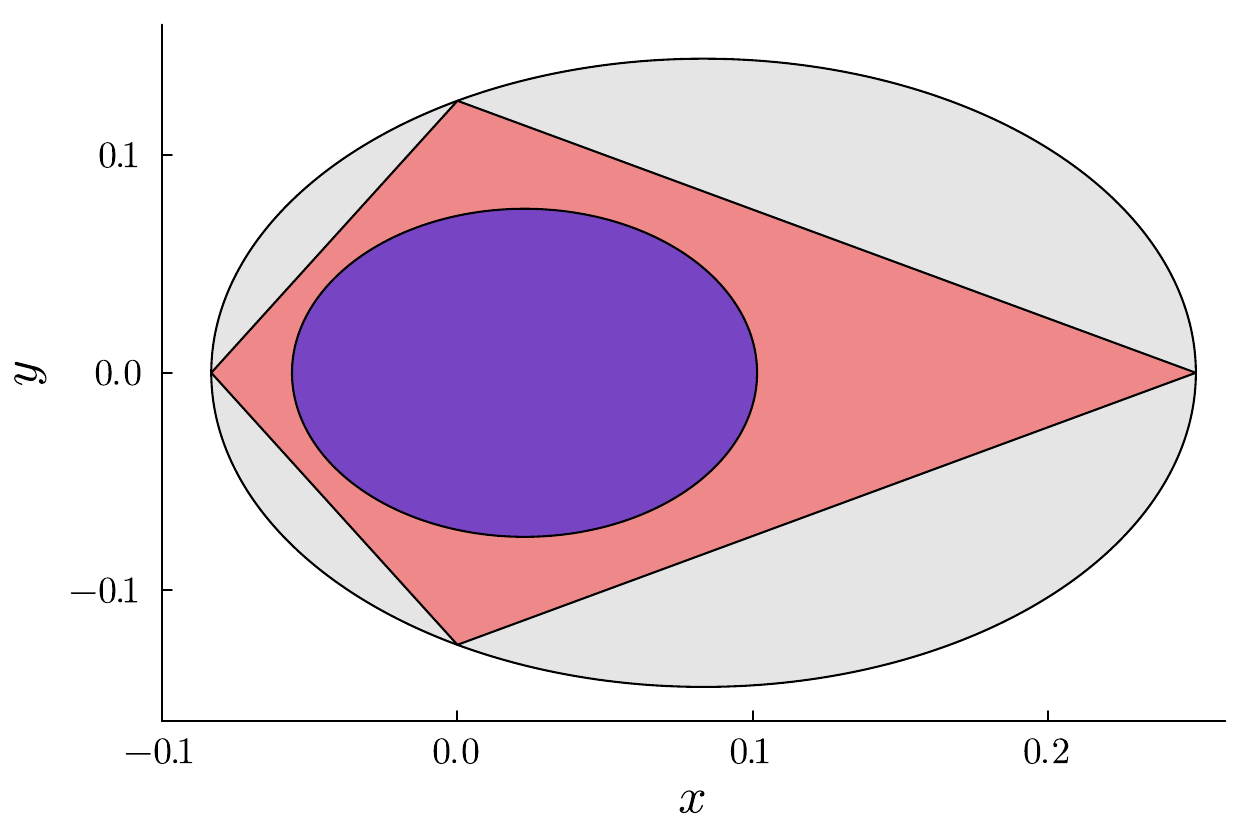}
    \caption{\label{Figure:TwoQubits}Cross section of the set of $2$-qubit states parametrized by $\rho(x,y)=\frac{1}{4}I+x(ZZ+XX-YY)+y(Z_1+Z_2)$. The set of physical density matrices is shown in gray, $\STAB_0$ in red, and $\ASTAB_0$ in blue.}
\end{figure}

\subsubsection{One qutrit}

The single-qutrit stabilizer polytope has $12$ vertices and $81$ facets. The facets, which correspond to the $81$ vertices of $\Lambda$, are all CNC-type, but they fall into two orbits under the action of the unitary group. $9$ of them correspond to phase space point operators of the Wigner function (CNC operators $A_{\Z_d^2}^\gamma$ with linear value assignment $\gamma$), and the other $72$ correspond to CNC operators with non-linear value assignments~\cite{DelfosseRaussendorf2017}. The eigenvalues of these operators are
\begin{equation}
    \lambda^\downarrow\left(A_{\mathbb{Z}_3^2}^\gamma\right)=
    \begin{cases}
        (1,1,-1) & \text{for linear $\gamma$,}\\
        \left(\frac{1+\sqrt{5}}{2},0,\frac{1-\sqrt{5}}{2}\right) & \text{for nonlinear $\gamma$.}
    \end{cases}
\end{equation}

Let $\lambda_1,\lambda_2,\lambda_3$ with $\lambda_1\ge\lambda_2\ge\lambda_3\ge0$ be the sorted eigenvalues of a candidate single-qutrit muggle state $\rho$. With non-negativity and normalization, $\lambda_1+\lambda_2+\lambda_3=1$, together with the constraints
\begin{equation}\label{Equation:OneQutritMuggleInequalities}
    \begin{cases}
        -\lambda_1+\lambda_2+\lambda_3\ge0,\\
        \frac{1-\sqrt{5}}{2}\lambda_1+\frac{1+\sqrt{5}}{2}\lambda_3\ge0,
    \end{cases}
\end{equation}
from Theorem~\ref{Theorem:MainSpectralCharacterization}, we get a two-dimensional polygon of allowed ordered spectra for absolutely stabilizer states living in the Weyl chamber. Allowing unsorted spectra (i.e., coordinate permutations of the spectra in the Weyl chamber), we get the full polygon of allowed spectra living in the simplex of normalized nonnegative spectra. These absolutely stabilizer state spectra are shown in Figure~\ref{Figure:OneQutrit}. For comparison, Figure~\ref{Figure:OneQutrit} also shows the allowed spectra of absolutely Wigner positive states~\cite{SalazarJunior2022} described in the next section.

\begin{figure*}
    \centering
    \includegraphics[width=0.375\linewidth]{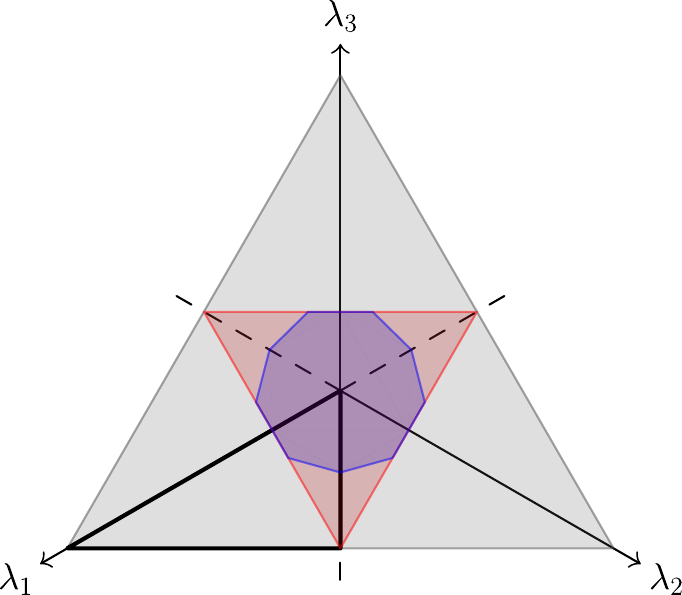}
    \hspace{0.05\linewidth}
    \includegraphics[width=0.525\linewidth]{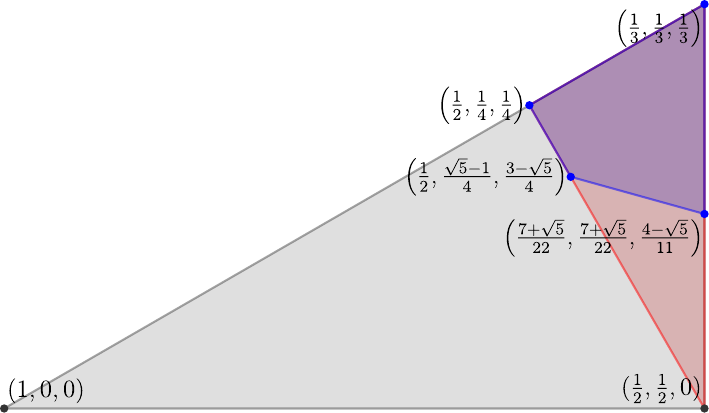}
    \caption{\label{Figure:OneQutrit} Left: Ternary plot of the spectra of single-qutrit absolutely stabilizer states (blue) and absolutely Wigner-positive states (red).  Right: the associated Weyl chamber (i.e.~the subset of ordered spectra) with special points of intersection displayed.}
\end{figure*}

\section{Absolute Wigner positivity}\label{Section:AWP}

On odd-prime-dimensional qudits, it is known that there are non-stabilizer states that cannot be distilled to useful magic states, and on which stabilizer circuits can be efficiently simulated classically. These so-called bound magic states are non-stabilizer states with positive Wigner function ~\cite{VeitchEmerson2012}. Measures of negativity in the Wigner function serve as monotones with respect to the resource theory of magic~\cite{VeitchEmerson2014}. Therefore, with a similar motivation as absolutely stabilizer states, we can define absolutely Wigner-positive states as those for which the Wigner function remains nonnegative after conjugation by any unitary group element~\cite{SalazarJunior2022}.  In this section, we get a complete spectral characterization of the set of AWP states using similar methods as those we used to characterize absolutely stabilizer states in Section~\ref{Section:MuggleMain}.

\begin{Definition}
    An $n$-qudit state $\rho$ is absolutely Wigner-positive (or AWP) if the Wigner function of $U\rho U^\dagger$ is nonnegative for all unitary operators $U\in\mathrm{U}(d^n)$.
\end{Definition}

Similar to the description of the set of absolutely stabilizer states in Eq.~\eqref{Equation:MuggleStates}, the set of absolutely Wigner-positive states, denoted $\AWP$, is
\begin{equation}
    \mathrm{AWP} = \bigcap\limits_{U\in\mathrm{U}(d^n)} U\, \WP \, U^\dagger.
\end{equation}
This set is also unitarily-invariant and thus has a spectral characterization\footnote{Note that due to the size of the Wigner polytope, the set AWP as defined could \textit{a priori} include non-positive-semi-definite operators.  We show in the next section that this is not the case: all operators in $\AWP$ are positive semi-definite.}. With the following theorem, we get a complete characterization of the spectra of AWP states.

\begin{Theorem}\label{Theorem:SpectralAWPCharacterization}
    For any number of qudits $n$ with odd-prime Hilbert space dimension $d$, an $n$-qudit state $\rho$ is AWP if and only if
    \begin{equation}\label{Equation:AWPSpectralCharacterization}
        \sum\limits_{k=1}^{(d^n-1)/2}\lambda_k^\downarrow(\rho)\le\frac{1}{2},
    \end{equation}
    i.e., the sum of the largest $(d^n-1)/2$ eigenvalues of $\rho$ is no more than $1/2$. Equivalently, a state $\rho$ is AWP if and only if its spectrum is in the $d^n-1$-dimensional polytope formed by the convex hull of the vertices
    \begin{equation}\label{Equation:AWPSpectralVertex}
        \begin{cases}
            \bigg(\underbrace{\frac{1}{d^n-1},\dots,\frac{1}{d^n-1}}_{d^n-1\text{ times}},0\bigg)\\
            \bigg(\frac{2}{d^n+1},\underbrace{\frac{1}{d^n+1},\dots,\frac{1}{d^n+1}}_{d^n-1\text{ times}}\bigg)
        \end{cases}
    \end{equation}
    and all of their coordinate permutations.
\end{Theorem}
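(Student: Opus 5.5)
The plan is to follow the proof of Theorem~\ref{Theorem:FirstSpectralCharacterization}, with the stabilizer facets replaced by the Wigner facets. By the halfspace description Eq.~\eqref{Equation:WignerPolytope2}, a unit-trace operator $\rho$ lies in $\AWP$ if and only if $\Tr(U\rho U^\dagger A_u)\ge0$ for every $u\in\Z_d^{2n}$ and every $U\in\mathrm{U}(d^n)$. All phase space point operators are unitarily equivalent, so this reduces to the single condition $\min_U \Tr(U\rho U^\dagger A_0)\ge0$. By Corollary~\ref{Corollary:KyFan}, this minimum equals $\sum_k\lambda_k^\uparrow(\rho)\lambda_k^\downarrow(A_0)$.

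Lemma~\ref{Lemma:PhasePointEigenvalues} gives the spectrum of $A_0$: eigenvalue $+1$ with multiplicity $(d^n+1)/2$ and $-1$ with multiplicity $(d^n-1)/2$. The condition therefore becomes
\begin{equation*}
\sum_{k=1}^{(d^n+1)/2}\lambda_k^\uparrow(\rho)-\sum_{k=1}^{(d^n-1)/2}\lambda_k^\downarrow(\rho)\ge0 .
\end{equation*}
The first sum is the sum of the smallest $(d^n+1)/2$ eigenvalues and the second is the sum of the largest $(d^n-1)/2$; together they account for every eigenvalue exactly once. Using $\Tr\rho=1$, the first sum equals $1$ minus the second, so the condition is $1-2\sum_{k\le (d^n-1)/2}\lambda_k^\downarrow(\rho)\ge0$, which is Eq.~\eqref{Equation:AWPSpectralCharacterization}. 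For a state $\rho$ this establishes the first claim.

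For the vertex description, I would first show that Eq.~\eqref{Equation:AWPSpectralCharacterization} forces positive semidefiniteness, which also settles the footnote about $\AWP$. Write $N=d^n$. Suppose the smallest eigenvalue $\lambda_N^\downarrow$ were negative. The remaining $N-1$ eigenvalues then sum to more than $1$. The top $(N-1)/2$ of them have average at least the overall average of those $N-1$, so their sum exceeds $1/2$, a contradiction.

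The spectral set is then the intersection of the simplex with the symmetric family of inequalities ``every $(N-1)/2$ coordinates sum to at most $1/2$.'' I would work in the Weyl chamber $\lambda_1\ge\dots\ge\lambda_N\ge0$, $\sum\lambda_k=1$, $\lambda_1+\dots+\lambda_{(N-1)/2}\le1/2$, and enumerate its vertices. A vertex of the Weyl chamber is obtained by making all but one of the defining inequalities tight, which means the chamber vertices have the form $(1/j,\dots,1/j,0,\dots)$. Adding the single cut, the vertices of the truncated chamber are those chamber vertices satisfying the cut, together with intersections of the cut hyperplane with chamber edges joining a vertex that satisfies it to one that violates it. A chamber vertex with $j$ equal entries satisfies the cut exactly when $j\ge N-1$. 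The relevant edges therefore join $(1/(N-1),\dots,1/(N-1),0)$ to $(1/j,\dots,0)$ for $j\le(N-1)/2$, and join the uniform point $(1/N,\dots,1/N)$ to such points.

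The step I expect to be the main obstacle is verifying that after symmetrization only the two listed vertex types survive, together with the maximally mixed point, which is their convex combination. Two things must be checked: that each cut point on these edges is either one of the listed points or lies in the convex hull of permutations of the listed points, and conversely that the listed points satisfy the inequality. The second check is easy: the sum of the top $(N-1)/2$ coordinates is $\tfrac{N-1}{2}\cdot\tfrac{1}{N-1}=1/2$ for the first type, and $(2+(N-3)/2)/(N+1)=1/2$ for the second. For the first check, I would use a cleaner argument in place of the edge enumeration. The constraint set is the simplex intersected with the permutohedral condition ``the top-$k$ sum is at most $1/2$.'' Its extreme points should have at most two distinct nonzero values, with the entries of the top block equal to one another and those of the bottom block equal to one another. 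Solving the tight equations gives exactly the two listed families.
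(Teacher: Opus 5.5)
Your derivation of the inequality is the paper's argument: Ky Fan applied to a phase space point operator, Lemma~\ref{Lemma:PhasePointEigenvalues}, then normalization. Your averaging argument that the inequality forces $\lambda_N\ge0$ is also correct. It settles the footnote more directly than the paper, which deduces positivity only later from $R(\AWP_0)=r_{\PSD}$.

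The gap is in the vertex description. Write $N=d^n$ and $k=(N-1)/2$. Let $P$ be the simplex cut by all constraints $\sum_{i\in S}\lambda_i\le\tfrac12$ with $|S|=k$, and let $v_j=(\tfrac1j,\dots,\tfrac1j,0,\dots,0)$ have $j$ nonzero entries.

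First, your edge enumeration uses the wrong range. The top-$k$ sum of $v_j$ is $\min\{1,k/j\}$, so every $v_j$ with $j\le N-2$ violates the cut, not only those with $j\le k$. Your list therefore misses the cut points on $[v_N,v_j]$ for $k<j\le N-2$. The edges at $v_{N-1}$ contribute nothing new, since $v_{N-1}$ lies on the hyperplane.

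More importantly, the argument you substitute for the enumeration is only asserted, and as stated it does not single out the two families. For $N\ge5$, take $x=\bigl(\tfrac{1}{N-1}\ (k\text{ times}),\ \tfrac{1}{N+1}\ (k+1\text{ times})\bigr)$. It has two distinct nonzero values, is constant on its top $k$ entries and on the rest, and satisfies both tight equations (top-$k$ sum $\tfrac12$, total $1$). Yet it is the average of the $k$ permutations of $\bigl(\tfrac{2}{N+1},\tfrac{1}{N+1},\dots,\tfrac{1}{N+1}\bigr)$ that place $\tfrac{2}{N+1}$ in one of the first $k$ slots, so it is not a vertex. ``Solving the tight equations'' over two-valued patterns thus yields a whole family of candidates; every cut point of the edge enumeration is one. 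Deciding which candidates are extreme, or showing that the rest lie in the hull, is exactly what the second claim requires, and your sketch does not supply it.

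Either route can be completed.

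\emph{Finishing the enumeration, as the paper does.} The cut meets $[v_N,v_j]$ at two kinds of points:
\begin{itemize}
\item for $j\le k$, at $\bigl(\tfrac{j+1}{j(N+1)}\ (j\text{ times}),\ \tfrac{1}{N+1}\ (N-j\text{ times})\bigr)$, the average of the $j$ permutations of the second listed vertex with $\tfrac{2}{N+1}$ among the first $j$ slots;
\item for $k\le j\le N-2$, at $\bigl(\tfrac{1}{N-1}\ (j\text{ times}),\ \tfrac{N-1-j}{(N-1)(N-j)}\ (N-j\text{ times})\bigr)$, the average of the $N-j$ permutations of the first listed vertex with its zero among the last $N-j$ slots.
\end{itemize}
Finally, $v_N$ is the average of all permutations of the first listed vertex.

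\emph{Making your extreme-point route rigorous.} You need a perturbation lemma. At a vertex $\lambda$ of $P$ the cut is tight; otherwise $\lambda$ would be some standard basis vector, which violates it. Let $a$ be the $k$-th largest entry. Then $\lambda\pm\epsilon(e_i-e_j)\in P$ for small $\epsilon$ whenever $\lambda_i,\lambda_j>a$ or $\lambda_i,\lambda_j\in(0,a)$. Hence at most one entry exceeds $a$ and at most one lies in $(0,a)$.

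Let $g,h\in\{0,1\}$ count these entries, with values $b$ and $c$, and let $m$ entries equal $a$. The equations are $gb+(k-g)a=\tfrac12$ and $gb+ma+hc=1$. They give:
\begin{itemize}
\item for $g=h=0$, $m=N-1$ and the first vertex;
\item for $g=1,h=0$, $m=N-1$ (forced by $b>a$) and the second vertex;
\item for $g=0,h=1$, no solution, since one would need $N-2<m<N-1$;
\item for $g=h=1$, a free direction, so no vertex.
\end{itemize}
Since $P$ is the hull of its vertices, this finishes the proof. For $N=3$ only the first family occurs, the second being a midpoint, as the paper notes.

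This version identifies the vertex set of the symmetric polytope directly and avoids the Weyl chamber. The paper's version instead relies on explicit convex combinations.
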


\begin{proof}[Proof of Theorem~\ref{Theorem:SpectralAWPCharacterization}]
    With the description of the Wigner polytope Eq.~\eqref{Equation:WignerPolytope2}, state $\rho$ is AWP if and only if
    \begin{equation}
        \Tr(U\rho U^\dagger A_u)\ge0,\quad\forall u\in\Z_d^{2n},\;\forall U\in\mathrm{U}(d^n),
    \end{equation}
    or equivalently,
    \begin{equation}
        \min_{U\in\mathrm{U}(d^n)}\Tr(U\rho U^\dagger A_u)\ge0,\quad\forall u\in\Z_d^{2n}.
    \end{equation}
    Using Corollary~\ref{Corollary:KyFan}, together with Lemma~\ref{Lemma:PhasePointEigenvalues} characterizing the eigenvalues of phase space point operators, we get the constraint that a state $\rho$ is AWP if and only if
    \begin{equation}\label{Equation:AWPSpectralConstraint}
        \sum\limits_{k=1}^{(d^n+1)/2}\lambda_k^\uparrow(\rho)-\sum\limits_{k=1}^{(d^n-1)/2}\lambda_k^\downarrow(\rho)\ge0.
    \end{equation}
    With normalization,
    \begin{equation}
        \sum\limits_{k=1}^{d^n}\lambda_k^\downarrow(\rho)=\sum\limits_{k=1}^{(d^n-1)/2}\lambda_k^\downarrow(\rho) + \sum\limits_{k=1}^{(d^n+1)/2}\lambda_k^\uparrow(\rho)=1,
    \end{equation}
    Eq.~\eqref{Equation:AWPSpectralConstraint} is equivalent to Eq.~\eqref{Equation:AWPSpectralCharacterization}. This gives us the first statement of the theorem.

    For the second statement of the theorem, let $\lambda_1,\dots,\lambda_{d^n}$ be the eigenvalues of an $n$-qudit candidate AWP state $\rho$. Without loss of generality, we can assume
    \begin{equation}\label{Equation:AWPWeylChamber}
        \begin{cases}
            \lambda_1\ge\lambda_2\ge\cdots\ge\lambda_{d^n}\ge0,\\
            \lambda_1+\cdots+\lambda_{d^n}=1.
        \end{cases}
    \end{equation}
    This gives us a Weyl chamber in the simplex of allowed spectra of physical quantum states. This Weyl chamber is a $d^n-1$-dimensional simplex in coordinates $(\lambda_1,\dots,\lambda_{d^n})$ defined by the constraints Eq.~\eqref{Equation:AWPWeylChamber} and with vertices
    \begin{equation}\label{Equation:AWPWeylChamberVertices}
        \begin{cases}
            v_1=(1,0,\dots,0),\\
            v_2=(\frac{1}{2},\frac{1}{2},0,\dots,0)\\
            \vdots\\
            v_{d^n}=(\frac{1}{d^n},\dots,\frac{1}{d^n}).
        \end{cases}
    \end{equation}

    The allowed spectra of AWP states in this Weyl chamber are obtained by imposing the additional inequality Eq.~\eqref{Equation:AWPSpectralCharacterization}. Then, through one step of the double-description method for polytope vertex enumeration~\cite{FukudaProdon1996}, we get the allowed AWP spectra. We will find the vertices of the subpolytope of the Weyl chamber obtained by imposing Eq.~\eqref{Equation:AWPSpectralCharacterization}, and then show that after allowing unordered spectra (coordinate permutations of the spectra in the Weyl chamber), most of these vertices are redundant.

    According to the double description method, the vertices of the new polytope obtained after imposing Eq.~\eqref{Equation:AWPSpectralCharacterization} on the Weyl chamber simplex have vertices consisting of (1)~the Weyl chamber vertices which are consistent with this inequality, and (2)~the intersection points of the hyperplane
    \begin{equation}\label{Equation:AWPHyperplane}
        H:=\left\{\lambda\in\mathbb{R}^{d^n}\;\bigg|\;\sum_{i=1}^{(d^n-1)/2}\lambda_i=\frac{1}{2}\right\}
    \end{equation}
    with edges connecting a vertex that violates Eq.~\eqref{Equation:AWPSpectralCharacterization} to a vertex that satisfies it.

    The only vertices of the Weyl chamber that are compatible with Eq.~\eqref{Equation:AWPSpectralCharacterization} are $v_{d^n}$ (the spectrum of the maximally mixed state) and $v_{d^n-1}$. $v_{d^n-1}$ lies on the hyperplane, so there are no new vertices obtained by intersecting $H$ with line segments of the form $[v_{d^n-1},v_k]$ for $k=1,\dots,d^n-2$.

    Since the Weyl chamber is a simplex, all pairs of vertices Eq.~\eqref{Equation:AWPWeylChamberVertices} are connected by edges. Therefore, we need to check for intersections of $H$ with all line segments connecting $v_{d^n}$ to all of $v_1,\dots,v_{d^n-2}$. We now compute the intersections $v_k':=H\cap[v_{d^n},v_k]$ explicitly. Parameterize the edge $[v_k,v_{d^n}]$ by $\nu_k(t)=(1-t)v_{d^n}+t v_k$, $t\in[0,1]$. Then
    \begin{equation}\label{Equation:EdgeParam}
        [\nu_k(t)]_i=
        \begin{cases}
            \frac{1}{d^n}+t\left(\frac{1}{k}-\frac{1}{d^n}\right), & 1\le i\le k,\\
            \frac{1-t}{d^n}, & k<i\le d^n.
        \end{cases}
    \end{equation}
    Then we enforce the linear equation defining the hyperplane $H$, $\sum_{i=1}^{(d^n-1)/2}[\nu_k(t)]_i=\frac{1}{2}$, and solve for $t$ to find the intersection point $v_k':=H\cap[v_{d^n},v_k]$. Here we have two cases. For $k=1,\dots,(d^n-1)/2-1$, we have
    \begin{equation}\label{Equation:AWPSpectrumVertexCase1}
        v_k'=\bigg(\underbrace{\frac{k+1}{k(d^n+1)},\dots}_{k\text{ times}},\underbrace{\frac{1}{d^n+1},\dots}_{d^n-k\text{ times}}\bigg),
    \end{equation}
    and for $k=(d^n-1)/2,\dots,d^n-2$, we have
    \begin{equation}\label{Equation:AWPSpectrumVertexCase2}
        v_k'=
        \bigg(\underbrace{\frac{1}{d^n-1},\dots}_{k\text{ times}},
        \underbrace{\frac{d^n-1-k}{(d^n-1)(d^n-k)},\dots}_{d^n-k\text{ times}}\bigg),
    \end{equation}

    To summarize, the vertices of the polytope of allowed ordered spectra of AWP states are $(v_1',\dots,v_{d^n-2}',v_{d^n-1},v_{d^n})$, where $v_k'$ takes the form Eq.~\eqref{Equation:AWPSpectrumVertexCase1} or Eq.~\eqref{Equation:AWPSpectrumVertexCase2} depending on whether $k<(d^n-1)/2$. The set of all (unordered) spectra of AWP states is the polytope formed from the convex hull of these vertices and all of their coordinate permutations. Now we show that after allowing coordinate permutations, all of these vertices except $v_1'$ and $v_{d^n-1}$ are redundant.

    For $k=1$, we have
    \begin{equation}\label{Equation:pDef}
        v_1':= \bigg(\frac{2}{d^n+1},\underbrace{\frac{1}{d^n+1},\dots,\frac{1}{d^n+1}}_{d^n-1\text{ times}}\bigg).
    \end{equation}
    Let $v_1'^{(j)}$ denote the coordinate permutation of $v_1'$ with the entry $2/(d^n+1)$ in position $j$. For $2\le k<(d^n-1)/2$, let $T$ be the set of indices of the $k$ larger components in Eq.~\eqref{Equation:AWPSpectrumVertexCase1}, so $|T|=k$. Then
    \begin{equation}\label{Equation:Case2ConvexComb}
        v_k'=\frac{1}{|T|}\sum_{j\in T} v_1'^{(j)}.
    \end{equation}
    Hence, every $v_k'$ with $2\le k<(d^n-1)/2$ lies in the convex hull of permutations of $v_1'$, and is redundant.
    
    Now, for $(d^n-1)/2\le k<d^n-1$, let $T:=\{k+1,\dots,d^n\}$ be the set of indices of the smaller entries in $v_k'$, so $|T|=d^n-k$. For each $j\in\{1,\dots,d^n\}$ let $v_{d^n-1}^{(j)}$ be the coordinate permutation of $v_{d^n-1}$ with a $0$ in the $j^{th}$ position. Then we can check that
    \begin{equation}\label{Equation:Case1ConvexComb}
        v_k'=\frac{1}{|T|}\sum_{j\in T}v_{d^n-1}^{(j)}.
    \end{equation}
    Therefore, every $v_k'$ with $(d^n-1)/2\le k\le d^n-2$ lies in the convex hull of the permutations of $v_{d^n-1}$, and so these vertices are also redundant.

    Finally, note that $v_d^n$ is also redundant after coordinate permutations, since
    \begin{equation}
        v_{d^n}=\frac{1}{d^n}\sum\limits_{j=1}^{d^n}v_{d^n-1}^{(j)}.
    \end{equation}

    Therefore, after allowing unordered spectra, all of the vertices $v_2',\dots,v_{d^n-2}'$ and the vertex $v_{d^n}$ are redundant, and so the set of AWP spectral polytope is the convex hull of $v_1'$ and $v_{d^n-1}$ and all of their coordinate permutations.
\end{proof}

For the case, $d=3$ and $n=1$, $v_1'$ is a convex mixture of $v_{d^n-1}$ and its coordinate permutations (in particular, $v_1'=(v_2^{(2)}+v_2^{(3)})/2$ using the notation from the proof above), so the polytope of AWP spectra is a simplex spanned by $v_2$ and its coordinate permutations, but this is not true in general. For all Hilbert space dimensions larger than $3$, $v_1'$ is not a mixture of $v_{d^n}$ and its coordinate permutations, so the AWP spectra are a convex hull of two simplexes, one generated by $v_1'$ and the other generated by $v_{d^n-1}$.

An illustration of the set of qutrit AWP spectra is shown in Figure~\ref{Figure:OneQutrit}.  Here we see that the AWP polytope (red) has vertices on spectra with only two non-vanishing eigenvalues equal to $\frac12$.  This implies that for qutrit states, any equal mixture of two orthogonal pure states is already absolutely Wigner-positive.  More generally, Theorem~\ref{Theorem:SpectralAWPCharacterization} says there exist AWP states with rank $d^n - 1$.

Relatedly, the muggle polytope is always strictly contained in the AWP polytope, which implies the existence of mixed states that are non-stabilizer (i.e.~magic) yet are absolutely Wigner-positive.  This can be seen as a strengthening of the notion of bound magic introduced in \cite{VeitchEmerson2012}, where it was shown that the Wigner polytope was a superset of the stabilizer polytope.

\section{Purity bounds}\label{Section:StabilizerInradius}

Here we discuss the inscribed Hilbert-Schmidt ball of the stabilizer and Wigner polytopes, which amount to sufficient conditions for inclusion based on purity.  We also discuss the circumradius of the Wigner polytope, which is a necessary condition.  We then do a comparison to the balls of positive semi-definite matrices and absolutely separable states.

\subsection{Inradius of the stabilizer polytope}

In this section we find the inradius of the stabilizer polytope, i.e., the radius of the largest ball which is contained in the stabilizer polytope. By unitary invariance, this is also the largest ball contained in the set of absolutely stabilizer states. A state having Bloch radius less than this inradius thus acts as a tight sufficient condition for a state to be a mixture of stabilizer states, as well as being absolutely stabilizer. Conversely, this radius also characterizes the how impure magic non-stabilizer states can be.  That is, there exist non-stabilizer states with Bloch radius only infinitesimally larger than the inradius of the stabilizer polytope. Interestingly, we find a drastic difference between the inradius of the stabilizer polytope for even versus odd qudit dimension $d$.

\bigskip

For a polytope $P$, let $r(P)$ denote the inradius of $P$, i.e., the radius of the largest ball of the same dimension as $P$ which is contained in $P$. Similarly, let $R(P)$ denote the circumradius of $P$, the radius of the smallest ball containing $P$. For a general polytope, a ball contained in the polytope that achieves the inradius is not necessarily centred at the origin. Therefore, to start, we need the following lemma.

\begin{Lemma}\label{Lemma:StabRadiusOriginCentred}
    The stabilizer polytope contains a $d^{2n}-1$-dimensional ball of radius $r(\STAB_0)$ centred at the maximally mixed state.
\end{Lemma}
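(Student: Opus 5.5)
The plan is to combine convexity of $\STAB_0$ with its invariance under the Clifford group, averaging a maximal inscribed ball to the centre. Suppose $\STAB_0$ contains a ball $B(c,r)$ of radius $r=r(\STAB_0)$ centred at some $c\in\Herm_0(\mathcal{H})$. The inradius is attained because $\STAB_0$ is a compact full-dimensional polytope: the radius of the largest ball centred at $x$ and contained in $\STAB_0$ is a continuous function of $x$ on a compact set. Our aim is to show that $B(0,r)\subseteq\STAB_0$.

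The first step is to use the symmetry. Clifford conjugation $X\mapsto gXg^\dagger$ maps pure stabilizer states to pure stabilizer states, so it preserves $\STAB$. It also fixes $\frac{1}{d^n}\mathbb{I}$, so it preserves $\STAB_0$. Each such map acts as a Hilbert-Schmidt isometry on $\Herm_0(\mathcal{H})$. Hence $g\,B(c,r)\,g^\dagger=B(gcg^\dagger,r)\subseteq\STAB_0$ for every $g\in\Cl$. In particular this holds for the Pauli operators $T_u$, which are Clifford elements.

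The second step is to average. Pick any $y$ with $\|y\|_2\le r$. For each $u\in\Z_d^{2n}$ the point $T_ucT_u^\dagger+y$ lies in $B(T_ucT_u^\dagger,r)\subseteq\STAB_0$. By convexity, the uniform average
\begin{equation}
\frac{1}{d^{2n}}\sum_{u\in\Z_d^{2n}}\left(T_ucT_u^\dagger+y\right)=\frac{1}{d^{2n}}\sum_{u}T_ucT_u^\dagger+y
\end{equation}
lies in $\STAB_0$. The Pauli twirl $\frac{1}{d^{2n}}\sum_uT_uXT_u^\dagger$ equals $\frac{\Tr(X)}{d^n}\mathbb{I}$, because the Paulis form a unitary 1-design. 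One can check this in the Pauli basis: conjugating $T_v$ by $T_u$ produces the phase $\omega^{[u,v]}$, and summing this phase over $u$ gives zero unless $v=0$. Since $\Tr(c)=0$, the first term vanishes. So $y\in\STAB_0$, and therefore $B(0,r)\subseteq\STAB_0$. Shifting back by $\frac{1}{d^n}\mathbb{I}$ gives the ball around the maximally mixed state.

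There is no real obstacle here. The only points that need care are the existence of a maximizing ball, which follows from compactness, and the twirling identity. The latter holds for all prime $d$, including $d=2$, because the scalar phases in the definition of $T_u$ cancel under conjugation.
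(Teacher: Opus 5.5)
Your proof is correct and is essentially the paper's argument: Clifford invariance of $\STAB_0$ moves a maximal inscribed ball around, and convexity averages its centre back to the origin. The differences are minor refinements in your favour. Averaging over the Pauli subgroup needs only the 1-design (twirling) identity, whereas the paper appeals to the 2- or 3-design property of the full Clifford group. You also explicitly justify that the inradius is attained.
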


\begin{proof}[Proof of Lemma~\ref{Lemma:StabRadiusOriginCentred}]
    It suffices to show that $\STAB_0$ contains a $d^{2n}-1$-dimensional ball of radius $r(\STAB_0)$ which is centred at the origin in $\Herm_0(\mathcal{H})$, i.e., the zero operator $\bbzero$.
    
    Let $B(c,r)$ denote a ball of radius $r$ centred at $c$. By definition of $r(\STAB_0)$, $\STAB_0$ contains a ball of radius $r(\STAB_0)$. Let $B(A,r(\STAB_0))=A+B(\bbzero,r(\STAB_0))$ be such a ball. If $A=\bbzero$, then we are done. If $A\ne\bbzero$, then consider the action of the Clifford group on this ball. We get an orbit of balls,
    \begin{equation}
        \left\{g(A+B(\bbzero,r(\STAB_0)))g^\dagger\;\bigg|\;g\in\Cl\right\}.
    \end{equation}
    Since the stabilizer polytope is closed under the action of the Clifford group, each of these balls is contained in $\STAB_0$.

    By convexity, the stabilizer polytope also contains a uniform mixture of these balls with respect to the Minkowski sum,
    \begin{equation}
        \frac{1}{|\Cl|}\sum\limits_{g\in\Cl}g\left(A+B(\bbzero,r(\STAB_0))\right)g^\dagger\subset\STAB_0.
    \end{equation}
    But since conjugation by unitaries preserves the Hilbert-Schmidt inner product,
    \begin{equation}
        gB(\bbzero,r(\STAB_0))g^\dagger=B(\bbzero,r(\STAB_0)),\;\forall g\in\Cl.
    \end{equation}
    Further, the Clifford group on prime dimensional qudits is always a unitary $2$-design~\cite{GrossEisert2007}, and for qubits it is a unitary $3$-design~\cite{Webb2016,Zhu2017}, so
    \begin{align}
        \frac{1}{|\Cl|}&\sum\limits_{g\in\Cl}g\left(A+B(\bbzero,r(\STAB_0)\right)g^\dagger\\
        =& \left[\frac{1}{|\Cl|}\sum\limits_{g\in\Cl}gAg^\dagger\right]+B(\bbzero,r(\STAB_0))\\
        =& B(\bbzero,r(\STAB_0))\subset\STAB_0.
    \end{align}
    This completes the proof.
\end{proof}

Essentially the same argument shows that $\Lambda$ is contained in $d^{2n}-1$-dimensional ball of radius $R(\Lambda_0)$ which is centred at the maximally mixed state. To find the inradius of the stabilizer polytope, we use the following lemma, a variant of one of the cases in Ref.~\cite[\S1.2]{GritzmannKlee1992}.

\begin{Lemma}\label{Lemma:InradiusPolarCircumradius}
    Let $P$ be a polytope. If $P$ contains a ball of radius $r(P)$ centred at the origin, and $P^\circ$ is contained in a ball of radius $R(P^\circ)$ centred at the origin, then $r(P)R(P^\circ)=1$.
\end{Lemma}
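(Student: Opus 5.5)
We prove the two inequalities $r(P)R(P^\circ)\le 1$ and $r(P)R(P^\circ)\ge 1$ separately. We work in a finite-dimensional real inner product space (here $\Herm_0(\mathcal{H})$ with the Hilbert--Schmidt inner product) and use the polar $P^\circ=\{y\;|\;\langle x,y\rangle\le1\;\forall x\in P\}$. Write $r:=r(P)$ and $R:=R(P^\circ)$. By hypothesis $B(\bbzero,r)\subseteq P$ and $P^\circ\subseteq B(\bbzero,R)$. The whole argument rests on two standard facts about polarity:
\begin{itemize}
\item it reverses inclusion;
\item it sends balls to balls, since $B(\bbzero,s)^\circ=B(\bbzero,1/s)$ by Cauchy--Schwarz, with equality attained along the direction of $x$.
\end{itemize}
We also use the bipolar theorem $P^{\circ\circ}=P$, which holds because $P$ is a polytope containing the origin; this is \cite[Theorem~2.11]{Ziegler1995}, already invoked in Lemma~\ref{Lemma:STABHRepresentation}.

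\emph{First inequality, $rR\le1$.} Polarize the inclusion $B(\bbzero,r)\subseteq P$. Order reversal gives $P^\circ\subseteq B(\bbzero,r)^\circ=B(\bbzero,1/r)$. So $P^\circ$ lies in a ball of radius $1/r$, and minimality of the circumradius gives $R\le 1/r$.

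\emph{Second inequality, $rR\ge1$.} Polarize the inclusion $P^\circ\subseteq B(\bbzero,R)$. Order reversal and the bipolar theorem give $P=P^{\circ\circ}\supseteq B(\bbzero,R)^\circ=B(\bbzero,1/R)$. So $P$ contains a ball of radius $1/R$, and maximality of the inradius gives $r\ge 1/R$. Combining the two inequalities yields $rR=1$.

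The one point needing care is a precondition: we must have $0<r$ and $R<\infty$. This holds when $P$ is full-dimensional with the origin in its interior, which is the case in the intended application, since $\STAB_0$ contains a ball around $\bbzero$. If instead the two radii were realized by balls with other centres, the identity could fail. That is exactly why the statement assumes origin-centred balls, and why Lemma~\ref{Lemma:StabRadiusOriginCentred} and the analogous remark about $\Lambda_0$ were established first.

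Finally, we apply the lemma with $P=\STAB_0$. Since $\STAB_0^\circ=-d^n\Lambda_0$ and the circumradius is invariant under $X\mapsto -X$, we get $R(\STAB_0^\circ)=d^nR(\Lambda_0)$. Hence $r(\STAB_0)=1/(d^nR(\Lambda_0))$.
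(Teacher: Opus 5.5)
Your proof is correct and matches the paper's argument essentially step for step. You polarize $B(\bbzero,r)\subseteq P$ to get $R\le 1/r$, and you polarize $P^\circ\subseteq B(\bbzero,R)$ and apply the bipolar theorem to get $r\ge 1/R$. Your remark that one needs $0<r$ and $R<\infty$ (i.e., the origin in the interior of a full-dimensional $P$) is a small precondition that the paper leaves implicit.
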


\begin{proof}[Proof of Lemma~\ref{Lemma:InradiusPolarCircumradius}]
    Let $B:=B(0,1)$, $r:=r(P)$, and $R:=R(P^\circ)$. By assumption, $rB\subseteq P$ and
    $P^\circ\subseteq RB$. We use three facts about polarity~\cite{Ziegler1995,NariciBeckenstein2010}: (1)~taking polar duals reverses inclusions, (2)~since $P$ contains the origin, $(P^\circ)^\circ=P$, and (3)~for any $t>0$, $(tB)^\circ=\frac{1}{t}B$. Then, we have
    \begin{equation}
        P^\circ \subseteq (rB)^\circ=\frac1r\,B
    \end{equation}
    and
    \begin{equation}
        \frac1R\,B=(RB)^\circ\subseteq (P^\circ)^\circ=P.
    \end{equation}
    Therefore, $R\le \frac1r$ and $r\ge \frac1R$, so $rR=1$.
\end{proof}    

Using this Lemma~\ref{Lemma:InradiusPolarCircumradius}, our strategy for finding the inradius of the multiqudit stabilizer polytope proceeds in two steps. First, we relate the inradius of $\STAB_0$ to the circumradius of $\Lambda_0$ via $r(\STAB_0)=1/R(\Lambda_0)$, and second we find the circumradius of $\Lambda_0$. Since the vertices of the $\Lambda$ polytopes have a different structure for $d=2$ versus odd-prime $d$, here we split into two cases.

\subsubsection{Qubits}

The following result gives us a bound on the inradius.

\begin{Lemma}\label{Lemma:CNCRadius}
    For any CNC operator $A_\Omega^\gamma$, we have
    \begin{equation}
        \Tr({A_\Omega^\gamma}^2)=\frac{|\Omega|}{d^n}.
    \end{equation}
\end{Lemma}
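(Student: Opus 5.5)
Expanding the square of the defining sum in Eq.~\eqref{Equation:CNCOperators} and using the trace orthogonality of generalized Pauli operators should give the result directly. Writing
\begin{equation*}
    {A_\Omega^\gamma}^2=\frac{1}{d^{2n}}\sum_{u,v\in\Omega}\omega^{-\gamma(u)-\gamma(v)}T_uT_v,
\end{equation*}
I take the trace term by term. For $u,v\in\Z_d^{2n}$, the product $T_uT_v$ is proportional to $T_{u+v}$ with a phase of unit modulus. Since $\Tr(T_w)=0$ unless $w=0$, and $\Tr(T_0)=d^n$, the only surviving terms are those with $v=-u$.

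The first step is to check that $-u\in\Omega$ whenever $u\in\Omega$. This follows from closure under inference, since $u$ commutes with itself. Repeatedly adding $u$ gives $2u,3u,\dots$, and so $(d-1)u=-u\in\Omega$. For qubits this is trivial because $-u=u$.

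The second step is to show that each diagonal term $u$, $v=-u$ contributes exactly $d^n$, that is, $\omega^{-\gamma(u)-\gamma(-u)}\Tr(T_uT_{-u})=d^n$. The cleanest route avoids computing the phases directly. The operator $A_\Omega^\gamma$ is Hermitian by construction (CNC operators are vertices of $\Lambda$, or one checks $\omega^{-\gamma(-u)}T_{-u}=(\omega^{-\gamma(u)}T_u)^\dagger$). Then
\begin{equation*}
    \Tr({A_\Omega^\gamma}^2)=\Tr(A_\Omega^\gamma{A_\Omega^\gamma}^\dagger)=\frac{1}{d^{2n}}\sum_{u,v\in\Omega}\omega^{-\gamma(u)+\gamma(v)}\Tr(T_uT_v^\dagger).
\end{equation*}
Orthogonality, $\Tr(T_uT_v^\dagger)=d^n\delta_{u,v}$, then gives $\frac{1}{d^{2n}}|\Omega|\,d^n=|\Omega|/d^n$.

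The main obstacle is justifying Hermiticity, namely that $\omega^{-\gamma(-u)}T_{-u}=(\omega^{-\gamma(u)}T_u)^\dagger$. I would derive this from the noncontextuality condition Eq.~\eqref{Equation:NoncontextualityCondition} applied to the commuting pair $(u,-u)$. It gives $\gamma(u)+\gamma(-u)=\gamma(0)-\beta(u,-u)$, and since $\gamma(0)=0$ (from the pair $(0,0)$ with $\beta(0,0)=0$), this becomes $\gamma(u)+\gamma(-u)=-\beta(u,-u)$. Combining this with $T_uT_{-u}=\omega^{-\beta(u,-u)}\mathbb{I}$ from Eq.~\eqref{Equation:BetaFunction} yields $\omega^{-\gamma(u)-\gamma(-u)}T_uT_{-u}=\mathbb{I}$. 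So each diagonal term has trace exactly $d^n$ even without appealing to Hermiticity, and summing over the $|\Omega|$ terms completes the proof.
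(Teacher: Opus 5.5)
Your proof is correct and is essentially the paper's argument: you expand the square, use trace orthogonality of the generalized Pauli operators so that only the $|\Omega|$ pairs $v=-u$ survive, and check that each contributes $d^n$. The only difference is that the paper simply cites Hermiticity of CNC operators to write $\Tr({A_\Omega^\gamma}^2)=\Tr({A_\Omega^\gamma}^\dagger A_\Omega^\gamma)$, whereas you derive the underlying identity $\omega^{-\gamma(u)-\gamma(-u)}T_uT_{-u}=\mathbb{I}$ directly from closure under inference and the noncontextuality condition, which makes the argument self-contained (your parenthetical that CNC operators are vertices of $\Lambda$ is false in general, e.g.\ $\Omega=\{0\}$ gives $\mathbb{I}/d^n$, but nothing in your proof depends on it).
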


\begin{proof}[Proof of lemma~\ref{Lemma:CNCRadius}]
    By direct calculation we obtain
    \begin{equation}
    \begin{aligned}
        \Tr({A_\Omega^\gamma}^2) =& \Tr({A_\Omega^\gamma}^\dagger A_\Omega^\gamma)\\
        =& \frac{1}{d^{2n}}\sum\limits_{a,b\in\Omega}\omega^{\gamma(a)-\gamma(b)}\Tr(T_a^\dagger T_b)\\
        =& \frac{1}{d^n}\sum\limits_{a,b\in\Omega}\omega^{\gamma(a)-\gamma(b)}\delta_{a,b}\\
        =&\frac{1}{d^n}\sum\limits_{a\in\Omega}1=\frac{|\Omega|}{d^n}.
    \end{aligned}
    \end{equation}
    In the first line we use the fact that CNC operators are Hermitian~\cite{RaussendorfZurel2020,ZurelHeimendahl2024b}, in the second line we use the definition of CNC operators Eq.~\eqref{Equation:CNCOperators}, and the third line we use the orthogonality of Pauli operators, $\Tr(T_a^\dagger T_b)=d^n\delta_{a,b}$.
\end{proof}

Using the classification of CNC sets Theorem~\ref{Theorem:CNCClassification}, we can see that the size of a maximal CNC set of the form Eq.~\eqref{Equation:QubitCNCSets} with parameters $m$ is $|\Omega|=(2m+2)2^{n-m}$. Therefore, among CNC operators $A_\Omega^\gamma$, the value $\Tr({A_\Omega^\gamma}^2)$ is maximized for the case $m=1$, with $\Tr({A_\Omega^\gamma}^2)=2$.

We make the following conjecture.

\begin{Conjecture}\label{Conjecture:LambdaOutradius}
    For any number of qubits, for any $X\in\Herm_1(\mathcal{H})$, if $\Tr(X^2)>2$ then $X\notin\Lambda$.
\end{Conjecture}

In Appendix~\ref{Section:ConjectureEvidence}, we show that this conjecture holds for $n=1$ and $n=2$, and we provide evidence that it holds generally. Now, with the following theorem, we find a formula for the inradius of the stabilizer polytope conditional on Conjecture~\ref{Conjecture:LambdaOutradius}.

\begin{Theorem}\label{Theorem:StabilizerInradius}
    If Conjecture~\ref{Conjecture:LambdaOutradius} is true, then the Hilbert-Schmidt inradius of the $n$-qubit stabilizer polytope is
    \begin{equation}\label{qubit_MMS_HS_radius}
       r(\ASTAB_0) = r(\STAB_0)=\frac{1}{\sqrt{2^n(2^{n+1}-1)}}.
    \end{equation}
\end{Theorem}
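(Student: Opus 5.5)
The plan is to combine Lemma~\ref{Lemma:StabRadiusOriginCentred} with Lemma~\ref{Lemma:InradiusPolarCircumradius} applied to $P=\STAB_0$. Because of the scaling in Eq.~\eqref{lambda_polar_dual_dnSTAB}, the polar of $\STAB_0$ is $-d^n\Lambda_0$. So we get $r(\STAB_0)\,R(d^n\Lambda_0)=1$, i.e.\ $r(\STAB_0)=1/(2^nR(\Lambda_0))$. The hypotheses of Lemma~\ref{Lemma:InradiusPolarCircumradius} hold for the following reasons. Lemma~\ref{Lemma:StabRadiusOriginCentred} gives the origin-centred inball. The analogous Clifford-twirling argument, noted after that lemma, shows that $\Lambda_0$ sits in an origin-centred ball of radius $R(\Lambda_0)$. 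The sign flip is harmless since balls are symmetric. The equality $r(\ASTAB_0)=r(\STAB_0)$ then follows from unitary invariance of Hilbert--Schmidt balls centred at $\bbzero$. Such a ball lies in $\STAB_0$ iff it lies in every $U\STAB_0U^\dagger$, hence iff it lies in $\ASTAB_0$. Conversely, $\ASTAB_0\subseteq\STAB_0$ bounds $r(\ASTAB_0)$ by $r(\STAB_0)$. Here the origin-centring again uses unitary invariance, via the same twirling argument with the Haar measure.

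It remains to compute $R(\Lambda_0)$, the maximal Hilbert--Schmidt norm of a point of $\Lambda_0$. Since $\Lambda_0$ is a polytope containing the origin, this maximum is attained at a vertex. For $X\in\Lambda$ we have $\Tr((X-\mathbb{I}/2^n)^2)=\Tr(X^2)-1/2^n$. For the upper bound, Conjecture~\ref{Conjecture:LambdaOutradius} gives $\Tr(X^2)\le2$ for all $X\in\Lambda$. For the lower bound, Lemma~\ref{Lemma:CNCRadius} applies to the $m=1$ CNC vertices, which by Theorem~\ref{Theorem:CNCClassification} have $|\Omega|=4\cdot2^{n-1}$. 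These give $\Tr(X^2)=2$, and they are vertices of $\Lambda$ by the cited result. Hence $R(\Lambda_0)^2=2-2^{-n}=(2^{n+1}-1)/2^n$.

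Substituting, we get $r(\STAB_0)=\frac{1}{2^n}\sqrt{\frac{2^n}{2^{n+1}-1}}=\frac{1}{\sqrt{2^n(2^{n+1}-1)}}$. As a sanity check, $n=1$ gives $1/\sqrt6$, which matches the one-qubit ball $x^2+y^2+z^2\le1/3$ in Bloch coordinates, since $\Tr(((xX+yY+zZ)/2)^2)=(x^2+y^2+z^2)/2$.

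The main obstacle is bookkeeping the polar scaling correctly: the relevant polar is $d^n\Lambda_0$, not $\Lambda_0$, and the sign $-1$ must be dropped. I would verify this through the explicit one-qubit case. All the genuine difficulty lies in the conjectured global bound $\Tr(X^2)\le2$ on $\Lambda$, which we are assuming.
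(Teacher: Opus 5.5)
Your proposal is correct and matches the paper's proof: you apply Lemma~\ref{Lemma:InradiusPolarCircumradius} to $\STAB_0$, whose polar is $-2^n\Lambda_0$, and obtain $R(\Lambda_0)^2=2-2^{-n}$ from Conjecture~\ref{Conjecture:LambdaOutradius} together with the $m=1$ CNC vertices (via Lemma~\ref{Lemma:CNCRadius}). You are more explicit than the paper about the origin-centring hypotheses and about the maximum $\Tr(X^2)=2$ actually being attained, both of which the paper leaves implicit.
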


\begin{proof}[Proof of Theorem~\ref{Theorem:StabilizerInradius}]
    The first equality holds by unitary invariance.  Assuming Conjecture~\ref{Conjecture:LambdaOutradius} is true, we have $\max_{X\in\Lambda}\Tr(X^2)=2$, or equivalently,
    \begin{equation}
        R(\Lambda_0)=\sqrt{\max_{X_0\in\Lambda_0}\Tr(X_0^2)}=\sqrt{2-\frac{1}{2^n}}.
    \end{equation}
    Therefore,
    \begin{equation}
        R(-2^n\Lambda_0)=2^nR(\Lambda_0)=\sqrt{2^n(2^{n+1}-1)}.
    \end{equation}
    Then using Lemma~\ref{Lemma:InradiusPolarCircumradius} and using polar duality of $\STAB_0$ and $-2^n\Lambda_0$, the result follows.
\end{proof}

\begin{Corollary}
    Assuming Conjecture~\ref{Conjecture:LambdaOutradius} is true, then every state $\rho$ with purity
    \begin{equation}\label{insphere_purity_qubits}
        \Tr(\rho^2)\le\frac{1}{2^n}+\frac{1}{2^n(2^{n+1}-1)}=\frac{1}{2^{n}-\frac12}
    \end{equation}
    is a mixture of stabilizer states.
\end{Corollary}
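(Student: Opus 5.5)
The plan is to reduce the corollary to Theorem~\ref{Theorem:StabilizerInradius} together with Lemma~\ref{Lemma:StabRadiusOriginCentred}, by translating the Hilbert-Schmidt ball condition into a purity condition. Write any state as $\rho=\frac{1}{2^n}\mathbb{I}+\rho_0$ with $\rho_0\in\Herm_0(\mathcal{H})$. Since $\Tr(\rho_0)=0$, the cross term vanishes and
\begin{equation}
    \Tr(\rho^2)=\frac{1}{2^n}+\Tr(\rho_0^2).
\end{equation}
So the purity hypothesis is exactly $\norm{\rho_0}_2^2\le\frac{1}{2^n(2^{n+1}-1)}$. In other words, $\rho_0$ lies in the closed Hilbert-Schmidt ball $B(\bbzero,r)$ in $\Herm_0(\mathcal{H})$ with $r=\frac{1}{\sqrt{2^n(2^{n+1}-1)}}$.

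Next, assuming Conjecture~\ref{Conjecture:LambdaOutradius}, Theorem~\ref{Theorem:StabilizerInradius} gives $r(\STAB_0)=r$. To use this we need the inscribed ball to be centred at the origin, and Lemma~\ref{Lemma:StabRadiusOriginCentred} supplies exactly that: $B(\bbzero,r)\subseteq\STAB_0$. Alternatively, one can read this off directly from the proof of Lemma~\ref{Lemma:InradiusPolarCircumradius}, which shows $\frac{1}{R}B\subseteq P$ for $P=\STAB_0$ and $P^\circ=-2^n\Lambda_0$, with $R=R(-2^n\Lambda_0)=\sqrt{2^n(2^{n+1}-1)}$. Here the centring of $\Lambda_0$'s circumball at the origin follows from the same Clifford twirling argument. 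Either way, $\rho_0\in\STAB_0$, hence $\rho\in\STAB$, so $\rho$ is a mixture of stabilizer states. Unitary invariance of the purity condition also gives $\rho\in\ASTAB$ for free.

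Finally, check the arithmetic identity
\begin{equation}
    \frac{1}{2^n}+\frac{1}{2^n(2^{n+1}-1)}=\frac{2^{n+1}}{2^n(2^{n+1}-1)}=\frac{2}{2^{n+1}-1}=\frac{1}{2^n-\frac12}.
\end{equation}
The only point needing care is closedness. The inclusion must hold for the closed ball so that the inequality $\le$, rather than $<$, is allowed. This is fine because $\STAB_0$ is a closed polytope: it contains the closure of any open ball it contains, and the polar-duality argument yields the closed ball directly. I expect no real obstacle beyond this; all the substantive content lives in the conjecture and Theorem~\ref{Theorem:StabilizerInradius}.
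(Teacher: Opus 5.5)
Your proposal is correct and matches the paper's implicit argument. The paper gives no separate proof, but the corollary follows from Theorem~\ref{Theorem:StabilizerInradius} and the origin-centred inball of Lemma~\ref{Lemma:StabRadiusOriginCentred}, using $\Tr(\rho^2)=2^{-n}+\Tr(\rho_0^2)$ and the closedness of $\STAB_0$, exactly as you do. In your alternative route, the Clifford twirl is not needed for the circumball: the conjecture directly gives $\Tr(X_0^2)\le 2-2^{-n}$ for all $X_0\in\Lambda_0$, which is already an origin-centred bound.
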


This corollary demonstrates that our Conjecture~\ref{Conjecture:LambdaOutradius} is equivalent to Conjecture~1 in Ref.~\cite{LiuRoth2025}, which constructed a PPT-like analog for the resource theory of magic. This connection offers an additional geometric interpretation (conditioned on the validity of our conjecture) of the purity in Ref.~\cite{LiuRoth2025} as corresponding to the inradius of the stabilizer polytope.

\subsubsection{Odd-prime-dimensional qudits}

The structure of the argument is similar in the case of odd-prime-dimensional qudits, the main difference is that in this case, we know that the vertices of $\Lambda$ with the largest Hilbert-Schmidt norm are the phase space point operators of the Wigner function~\cite{Gross2006, Gross2008, ZurelHeimendahl2024}. This gives us the following (unconditional) theorem.

\begin{Theorem}\label{Theorem:StabilizerInradiusQudits}
    For any number of qudits $n$ of any odd-prime dimension $d$, the Hilbert-Schmidt inradius of the $n$-qudit stabilizer polytope is
    \begin{equation}\label{qudit_MMS_HS_radius}
        r(\ASTAB_0)=r(\STAB_0)=\frac{1}{\sqrt{d^n(d^{2n}-1)}}.
    \end{equation}
\end{Theorem}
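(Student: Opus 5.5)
The plan mirrors the qubit argument of Theorem~\ref{Theorem:StabilizerInradius}, except that the step played there by Conjecture~\ref{Conjecture:LambdaOutradius} is replaced by a known fact. The first equality $r(\ASTAB_0)=r(\STAB_0)$ follows from unitary invariance of Hilbert-Schmidt balls. Any origin-centred ball inside $\STAB_0$ is automatically inside every $U\STAB_0U^\dagger$, hence inside $\ASTAB_0$. Conversely, $\ASTAB_0\subseteq\STAB_0$, and the same Clifford-averaging argument as in Lemma~\ref{Lemma:StabRadiusOriginCentred} shows that the largest ball in $\ASTAB_0$ may be taken centred at the origin. For the second equality, I will use Lemma~\ref{Lemma:StabRadiusOriginCentred}, which says $\STAB_0$ contains an origin-centred ball of radius $r(\STAB_0)$. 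I will also use the remark after it, that $\Lambda_0$ lies in an origin-centred ball of radius $R(\Lambda_0)$. Applying Lemma~\ref{Lemma:InradiusPolarCircumradius} to the polar pair $\STAB_0$ and $-d^n\Lambda_0$ then gives
\begin{equation*}
    r(\STAB_0)=\frac{1}{R(-d^n\Lambda_0)}=\frac{1}{d^n R(\Lambda_0)}.
\end{equation*}

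It therefore remains to compute $R(\Lambda_0)$, equivalently $\max_{X\in\Lambda}\Tr(X^2)$. Since $\Tr(X^2)$ is convex, this maximum is attained at a vertex of $\Lambda$. Among vertices, the phase space point operators $A_u$ have the largest Hilbert-Schmidt norm (the cited result of Gross and of Zurel--Heimendahl). By Lemma~\ref{Lemma:CNCRadius} with $\Omega=\Z_d^{2n}$, or directly from Lemma~\ref{Lemma:PhasePointEigenvalues}, this norm is $\Tr(A_u^2)=d^n$. Subtracting the identity component gives
\begin{equation*}
    R(\Lambda_0)^2=d^n-\frac{1}{d^n}=\frac{d^{2n}-1}{d^n}.
\end{equation*}
Hence $d^nR(\Lambda_0)=\sqrt{d^n(d^{2n}-1)}$, which yields the claimed formula.

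The main obstacle is justifying that no vertex of $\Lambda$, and hence no point of $\Lambda$, has purity exceeding $d^n$. I will try to prove this directly rather than only cite it. Expand $X\in\Lambda$ in the Pauli basis, $X=d^{-n}\sum_u c_u T_u$ with $c_0=1$. The defining inequalities $\Tr(X\ketbra{L,r}{L,r})\ge0$, averaged over the $d$ translates of each line, give the following: for each nonzero $u$, the values $\mathrm{Re}(\omega^{-j}c_u)$ summed appropriately are controlled by the nonnegativity of the $d$ outcome probabilities on each Lagrangian. A bound $|c_u|\le1$ for every $u$ would then give
\begin{equation*}
    \Tr(X^2)=d^{-n}\sum_u|c_u|^2\le d^{-n}\cdot d^{2n}=d^n,
\end{equation*}
with equality exactly at phase point operators. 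The delicate point is that nonnegativity on stabilizer states does not obviously force $|c_u|\le 1$ coordinatewise, since only sums over isotropic subspaces are constrained. If the direct argument fails, I will fall back on the cited structural result that phase space point operators are the maximal-norm vertices.
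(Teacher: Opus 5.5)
Your proposal is correct and follows the paper's proof. Lemmas~\ref{Lemma:StabRadiusOriginCentred} and~\ref{Lemma:InradiusPolarCircumradius} reduce the claim to $\max_{X\in\Lambda}\Tr(X^2)=d^n$, and the paper obtains this exactly as you sketch: from the coefficient bound $|\Tr(T_aA_\alpha)|\le 1$ (cited as Lemma~3 of Ref.~\cite{ZurelHeimendahl2024}) together with $\Tr(X^2)=d^{-n}\sum_u|c_u|^2\le d^n$.

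Your ``delicate point'' is actually easy. For $u\neq0$ and any Lagrangian $L\ni u$, the eigenprojector $\Pi^j_{\langle u\rangle}=\sum_{r:\,r(u)=j}\Pi_L^r$ of $T_u=\sum_j\omega^j\Pi^j_{\langle u\rangle}$ is a sum of pure stabilizer projectors. Hence $p_j:=\Tr(X\Pi^j_{\langle u\rangle})$ is a probability distribution for every $X\in\Lambda$, and $|\Tr(XT_u)|=|\sum_j\omega^jp_j|\le1$, which makes your argument self-contained. One aside is wrong but unused: equality $\Tr(X^2)=d^n$ does not hold only at phase point operators, since for $n=1$ the nonlinear CNC vertices also have all $|c_u|=1$.
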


\begin{proof}[Proof of Theorem~\ref{Theorem:StabilizerInradiusQudits}]
    The first equality holds by unitary invariance.  Phase space point operators of the Wigner function are vertices of $\Lambda$~\cite{ZurelHeimendahl2024}. They are CNC-type operators, $A_\Omega^\gamma$, with $\Omega=\Z_d^{2n}$ and $\gamma:\Z_d^{2n}\to\Z_d$ a linear function. By Lemma~\ref{Lemma:CNCRadius}, these operators have
    \begin{equation}
        \Tr({A_{\Z_d^{2n}}^\gamma}^2)=\frac{|\Z_d^{2n}|}{d^n}=d^n.
    \end{equation}
    
    With Lemma~3 from Ref.~\cite{ZurelHeimendahl2024}, coefficients of the Pauli basis expansion for any vertex $A_\alpha$ of $\Lambda$ satisfy
    \begin{equation}
        |\Tr(T_aA_\alpha)|\le1,\quad\forall a\in\Z_d^{2n}.
    \end{equation}
    Therefore,
    \begin{align}
        \Tr(A_\alpha^2)=&\frac{1}{d^n}\sum\limits_{v\in\Z_d^{2n}}|\Tr(T_vA_\alpha)|^2\\
        \le&\frac{1}{d^n}\sum_{v\in\Z_d^{2n}}|\Tr(T_vA_\alpha)|\le d^n.
    \end{align}
    That is, no vertex $A_\alpha$ of $\Lambda$ can have $\Tr({A_\alpha}^2)>d^n$, and so $\max_{X\in\Lambda}\Tr(X^2)=d^n$.

    Equivalently,
    \begin{equation}
        R(\Lambda_0)=\sqrt{\max_{X_0\in\Lambda_0}\Tr(X_0^2)}=\sqrt{d^n-\frac{1}{d^n}}
    \end{equation}
    Therefore,
    \begin{equation}
        R(-d^n\Lambda_0)=d^nR(\Lambda_0)=\sqrt{d^n(d^{2n}-1)}.
    \end{equation}
    Finally, using Lemma~\ref{Lemma:InradiusPolarCircumradius} and polar duality of $\STAB_0$ and $-d^n\Lambda_0$, we get the stated result.
\end{proof}

\begin{Corollary}
    Every state $\rho$ with purity
    \begin{equation}
        \Tr(\rho^2)\le\frac{1}{d^n}+\frac{1}{d^n(d^{2n}-1)}=\frac{1}{d^{n}-\frac{1}{d^n}}
    \end{equation}
    is a mixture of stabilizer states.
\end{Corollary}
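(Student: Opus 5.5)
This corollary follows from Theorem~\ref{Theorem:StabilizerInradiusQudits} by translating the inradius into a statement about purity. I will write $\rho=\frac{1}{d^n}\mathbb{I}+\rho_0$ with $\rho_0\in\Herm_0(\mathcal{H})$, and compute the purity directly:
\begin{equation}
\Tr(\rho^2)=\frac{1}{d^n}+\frac{2}{d^n}\Tr(\rho_0)+\Tr(\rho_0^2)=\frac{1}{d^n}+\Tr(\rho_0^2),
\end{equation}
since $\rho_0$ is traceless. The purity hypothesis is therefore equivalent to
\begin{equation}
\|\rho_0\|_2=\sqrt{\Tr(\rho_0^2)}\le\frac{1}{\sqrt{d^n(d^{2n}-1)}}=r(\STAB_0).
\end{equation}

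Next I will use the fact that the inscribed ball is centred at the origin. Lemma~\ref{Lemma:StabRadiusOriginCentred} says that $\STAB_0$ contains the closed Hilbert-Schmidt ball $B(\bbzero,r(\STAB_0))$ inside $\Herm_0(\mathcal{H})$. Closedness of that ball is fine, because $\STAB_0$ is a closed polytope. Theorem~\ref{Theorem:StabilizerInradiusQudits} supplies the value of $r(\STAB_0)$. It follows that $\rho_0\in\STAB_0$, and hence $\rho\in\STAB$, i.e., $\rho$ is a mixture of stabilizer states.

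To finish, I will verify the closed-form identity $\frac{1}{d^n}+\frac{1}{d^n(d^{2n}-1)}=\frac{d^{2n}}{d^n(d^{2n}-1)}=\frac{d^n}{d^{2n}-1}=\frac{1}{d^n-d^{-n}}$.

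No step here is a real obstacle. The only point needing care is that the largest ball has to be centred at the maximally mixed state; an arbitrary inscribed ball of the same radius would not give a purity condition. Lemma~\ref{Lemma:StabRadiusOriginCentred} is exactly what supplies this centring.
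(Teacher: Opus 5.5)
Your proposal is correct and is essentially the argument the paper intends (the corollary is stated without a separate proof). Writing $\rho=\frac{1}{d^n}\mathbb{I}+\rho_0$ turns the purity hypothesis into $\|\rho_0\|_2\le r(\STAB_0)$, and then Lemma~\ref{Lemma:StabRadiusOriginCentred} together with the value from Theorem~\ref{Theorem:StabilizerInradiusQudits} places $\rho$ in $\STAB$.
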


\begin{figure*}
    \centering
    \includegraphics[width=0.375\linewidth]{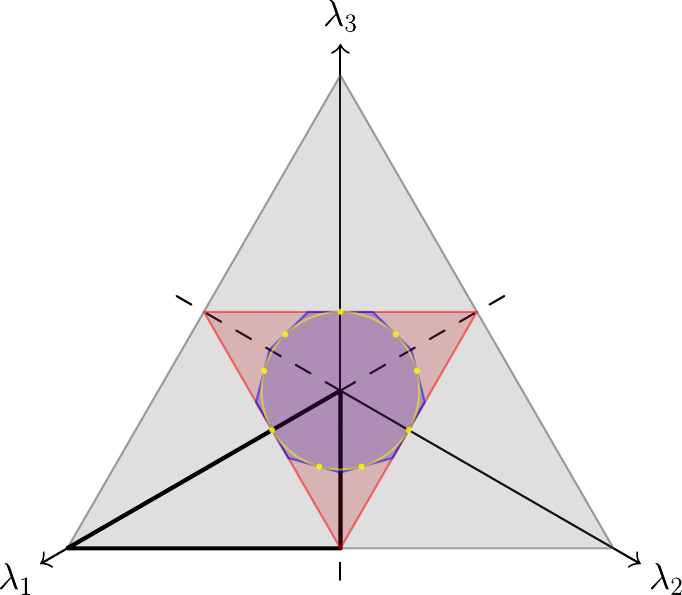}
    \hspace{0.05\linewidth}
    \includegraphics[width=0.525\linewidth]{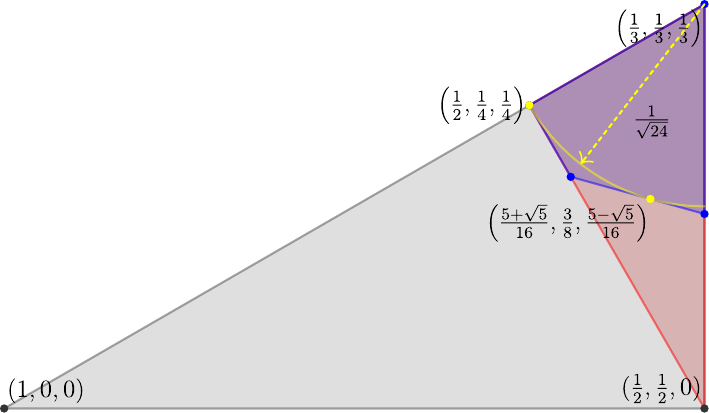}
    \caption{Left: Ternary plot of the spectra of single-qutrit absolutely stabilizer states (blue), absolutely Wigner-positive states (red), together with the inradius (yellow) common to both.  Yellow dots denote absolutely stabilizer states on the boundary of both $\ASTAB$ and its inradius.  Right: the associated Weyl chamber with the spectra of these special points displayed together with the Hilbert-Schmidt radius $r(\ASTAB)=\frac{1}{\sqrt{24}}$.} \label{Figure:OneQutritInradius}
\end{figure*}

\subsection{Inradius and circumradius of AWP states}

Using similar methods as above, we can get the inradius of the Wigner polytope.

\begin{Theorem}\label{Theorem:AWPInradiusOutradius}
    For any number of qudits $n$ of any odd-prime Hilbert space dimension $d$, the Hilbert-Schmidt inradius of $\WP_0$ and $\AWP_0$ is
    \begin{equation}
        r(\AWP_0)=r(\WP_0)=\frac{1}{\sqrt{d^n(d^{2n}-1)}},
    \end{equation}
    and the circumradius of $\AWP_0$ is
    \begin{equation}\label{AWP_outsphere_radius}
        R(\AWP_0)=\frac{1}{\sqrt{d^n(d^n-1)}}.
    \end{equation}
\end{Theorem}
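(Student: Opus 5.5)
The plan has two parts: the inradius via polar duality, and the circumradius directly from the spectral characterization of $\AWP$.

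\emph{Inradius.} First I would show that $r(\WP_0)$ is realized by a ball centred at the origin, using the same averaging argument as in Lemma~\ref{Lemma:StabRadiusOriginCentred}. $\WP$ is invariant under Clifford conjugation, because Cliffords permute the phase space point operators. Averaging a ball $A+B(\bbzero,r)$ over the Clifford group (a $2$-design) therefore gives $B(\bbzero,r)\subset\WP_0$. The same argument applies to the circumball of $\WP_0$.

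Next, $\WP_0$ is the polar dual of $-d^n\WP_0$. Its vertices are the $A_u-\frac{1}{d^n}\mathbb{I}$, all of equal norm $\sqrt{d^n-1/d^n}$ by Lemma~\ref{Lemma:PhasePointEigenvalues}. So $R(\WP_0)=\sqrt{d^n-d^{-n}}$ and $R(-d^n\WP_0)=\sqrt{d^n(d^{2n}-1)}$. Lemma~\ref{Lemma:InradiusPolarCircumradius} then gives $r(\WP_0)=1/\sqrt{d^n(d^{2n}-1)}$.

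Since $\AWP_0\subseteq\WP_0$ and the centred inball of $\WP_0$ is unitarily invariant (Hilbert--Schmidt balls about the origin are preserved by conjugation), that ball lies in every $U\WP_0U^\dagger$. Hence $r(\AWP_0)=r(\WP_0)$. This matches Theorem~\ref{Theorem:StabilizerInradiusQudits}, which is a useful consistency check.

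\emph{Circumradius.} By Theorem~\ref{Theorem:SpectralAWPCharacterization}, $\AWP$ is the set of unitary conjugates of diagonal matrices whose spectra lie in the convex hull of the permutations of $v_1'$ and $v_{d^n-1}$. The squared distance to $\mathbb{I}/d^n$ is $\sum_i\lambda_i^2-1/d^n$. This is a convex function of the spectrum, so its maximum over the spectral polytope is attained at a vertex. It therefore suffices to compare two values.

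For $v_{d^n-1}$, the value is $\frac{1}{d^n-1}-\frac{1}{d^n}=\frac{1}{d^n(d^n-1)}$.

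For $v_1'$, the value is $\frac{4+d^n-1}{(d^n+1)^2}-\frac{1}{d^n}=\frac{d^n(d^n+3)-(d^n+1)^2}{d^n(d^n+1)^2}=\frac{d^n-1}{d^n(d^n+1)^2}$.

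The second value is smaller, since $(d^n-1)^2<(d^n+1)^2$. The maximum is thus attained at $v_{d^n-1}$, which gives $R(\AWP_0)=1/\sqrt{d^n(d^n-1)}$.

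The centred ball of this radius contains $\AWP_0$. To see it is the smallest enclosing ball, I would use an averaging argument. $\AWP_0$ is unitarily invariant, so its minimal enclosing ball (which is unique) has a unitarily invariant centre, and the only such traceless operator is the origin. The radius is then attained by the states with spectrum $v_{d^n-1}$.

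The main obstacle is justifying the vertex-maximization step: one must remember that the spectral polytope already includes all permutations, so convexity of $\|\lambda\|^2$ on it applies. The uniqueness and centring of the minimal ball is the only other point that needs care.
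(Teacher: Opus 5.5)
Your proposal is correct and follows the paper's proof essentially step for step. You centre the balls by Clifford averaging, obtain the inradius from polar duality with the equal-norm phase space point vertices via Lemma~\ref{Lemma:InradiusPolarCircumradius}, and get the circumradius by comparing the distances of $v_{d^n-1}$ and $v_1'$ from the maximally mixed state, with the explicit convexity argument for why the maximum of $\sum_i\lambda_i^2$ sits at a vertex and the uniqueness argument for why the minimal enclosing ball of $\AWP_0$ is centred at the origin being slightly more careful than the paper, which leaves both implicit.
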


\begin{proof}[Proof of Theorem~\ref{Theorem:AWPInradiusOutradius}]
    The Wigner function is Clifford covariant~\cite{Gross2006,Gross2008,RaussendorfFeldmann2023}. Therefore, the Wigner polytope is closed under the action of the Clifford group, and so using the same argument as in the proof of Lemma~\ref{Lemma:StabRadiusOriginCentred}, we can assume that the ball contained in the Wigner polytope that achieves the inradius is centred at the maximally mixed state, or equivalently, the ball contained in $\WP_0$ that achieves the inradius is centred at $\bbzero$.

    Then the proof of the first statement (inradius) is essentially the same as the proof of Theorem~\ref{Theorem:StabilizerInradiusQudits}, except the Wigner polytope and its dual each have only one unitary group orbit of vertices, and they're both proportional to phase space point operators.

    For the second statement (circumradius), by unitary invariance of the ball centred at the maximally mixed state with radius $R(\AWP_0)$, it suffices to find the largest possible Hilbert-Schmidt distance from the maximally mixed state to a diagonal AWP state. The possible entries of a diagonal AWP state forms a polytope characterized in Theorem~\ref{Theorem:SpectralAWPCharacterization}. The vertices of this polytope have diagonal entries of the form Eq.~\eqref{Equation:AWPSpectralVertex}. The Hilbert-Schmidt distance from the maximally mixed state to the first of these vertices is
    \begin{equation}
        \bigg|\bigg|v_{d^n-1}-\left(\frac{1}{d^n},\frac{1}{d^n},\dots\right)\bigg|\bigg|_2 = \frac{1}{\sqrt{d^n(d^n-1)}},
    \end{equation}
    and to the second of the vertices is
    \begin{equation}
        \bigg|\bigg|v_1'-\left(\frac{1}{d^n},\frac{1}{d^n},\dots\right)\bigg|\bigg|_2 = \sqrt{\frac{d^n-1}{d^n(d^n+
        1)^2}},
    \end{equation}
    where $v_{d^n-1}$ and $v_1'$ are defined in the proof of Theorem~\ref{Theorem:SpectralAWPCharacterization}. The first value is always the larger of the two. This proves the second statement of the theorem.
\end{proof}

\begin{Corollary}
    Every state $\rho$ with purity
    \begin{equation}
        \Tr(\rho^2)\le\frac{1}{d^n}+\frac{1}{d^n(d^{2n}-1)}=\frac{1}{d^n-\frac{1}{d^n}}
    \end{equation}
    is AWP, and every state $\rho$ with purity
    \begin{equation}
        \Tr(\rho^2)>\frac{1}{d^n}+\frac{1}{d^n(d^n-1)}=\frac{1}{d^n-1}.
    \end{equation}
    is not AWP.
\end{Corollary}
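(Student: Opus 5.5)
The corollary follows by converting the two Hilbert--Schmidt radii of Theorem~\ref{Theorem:AWPInradiusOutradius} into purity statements. For $\rho\in\Herm_1(\mathcal{H})$, write $\rho=\frac{1}{d^n}\mathbb{I}+\rho_0$ with $\rho_0\in\Herm_0(\mathcal{H})$. Because $\Tr(\rho_0)=0$, the cross term drops out and
\begin{equation*}
\Tr(\rho^2)=\frac{1}{d^n}+\Tr(\rho_0^2)=\frac{1}{d^n}+\|\rho_0\|_2^2 .
\end{equation*}
So a purity threshold $\frac{1}{d^n}+r^2$ is exactly the statement that $\rho_0$ lies in the Hilbert--Schmidt ball of radius $r$ about $\bbzero$.

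\emph{Sufficient condition.} Set $r=r(\WP_0)=1/\sqrt{d^n(d^{2n}-1)}$. The proof of Theorem~\ref{Theorem:AWPInradiusOutradius} shows, via Clifford covariance and the averaging argument of Lemma~\ref{Lemma:StabRadiusOriginCentred}, that the inball can be taken centred at $\bbzero$, so $B(\bbzero,r)\subseteq\WP_0$. This ball is invariant under unitary conjugation. Hence $B(\bbzero,r)\subseteq U\,\WP_0\,U^\dagger$ for every $U$, and therefore $B(\bbzero,r)\subseteq\AWP_0$.

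Now take a state with $\Tr(\rho^2)\le \frac{1}{d^n}+r^2$. Then $\|\rho_0\|_2\le r$, so $\rho$ is AWP. The closed ball is included because $\WP$ is closed. The closed form follows from
\begin{equation*}
\frac{1}{d^n}+\frac{1}{d^n(d^{2n}-1)}=\frac{d^{2n}}{d^n(d^{2n}-1)}=\frac{d^n}{d^{2n}-1},
\end{equation*}
which equals $1/(d^n-d^{-n})$.

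\emph{Necessary condition.} Theorem~\ref{Theorem:AWPInradiusOutradius} gives $\AWP_0\subseteq B\bigl(\bbzero,R\bigr)$ with $R=1/\sqrt{d^n(d^n-1)}$, measured from the maximally mixed state. This holds because every AWP state is unitarily equivalent to a diagonal state with spectrum in the polytope of Theorem~\ref{Theorem:SpectralAWPCharacterization}. That polytope's farthest vertices from the uniform spectrum are the permutations of $v_{d^n-1}$, and the norm is convex, so its maximum over the polytope is attained at a vertex.

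Consequently, if $\Tr(\rho^2)>\frac{1}{d^n}+R^2$, then $\|\rho_0\|_2>R$ and $\rho\notin\AWP$. The closed form follows from
\begin{equation*}
\frac{1}{d^n}+\frac{1}{d^n(d^n-1)}=\frac{d^n}{d^n(d^n-1)}=\frac{1}{d^n-1}.
\end{equation*}

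The only step needing real care is that the inball of $\WP_0$ is centred at $\bbzero$, since the purity translation only works for origin-centred balls. This is already established in the proof of Theorem~\ref{Theorem:AWPInradiusOutradius}, so I would invoke it directly.
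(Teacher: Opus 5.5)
Your proposal is correct and follows essentially the same route as the paper: the corollary is Theorem~\ref{Theorem:AWPInradiusOutradius} rewritten through $\Tr(\rho^2)=\frac{1}{d^n}+\Tr(\rho_0^2)$. The sufficient condition uses the origin-centred, unitarily invariant inball of $\WP_0$, and the necessary condition uses the fact that the spectral polytope of Theorem~\ref{Theorem:SpectralAWPCharacterization} is farthest from the uniform spectrum at the permutations of $v_{d^n-1}$; your closed-form arithmetic for both thresholds is also correct.
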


That is, if $\Tr(\rho^2)\le1/(d^n-1/d^n)$, then for every unitary $U\in\mathrm{U}(d^n)$, the Wigner function of $U\rho U^\dagger$ is nonnegative, and if $\Tr(\rho^2)>1/(d^n-1)$ then there exists a unitary $U\in\mathrm{U}(d^n)$ such that the Wigner function of $U\rho U^\dagger$ takes negative values.

\subsection{Comparison to other balls}

Here we briefly compare the Hilbert-Schmidt balls derived above with other balls of interest.  We emphasize that our $r(\STAB_0)$ for qubits relies on a conjecture, while for odd-prime-qudits it is a theorem.

In dimensions $d^n>2$, it is no longer true that all Hermitian operators with unit Hilbert-Schmidt norm correspond to positive semi-definite (PSD) operators, i.e.~quantum states. The radius of the largest inscribed ball of PSD matrices centred at the maximally mixed state $\frac{1}{d^n}\mathbb I$ is well-known~\cite{Harriman1978, BengtssonZyczkowski2006} to be
\begin{equation}
    r_{\PSD} = \frac{1}{\sqrt{d^n(d^n-1)}} > r(\STAB_0).
\end{equation}
Hence all traceless Hermitian operators within $r(\STAB_0)$ can be interpreted as (shifted) quantum states. This is true regardless of $d$ being $2$ or an odd-prime, however for $d=2$ it is notable how close the two radii are.  For odd-prime dimensions, this radius exactly matches that of the circumradius of the set of absolutely Wigner-positive states~\eqref{AWP_outsphere_radius}:
\begin{equation}
    r_{\PSD} = R(\AWP_0).
\end{equation}
Hence every traceless Hermitian operator inside $\AWP_0$ also corresponds to a valid quantum state.

Another notable ball is the Gurvits-Barnum ball of absolutely separable states, centred at the maximally mixed state~\cite{GurvitsBarnum2003} (see also~\cite{KusZyczkowski2001}) with radius
\begin{equation}
    r_{\text{GB}} = \frac{2}{(\sqrt{2}d)^n},
\end{equation}
where separability is defined with respect to the usual partition $(\mathbb C^d)^{\otimes n}$. Note that $r_{\text{GB}} < r_{\PSD}$.  This radius is known to be optimal in the bipartite case~\cite{GurvitsBarnum2002}, while larger balls have been found in the multipartite case~\cite{GurvitsBarnum2005, Hildebrand2007, WenKempf2025}.  In comparison to the stabilizer polytope we have
\begin{equation}
    r(\STAB_0) < r_{\text{GB}},
\end{equation}
for all prime dimensions $d$. Hence all $n$-qudit states inside the stabilizer inscribed ball are also absolutely separable.  This also immediately applies to the inscribed ball of absolutely Wigner-positive states.  However, notably, since the circumradius of $\AWP_0$ is greater than the Gurvits-Barnum ball, it is highly likely there exist entangled AWP states; see also \cite{DenisMartin2024}.

To summarize all of the above, together with results from previous sections, we have the following chain of inclusions of Hilbert-Schmidt balls (i.e.~rescaled purities):
\begin{widetext}
\begin{equation}\label{radius_comparisons}
    r(\ASTAB_0) = r(\STAB_0) = r(\AWP_0) = r(\WP_0) < r_{\text{GB}} < r_{\PSD} = R(\AWP_0).
\end{equation}
\end{widetext}
Eq.~\eqref{radius_comparisons} is to be read as follows.  For odd-prime dimensions, it has been proven and does not rely on any conjectures.  For $d=2$ the Wigner polytope in the sense used in this paper is not defined, and $r(\STAB)$ depends on our Conjecture~\ref{Conjecture:LambdaOutradius}, or, equivalently, Conjecture~1 from Ref.~\cite{LiuRoth2025}.  Otherwise, the remaining relations hold.

\section{Discussion}\label{Section:Discussion}

Here we characterized the set of absolutely stabilizer states for $n$ qudits as a polytope in the simplex of state spectra.  This was done by analyzing the vertices of $\Lambda$, the polar dual of the stabilizer polytope, to create a set of linear constraints on the eigenvalues of a general mixed state.  Restricting to prime local dimension, we then focused on the closed and noncontextual (CNC) vertices of $\Lambda$ to develop Conjecture~\ref{Conjecture:LambdaMajorization} on finding a facet-minimal description of the multi-qubit muggle polytope.  This conjecture essentially says that only the CNC vertices, which are a tiny subset of $\mathrm{vert}(\Lambda)$, are required.  This was explicitly proved in low-dimensional cases, and numerical evidence for its validity in larger systems was presented in the appendix.  We then applied our framework to absolute Wigner positivity to introduce the AWP polytopes in the simplex of spectra.  These had a simpler structure due to the smaller number of facets to the Wigner polytope relative to the stabilizer polytope.  The inradius of these balls, which amount to purity-based sufficient conditions for state inclusion, were then studied.  This resulted in complete results in odd dimensions and in Conjecture~\ref{Conjecture:LambdaOutradius} for qubits.  We ended with a comparison to other known balls, namely those of absolute separability and positive semi-definiteness.

This work opens up several lines of future research, the most immediate of which being to advance Conjecture~\ref{Conjecture:LambdaMajorization} on the sufficiency of the CNC operators in the many-qubit $\Lambda$ polytope.  This will require a better understanding of how the CNC polytopes (i.e.~the convex hull of the CNC operators) relate to the remaining vertices.  A related avenue is to explore the feasibility of this conjecture for qudits with odd-prime dimensions.  This appears to be even more difficult than the qubit case due to the globally non-linear value assignments unique to this setting.

The other primary follow-up is to advance Conjecture~\ref{Conjecture:LambdaOutradius} on the circumradius of $\Lambda$ for qubits, or, equivalently, the inradius of the qubit stabilizer polytope.  Such a result would amount to a sufficient and tight test for absolute stabilizerness based only on purity.  This conjecture is moreover equivalent to Conjecture 1 of Ref.~\cite{LiuRoth2025}, where the same purity threshold is derived via the Triangle Criterion, a magic analogue of the PPT test.  This suggests that the structure of the $\Lambda$ polytope may provide insight to related protocols on magic state witnessing and distillation.  
Finding the circumradius of the set of absolutely stabilizer states is also an open problem, which would provide a tight necessary condition for absolute stabilizerness.  This circumradius could then be compared to the ball of absolutely separable states.  If the former is bigger than the latter, this would give strong evidence towards there existing entangled absolutely stabilizer multi-qubit states.  Conclusively relating entanglement to absolute stabilizerness or absolute Wigner-positivity is left open for future work.

We also mention the possibility of extending our framework to quantum channels via the Choi matrix representation \cite{MariEisert2012, WangSu2019, SeddonCampbell2019, HeimendahlGross2022}.  This would entail the characterization of ``completely muggle-preserving'' channels as those that preserve the set of absolutely stabilizer states regardless of any coupling to ancilla; see also \cite{PatraSenDe2024}.  Associated to this would be a resource theory of unitarily-accessible magic, analogous to \cite{SalazarJunior2022, PatraSenDe2023}.

\section*{Acknowledgments}
We thank Ernesto Galv\~{a}o and Mark Howard for insightful discussions. M.Z.~is funded by the Natural Sciences and Engineering Research Council of Canada (NSERC). J.D.~acknowledges funding from the European Union’s Horizon Europe Framework Programme (EIC Pathfinder Challenge project Veriqub) under Grant Agreement No.~101114899. This project grew out of discussions that took place at the Algebraic Structures in Quantum Computation Workshop (ASQC6) at Leibniz University Hannover in 2025. We thank Robert Raussendorf and the rest of the organizing committee for the invitation.

\bibliographystyle{apsrev4-2}
\bibliography{biblio}

@article{KyFan1949,
    title = {{On a Theorem of Weyl Concerning Eigenvalues of Linear Transformations I}},
    author = {Ky Fan},
    journal = {Proceedings of the National Academy of Sciences},
    volume = {35},
    number = {11},
    pages = {652--655},
    year = {1949},
    doi = {10.1073/pnas.35.11.652}
}

@article{KyFan1950,
    title = {{On a Theorem of Weyl Concerning Eigenvalues of Linear Transformations II}},
    author = {Ky Fan},
    journal = {Proceedings of the National Academy of Sciences},
    volume = {36},
    number = {1},
    pages = {31--35},
    year = {1950},
    doi = {10.1073/pnas.36.1.31}
}

@article{Harriman1978,
    title = {Geometry of density matrices. I. Definitions, $N$ matrices and 1 matrices},
    author = {Harriman, John E.},
    journal = {Physical Review A},
    volume = {17},
    number = {4},
    pages = {1249--1256},
    year = {1978},
    doi = {10.1103/PhysRevA.17.1249}
}

@article{Wootters1987,
    title = {{A Wigner-function formulation of finite-state quantum mechanics}},
    author = {William K. Wootters},
    journal = {Annals of Physics},
    volume = {176},
    number = {1},
    pages = {1--21},
    year = {1987},
    doi = {10.1016/0003-4916(87)90176-X}
}

@article{GritzmannKlee1992,
    title = {{Inner and outer j-radii of convex bodies in finite-dimensional normed spaces}},
    author = {Peter Gritzmann and Victor Klee},
    journal = {Discrete \& Computational Geometry},
    volume = {7},
    pages = {255--280},
    year = {1992},
    doi = {10.1007/BF02187841}
}

@article{Mermin1993,
    title = {{Hidden variables and the two theorems of John Bell}},
    author = {N. David Mermin},
    journal = {Reviews of Modern Physics},
    volume = {65},
    number = {3},
    pages = {803--815},
    year = {1993},
    doi = {10.1103/RevModPhys.65.803},
    eprint = {arXiv:1802.10119}
}

@book{Ziegler1995,
    title = {{Lectures on Polytopes}},
    author = {G\"unter M. Ziegler},
    publisher = {Springer},
    address = {New York},
    year = {1995},
    doi = {10.1007/978-1-4613-8431-1}
}

@inproceedings{FukudaProdon1996,
    title = {{Double description method revisited}},
    author = {Komei Fukuda and Alain Prodon},
    booktitle = {Combinatorics and Computer Science},
    editor = {Michel Deza and Reinhardt Euler and Ioannis Manoussakis},
    publisher = {Springer Berlin Heidelberg},
    address = {Berlin, Heidelberg},
    pages = {91--111},
    year = {1996}
}

@phdthesis{Gottesman1997,
	title = {{Stabilizer codes and quantum error correction}},
	author = {Daniel Gottesman},
	school = {California Institute of Technology},
	year = {1997},
	eprint = {arXiv:quant-ph/9705052}
}

@inproceedings{Gottesman1999,
    title = {{Fault-Tolerant Quantum Computation with Higher-Dimensional Systems}},
    author = {Daniel Gottesman},
    booktitle = {NASA International Conference on Quantum Computing and Quantum Communications},
    editor = {Colin P. Williams},
    year = {1999},
    publisher = {Springer Berlin Heidelberg},
    address = {Berlin, Heidelberg},
    pages = {302--313},
    doi = {10.1007/3-540-49208-9_27},
    eprint = {arXiv:quant-ph/9802007}
}

@article{GottesmanChuang1999,
    title = {{Demonstrating the viability of universal quantum computation using teleportation and single-qubit operations}},
    author = {Daniel Gottesman and Isaac L. Chuang},
    journal = {Nature},
    volume = {402},
    number = {6760},
    pages = {390--393},
    year={1999},
    doi = {10.1038/46503},
    eprint = {arXiv:quant-ph/9908010}
}

@article{KusZyczkowski2001,
    title = {{Geometry of entangled states}},
    author = {Marek Kus and Karol \.Zyczkowski},
    journal = {Physical Review A},
    volume = {63},
    pages = {032307},
    year = {2001},
    doi = {10.1103/PhysRevA.63.032307},
    eprint = {arXiv:quant-ph/0006068}
}

@article{GurvitsBarnum2002,
    title = {{Largest separable balls around the maximally mixed bipartite quantum state}},
    author = {Leonid Gurvits and Howard Barnum},
    journal = {Physical Review A},
    volume = {66},
    pages = {062311},
    year = {2002},
    doi = {10.1103/PhysRevA.66.062311},
    eprint = {arXiv:quant-ph/0204159}
}

@article{GurvitsBarnum2003,
    title = {{Separable balls around the maximally mixed multipartite quantum states}},
    author = {Leonid Gurvits and Howard Barnum},
    journal = {Physical Review A},
    volume = {68},
    pages = {042312},
    year = {2003},
    doi = {10.1103/PhysRevA.68.042312},
    eprint = {arXiv:quant-ph/0302102}
}

@article{AaronsonGottesman2004,
    title = {{Improved simulation of stabilizer circuits}},
    author = {Scott Aaronson and Daniel Gottesman},
    journal = {Physical Review A},
    volume = {70},
    pages = {052328},
    year = {2004},
    doi = {10.1103/PhysRevA.70.052328},
    eprint = {arXiv:quant-ph/0406196}
}

@article{Appleby2005, 
    title = {{Symmetric informationally complete–positive operator valued measures and the extended Clifford group}},
    author = {D. M. Appleby},
    journal = {Journal of Mathematical Physics},
    volume = {46},
    number = {5},
    pages = {052107},
    year = {2005},
    doi = {10.1063/1.1896384},
    eprint = {arXiv:quant-ph/0412001}
}

@article{BravyiKitaev2005,
    title = {{Universal quantum computation with ideal {Clifford} gates and noisy ancillas}},
    author = {Sergey Bravyi and Alexei Kitaev},
    journal = {Physical Review A},
    volume = {71},
    pages = {022316},
    year = {2005},
    doi = {10.1103/PhysRevA.71.022316},
    eprint = {arXiv:quant-ph/0403025}
}

@article{Chin2005,
    title = {{On non-commuting sets in an extraspecial p-group}},
    author = {Angelina Y. M. Chin},
    journal = {Journal of group theory},
    volume = {8},
    number = {2},
    pages = {189--194},
    year = {2005},
    doi = {10.1515/jgth.2005.8.2.189}
}

@article{GurvitsBarnum2005,
    title = {{Better bound on the exponent of the radius of the multipartite separable ball}},
    author = {Leonid Gurvits and Howard Barnum},
    journal = {Physical Review A},
    volume = {72},
    pages = {032322},
    year = {2005},
    doi = {10.1103/PhysRevA.72.032322},
    eprint = {arXiv:quant-ph/0409095}
}

@book{BengtssonZyczkowski2006,
    title = {{Geometry of Quantum States}},
    author = {Ingemar Bengtsson and Karol \.Zyczkowski},
    publisher = {Cambridge University Press},
    address = {Cambridge},
    year = {2006},
    doi = {10.1017/CBO9780511535048}
}

@article{Gross2006,
    title = {{Hudson’s theorem for finite-dimensional quantum systems}},
    author = {David Gross},
    journal = {Journal of Mathematical Physics},
    volume = {47},
    pages = {122107},
    year = {2006},
    doi = {10.1063/1.2393152},
    eprint = {arXiv:quant-ph/0602001}
}

@article{GrossEisert2007,
    title = {{Evenly distributed unitaries: On the structure of unitary designs}},
    author = {D. Gross and K. Audenaert and J. Eisert},
    journal = {Journal of Mathematical Physics},
    volume = {48},
    number = {5},
    pages = {052104},
    year = {2007},
    doi = {10.1063/1.2716992},
    eprint = {arXiv:quant-ph/0611002},
}

@article{Hildebrand2007,
    title = {{Entangled states close to the maximally mixed state}},
    author = {Roland Hildebrand},
    journal = {Physical Review A},
    volume = {75},
    pages = {062330},
    year = {2007},
    doi = {10.1103/PhysRevA.75.062330},
    eprint = {arXiv:quant-ph/0702040}
}

@article{ApplebyChaturvedi2008,
    title = {{Spectra of phase point operators in odd prime dimensions and the extended Clifford group}},
    author = {D. M. Appleby and Ingemar Bengtsson and S. Chaturvedi},
    journal = {Journal of Mathematical Physics},
    volume = {49},
    pages = {012102},
    year = {2008},
    doi = {10.1063/1.2824479},
    eprint = {arXiv:0710.3013}
}

@phdthesis{Gross2008,
    title = {{Computational power of quantum many-body states and some results on discrete phase spaces}},
    author = {David Gross},
    school = {Imperial College London},
    year = {2008},
    url = {https://www.thp.uni-koeln.de/gross/files/diss.pdf}
}

@inbook{CampbellBrowne2009,
    title = {{On the Structure of Protocols for Magic State Distillation}},
    author = {Campbell, Earl T. and Browne, Dan E.},
    booktitle = {Theory of Quantum Computation, Communication, and Cryptography},
    editor = {Childs, Andrew and Mosca, Michele},
    publisher = {Springer Berlin Heidelberg},
    address = {Berlin, Heidelberg},
    pages = {20--32},
    year = {2009},
    doi = {10.1007/978-3-642-10698-9_3},
    eprint = {arXiv:0908.0838}
}

@article{EastinKnill2009,
    title = {{Restrictions on Transversal Encoded Quantum Gate Sets}},
    author = {Bryan Eastin and Emanuel Knill},
    journal = {Physical Review Letters},
    volume = {102},
    pages = {110502},
    year = {2009},
    doi = {10.1103/PhysRevLett.102.110502},
    eprint = {arXiv:0811.4262}
}

@article{Reichardt2009,
    title = {{Quantum universality by state distillation}},
    author = {Ben W. Reichardt},
    journal = {{Quantum Information and Computation}},
    volume = {9},
    number = {11\&12},
    pages = {1030--1052},
    year = {2009},
    doi = {10.26421/QIC9.11-12-7},
    eprint = {arXiv:quant-ph/0608085}
}

@article{CampbellBrowne2010,
    title = {{Bound States for Magic State Distillation in Fault-Tolerant Quantum Computation}},
    author = {Earl T. Campbell and Dan E. Browne},
    journal = {Physical Review Letters},
    volume = {104},
    pages = {030503},
    year = {2010},
    doi = {10.1103/PhysRevLett.104.030503},
    eprint = {arXiv:0908.0836}
}

@book{NariciBeckenstein2010,
    title = {{Topological Vector Spaces}},
    author = {Lawrence Narici and Edward Beckenstein},
    publisher = {Chapman \& Hall / CRC},
    address = {London},
    year = {2010}
}

@book{MarshallArnold2011,
    title = {{Inequalities: Theory of Majorization and Its Applications}},
    author = {Albert W. Marshall and Ingram Olkin and Barry C. Arnold},
    publisher = {Springer},
    address = {New York},
    edition = {2nd},
    year = {2011},
    doi = {10.1007/978-0-387-68276-1}
}

@article{MariEisert2012,
    title = {{Positive Wigner Functions Render Classical Simulation of Quantum Computation Efficient}},
    author = {A. Mari and J. Eisert},
    journal = {Physical Review Letters},
    volume = {109},
    pages = {230503},
    year = {2012},
    doi = {10.1103/PhysRevLett.109.230503},
    eprint = {arXiv:1208.3660}
}

@article{VeitchEmerson2012,
    title = {{Negative quasi-probability as a resource for quantum computation}},
    author = {Victor Veitch and Christopher Ferrie and David Gross and Joseph Emerson},
    journal = {New Journal of Physics},
    volume = {14},
    pages = {113011},
    year = {2012},
    doi = {10.1088/1367-2630/14/11/113011},
    eprint = {arXiv:1201.1256}
}

@article{DeBeaudrap2013, 
    title = {{A linearized stabilizer formalism for systems of finite dimension}},
    author = {Niel De Beaudrap},
    journal = {Quantum Information and Computation},
    volume = {13},
    number = {1\&2},
    pages = {73--115},
    year = {2013},
    doi = {10.26421/QIC13.1-2-6},
    eprint = {arXiv:1102.3354}
}

@article{GrossWalter2013, 
    title = {{Stabilizer information inequalities from phase space distributions}},
    author = {David Gross and Michael Walter},
    journal = {Journal of Mathematical Physics},
    volume = {54},
    number = {8},
    pages = {082201},
    year = {2013},
    doi = {10.1063/1.4818950},
    eprint = {arXiv:1302.6990}
}

@article{HowardEmerson2014,
    title = {Contextuality supplies the {`magic'} for quantum computation},
    author = {Mark Howard and Joel Wallman and Victor Veitch and Joseph Emerson},
    journal = {Nature},
    volume = {510},
    number = {7505},
    pages = {351--355},
    year = {2014},
    doi = {10.1038/nature13460},
    eprint = {arXiv:1401.4174}
}

@article{VeitchEmerson2014,
    title = {{The resource theory of stabilizer quantum computation}},
    author = {Victor Veitch and SA Hamed Mousavian and Daniel Gottesman and Joseph Emerson},
    journal = {New Journal of Physics},
    volume = {16},
    pages = {013009},
    year = {2014},
    doi = {10.1088/1367-2630/16/1/013009},
    eprint = {arXiv:1307.7171}
}

@article{PashayanBartlett2015,
    title = {Estimating outcome probabilities of quantum circuits using quasiprobabilities},
    author = {Hakop Pashayan and Joel J Wallman and Stephen D Bartlett},
    journal = {Physical Review Letters},
    volume = {115},
    number = {7},
    pages = {070501},
    year = {2015},
    doi = {10.1103/PhysRevLett.115.070501},
    eprint = {arXiv:1503.07525}
}

@article{BravyiGosset2016,
    title = {{Improved Classical Simulation of Quantum Circuits Dominated by Clifford Gates}},
    author = {Sergey Bravyi and David Gosset},
    journal = {Physical Review Letters},
    volume = {116},
    pages = {250501},
    year = {2016},
    doi = {10.1103/PhysRevLett.116.250501},
    eprint = {arXiv:1601.07601}
}

@article{BravyiSmolin2016,
    title = {{Trading Classical and Quantum Computational Resources}},
    author = {Sergey Bravyi and Graeme Smith and John A. Smolin},
    journal = {Physical Review X},
    volume = {6},
    pages = {021043},
    year = {2016},
    doi = {10.1103/PhysRevX.6.021043},
    eprint = {arXiv:1506.01396}
}

@article{Webb2016,
    title = {{The Clifford group forms a unitary 3-design}},
    author = {Zak Webb},
    journal = {Quantum Information and Computation},
    volume = {16},
    pages = {1379--1400},
    year = {2016},
    doi = {10.26421/QIC16.15-16-8},
    eprint = {arXiv:1510.02769}
}

@article{DelfosseRaussendorf2017,
    title = {{Equivalence between contextuality and negativity of the Wigner function for qudits}},
    author = {Nicolas Delfosse and Cihan Okay and Juan Bermejo-Vega and Dan E Browne and Robert Raussendorf},
    journal = {New Journal of Physics},
    volume = {19},
    pages = {123024},
    year = {2017},
    doi = {10.1088/1367-2630/aa8fe3},
    eprint = {arXiv:1610.07093}
}

@article{HowardCampbell2017,
    title = {{Application of a Resource Theory for Magic States to Fault-Tolerant Quantum Computing}},
    author = {Mark Howard and Earl Campbell},
    journal = {Physical Review Letters},
    volume = {118},
    pages = {090501},
    year = {2017},
    doi = {10.1103/PhysRevLett.118.090501},
    eprint = {arXiv:1609.07488}
}

@article{OkayRaussendorf2017,
    title = {{Topological proofs of contextuality in quantum mechanics}},
    author = {Cihan Okay and Sam Roberts and Stephen D. Bartlett and Robert Raussendorf},
    journal = {Quantum Information and Computation},
    volume = {17},
    number = {13\&14},
    pages = {1135--1166},
    year = {2017},
    doi = {10.26421/QIC17.13-14-5},
    eprint = {arXiv:1701.01888}
}

@article{Zhu2017,
    title = {{Multiqubit Clifford groups are unitary 3-designs}},
    author = {Huangjun Zhu},
    journal = {Physical Review A},
    volume = {96},
    pages = {062336},
    year = {2017},
    doi = {10.1103/PhysRevA.96.062336},
    eprint = {arXiv:1510.02619}
}

@article{BravyiHoward2019,
    title = {{Simulation of quantum circuits by low-rank stabilizer decompositions}},
    author = {Sergey Bravyi and Dan Browne and Padraic Calpin and Earl Campbell and David Gosset and Mark Howard},
    journal = {{Quantum}},
    volume = {3},
    pages = {181},
    year = {2019},
    doi = {10.22331/q-2019-09-02-181},
    eprint = {arXiv:1808.00128}
}

@mastersthesis{Heimendahl2019,
    title = {{The stabilizer polytope and contextuality for qubit systems}},
    author = {Arne Heimendahl},
    school = {University of Cologne},
    year = {2019},
    url = {https://www.mi.uni-koeln.de/opt/wp-content/uploads/2020/07/MT_Arne_Heimendahl.pdf}
}

@article{SeddonCampbell2019,
    title = {{Quantifying magic for multi-qubit operations}},
    author = {James R. Seddon and Earl T. Campbell},
    journal = {Proceedings of the Royal Society A},
    volume = {475},
    number = {2227},
    pages = {20190251},
    year = {2019},
    doi = {10.1098/rspa.2019.0251},
    eprint = {arXiv:1901.03322}
}

@article{WangSu2019,
    title = {Quantifying the magic of quantum channels},
    author = {Xin Wang and Mark M Wilde and Yuan Su},
    journal = {New Journal of Physics},
    volume = {21},
    number = {10},
    pages = {103002},
    year = {2019},
    doi = {10.1088/1367-2630/ab451d},
    eprint = {arXiv:1903.04483}
}

@article{AbgaryanTorosyan2020,
    title = {{The Global Indicator of Classicality of an Arbitrary $N$-Level Quantum System}},
    author = {Vahagn Abgaryan and Arsen Khvedelidze and Astghik Torosyan},
    journal = {Journal of Mathematical Sciences},
    volume = {251},
    number = {3},
    pages = {301--314},
    year = {2020},
    doi = {10.1007/s10958-020-05092-6},
    eprint = {arXiv:1910.11220}
}

@article{PrakashGupta2020,
    title = {{Contextual bound states for qudit magic state distillation}},
    author = {Shiroman Prakash and Aashi Gupta},
    journal = {Physical Review A},
    volume = {101},
    pages = {010303},
    year = {2020},
    doi = {10.1103/PhysRevA.101.010303},
    eprint = {arXiv:1905.00392}
}

@article{RaussendorfZurel2020,
    title = {{Phase-space-simulation method for quantum computation with magic states on qubits}},
    author = {Robert Raussendorf and Juani Bermejo-Vega and Emily Tyhurst and Cihan Okay and Michael Zurel},
    journal = {Physical Review A},
    volume = {101},
    pages = {012350},
    year = {2020},
    doi = {10.1103/PhysRevA.101.012350},
    eprint = {arXiv:1905.05374}
}

@article{ZurelRaussendorf2020,
    title = {{Hidden Variable Model for Universal Quantum Computation with Magic States on Qubits}},
    author = {Michael Zurel and Cihan Okay and Robert Raussendorf},
    journal = {Physical Review Letters},
    volume = {125},
    pages = {260404},
    year = {2020},
    doi = {10.1103/PhysRevLett.125.260404},
    eprint = {arXiv:2004.01992}
}

@article{SeddonCampbell2021,
    title = {{Quantifying Quantum Speedups: Improved Classical Simulation From Tighter Magic Monotones}},
    author = {James R. Seddon and Bartosz Regula and Hakop Pashayan and Yingkai Ouyang and Earl T. Campbell},
    journal = {PRX Quantum},
    volume = {2},
    pages = {010345},
    year = {2021},
    doi = {10.1103/PRXQuantum.2.010345},
    eprint = {arXiv:2002.06181}
}

@article{HeimendahlGross2022, 
    title = {{The axiomatic and the operational approaches to resource theories of magic do not coincide}},
    author = {Arne Heimendahl and Markus Heinrich and David Gross},
    journal = {Journal of Mathematical Physics},
    volume = {63},
    number = {11},
    pages = {112201},
    year = {2022},
    doi = {10.1063/5.0085774},
    eprint = {arXiv:2011.11651}
}

@misc{SalazarJunior2022,
    title = {{Resource theory of Absolute Negativity}},
    author = {Roberto Salazar and Jakub Czartowski and A. de Oliveira Junior},
    year = {2022},
    eprint = {arXiv:2205.13480}
}

@misc{KimAbramsky2023,
    title = {{State-independent all-versus-nothing arguments}},
    author = {Boseong Kim and Samson Abramsky},
    year = {2023},
    eprint = {arXiv:2311.11218}
}

@article{PatraSenDe2023,
    title = {Resource theory of nonabsolute separability},
    author = {Patra, Ayan and Maity, Arghya and Sen(De), Aditi},
    journal = {Physical Review A},
    volume = {108},
    pages = {042402},
    year = {2023},
    doi = {10.1103/PhysRevA.108.042402},
    eprint={arXiv:2212.11105}
}

@article{RaussendorfFeldmann2023,
    title = {{The role of cohomology in quantum computation with magic states}},
    author = {Robert Raussendorf and Cihan Okay and Michael Zurel and Polina Feldmann},
    journal = {{Quantum}},
    volume = {7},
    pages = {979},
    year = {2023},
    doi = {10.22331/q-2023-04-13-979},
    eprint = {arXiv:2110.11631}
}

@article{SingalHsieh2023,
    title = {{Counting stabiliser codes for arbitrary dimension}},
    author = {Tanmay Singal and Che Chiang and Eugene Hsu and Eunsang Kim and Hsi-Sheng Goan and Min-Hsiu Hsieh},
    journal = {{Quantum}},
    volume = {7},
    pages = {1048},
    year = {2023},
    doi = {10.22331/q-2023-07-06-1048},
    eprint = {arXiv:2209.01449}
}

@article{DenisMartin2024,
    title = {{Polytopes of absolutely Wigner bounded spin states}},
    author = {J\'er\^ome Denis and Jack Davis and Robert B. Mann and John Martin},
    journal = {{Quantum}},
    volume = {8},
    pages = {1550},
    year = {2024},
    doi = {10.22331/q-2024-12-04-1550},
    eprint = {arXiv:2304.09006}
}

@misc{PatraSenDe2024,
    title = {{Qubit magic-breaking channels}},
    author = {Ayan Patra and Rivu Gupta and Alessandro Ferraro and Aditi Sen(De)},
    year = {2024},
    eprint = {arXiv:2409.04425}
}

@article{ZurelHeimendahl2024,
    title = {{Hidden variable model for quantum computation with magic states on qudits of any dimension}},
    author = {Michael Zurel and Cihan Okay and Robert Raussendorf and Arne Heimendahl},
    journal = {{Quantum}},
    volume = {8},
    pages = {1323},
    year = {2024},
    doi = {10.22331/q-2024-04-30-1323},
    eprint = {arXiv:2110.12318}
}

@misc{ZurelHeimendahl2024b,
    title = {{Efficient classical simulation of quantum computation beyond Wigner positivity}},
    author = {Michael Zurel and Arne Heimendahl},
    year = {2024},
    eprint = {arXiv:2407.10349}
}

@article{WenKempf2025,
    title = {{Separable ellipsoids around multipartite states}},
    author = {Robin Y. Wen and Gilles Parez and Liuke Lyu and William Witczak-Krempa and Achim Kempf},
    journal = {Physical Review A},
    volume = {112},
    pages = {012426},
    year = {2025},
    doi = {10.1103/tyhj-1wlq},
    eprint = {arXiv:2410.05400}
}

@misc{LiuRoth2025,
    title = {{A magic criterion (almost) as nice as PPT, with applications in distillation and detection}},
    author = {Zhenhuan Liu and Tobias Haug and Qi Ye and Zi-Wen Liu and Ingo Roth},
    year = {2025},
    eprint = {arXiv:2512.16777}
}

@article{WagnerGalvao2025, 
    title = {Unitary-invariant method for witnessing nonstabilizerness in quantum processors},
    author = {Wagner, Rafael and Peres, Filipa C R and Zambrini Cruzeiro, Emmanuel and Galv\~{a}o, Ernesto F},
    journal = {Journal of Physics A},
    volume = {58},
    number = {28},
    pages = {285302},
    year = {2025},
    doi = {10.1088/1751-8121/ade9ff},
    eprint = {arXiv:2404.16107}
}

\onecolumngrid
\appendix

\section{Evidence for Conjectures\texorpdfstring{~\ref{Conjecture:LambdaMajorization} and~\ref{Conjecture:LambdaOutradius}}{ }}\label{Section:ConjectureEvidence}

We can verify Conjecture~\ref{Conjecture:LambdaMajorization} and Conjecture~\ref{Conjecture:LambdaOutradius} directly for $n=1$ and $n=2$ qubits. For $n=1$, $\Lambda$ is a cube with vertices $A_{\pm\pm\pm}=(I\pm X\pm Y\pm Z)/2$~\cite{ZurelRaussendorf2020}. These are all CNC operators, and so Conjecture~\ref{Conjecture:LambdaMajorization} holds trivially. Also, $\Tr(A_{\pm\pm\pm}^2)=2$, and so Conjecture~\ref{Conjecture:LambdaOutradius} also holds in this case.

For two qubits, we can enumerate all vertices of $\Lambda$. We find there are $22{,}320$ vertices, which fall into $8$ orbits under the action of the Clifford group with $8$ distinct spectra. These spectra are listed in Table~\ref{Table:TwoQubitLambdaEigenvalues}.

\begin{table}[ht]
    \centering
    \begin{tabular}{|l|cccc|}
        \hline
        Orbit & Exact eigenvalues & & & \\
        \hline\hline
        CNC ($m=1$) & $\frac{1}{2} - \frac{\sqrt{3}}{2}$ & $0$ & $0$ & $\frac{1}{2} + \frac{\sqrt{3}}{2}$ \\
        \hline
        CNC ($m=2$) & $\frac{1}{4} - \frac{\sqrt{5}}{4}$ & $\frac{1}{4} - \frac{\sqrt{5}}{4}$ & $\frac{1}{4} + \frac{\sqrt{5}}{4}$ & $\frac{1}{4} + \frac{\sqrt{5}}{4}$ \\
        \hline
        Row 2 & $-\sqrt{\frac{\sqrt{3}}{8} + \frac{15}{64}} + \frac{\sqrt{3}}{8} + \frac{1}{4}$ & $-\frac{\sqrt{3}}{8} - \sqrt{\frac{15}{64} - \frac{\sqrt{3}}{8}} + \frac{1}{4}$ & $-\frac{\sqrt{3}}{8} + \sqrt{\frac{15}{64} - \frac{\sqrt{3}}{8}} + \frac{1}{4}$ & $\frac{\sqrt{3}}{8} + \frac{1}{4} + \sqrt{\frac{\sqrt{3}}{8} + \frac{15}{64}}$ \\
        \hline
        Row 3 & $-\sqrt{\frac{\sqrt{2}}{8} + \frac{7}{32}} + \frac{\sqrt{2}}{8} + \frac{1}{4}$ & $-\sqrt{\frac{7}{32} - \frac{\sqrt{2}}{8}} - \frac{\sqrt{2}}{8} + \frac{1}{4}$ & $-\frac{\sqrt{2}}{8} + \sqrt{\frac{7}{32} - \frac{\sqrt{2}}{8}} + \frac{1}{4}$ & $\frac{\sqrt{2}}{8} + \frac{1}{4} + \sqrt{\frac{\sqrt{2}}{8} + \frac{7}{32}}$ \\
        \hline
        Row 4 & $\frac{1}{2} - \frac{\sqrt{7}}{4}$ & $0$ & $0$ & $\frac{1}{2} + \frac{\sqrt{7}}{4}$ \\
        \hline
        Row 5 & $-\frac{\sqrt{2}\sqrt{5-\sqrt{5}}}{8} + \frac{1}{4}$ & $\frac{1}{4} + \frac{\sqrt{2}\sqrt{5-\sqrt{5}}}{8}$ & $-\frac{\sqrt{2}\sqrt{5+\sqrt{5}}}{8} + \frac{1}{4}$ & $\frac{1}{4} + \frac{\sqrt{2}\sqrt{5+\sqrt{5}}}{8}$ \\
        \hline
        Row 6 & $\frac{1}{2} - \frac{\sqrt{15}}{6}$ & $0$ & $0$ & $\frac{1}{2} + \frac{\sqrt{15}}{6}$ \\
        \hline
        Row 7 & $-\sqrt{\frac{11\sqrt{5}}{128} + \frac{25}{128}} + \frac{1}{4} + \frac{\sqrt{5}}{8}$ & $-\frac{\sqrt{5}}{8} - \sqrt{\frac{25}{128} - \frac{11\sqrt{5}}{128}} + \frac{1}{4}$ \,& $-\frac{\sqrt{5}}{8} + \sqrt{\frac{25}{128} - \frac{11\sqrt{5}}{128}} + \frac{1}{4}$ \,& $\frac{1}{4} + \frac{\sqrt{5}}{8} + \sqrt{\frac{11\sqrt{5}}{128} + \frac{25}{128}}$ \\
        \hline\hline
        \hline\hline
        Orbit & Approximate decimal expansions & & & \\
        \hline\hline
        CNC ($m=1$) & $-0.36603$ & $\;\;\;0.00000$ & $0.00000$ & $1.36603$\\
        \hline
        CNC ($m=2$) & $-0.30902$ & $-0.30902$ & $0.80902$ & $0.80902$\\
        \hline
        Row 2 & $-0.20497$ & $-0.10018$ & $0.16717$ & $1.13798$\\
        \hline
        Row 3 & $-0.20213$ & $-0.13165$ & $0.27810$ & $1.05569$\\
        \hline
        Row 4 & $-0.16144$ & $\;\;\;0.00000$ & $0.00000$ & $1.16144$\\
        \hline
        Row 5 & $-0.22553$ & $-0.04389$ & $0.54389$ & $0.72553$\\
        \hline
        Row 6 & $-0.14550$ & $\;\;\;0.00000$ & $0.00000$ & $1.14550$\\
        \hline
        Row 7 & $-0.09297$ & $-0.08564$ & $0.02662$ & $1.15198$\\
        \hline
    \end{tabular}
    \caption{Eigenvalues of the eight unitary group orbits (or Clifford group orbits) of the vertices of the $2$-qubit $\Lambda$ polytope. The first two rows are the CNC-type vertices with paramters $m=1$ and $m=2$ according to the classification of Theorem~\ref{Theorem:CNCClassification}. The remaining rows are labeled according to the row in which they appear in Table~2 of Ref.~\cite{Reichardt2009}.}
    \label{Table:TwoQubitLambdaEigenvalues}
\end{table}

The partial sums, $\sum_{\ell=0}^k\lambda_\ell^\downarrow(A_\Omega^\gamma)$, of the two-qubit CNC operators are plotted in Figure~\ref{Figure:TwoQubitLorenzCurves}, along with the corresponding partial sums of eigenvalues, $\sum_{\ell=0}^k\lambda_\ell^\downarrow(A_\alpha)$ of the remaining two-qubit $\Lambda$-polytope vertex spectra. From this plot we can see that all non-CNC spectra are majorized by the CNC spectrum with parameter $m=1$. Therefore, Conjecture~\ref{Conjecture:LambdaMajorization} holds for $n=2$.

\begin{figure}
    \centering
    \includegraphics[width=0.8\linewidth]{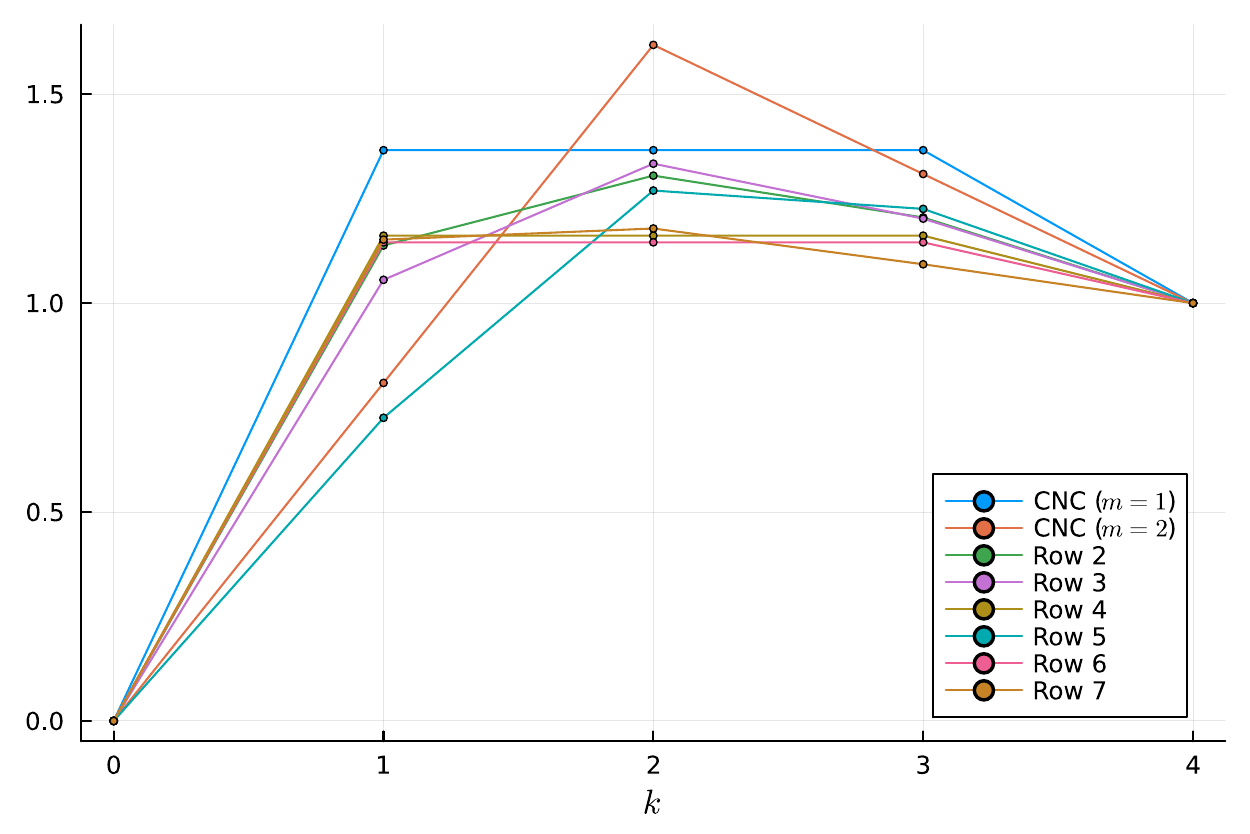}
    \caption{Lorenz curves, $S(k\,|\,\alpha):=\sum_{\ell=1}^k\lambda_k^\downarrow(A_\alpha)$, of the spectra of two-qubit $\Lambda$ polytope vertices $A_\alpha$. The first two lines correspond to the eigenvalues of the type $m=1$ and $m=2$ CNC type vertices. The remaining lines are the spectra of the remaining six Clifford group orbits of $2$-qubit $\Lambda$ polytope vertices. The legend refers to the rows of Table~2 in Ref.~\cite{Reichardt2009}.}
    \label{Figure:TwoQubitLorenzCurves}
\end{figure}

The squares of the Hilbert-Schmidt norms of these orbits are listed in Table~\ref{Table:TwoQubitLambdaNorms}. Here we can see that the largest Hilbert-Schmidt norm is achieved by the CNC-type orbit of $\Lambda$-polytope vertices with parameter $m=1$. Therefore, Conjecture~\ref{Conjecture:LambdaOutradius} also holds for $n=2$.

\begin{table}[ht]
    \centering
    \begin{tabular}{|l|c|}
        \hline
        Orbit & $\Tr(A_\alpha^2)$ \\
        \hline\hline
        CNC ($m=1$) & $2.0$ \\
        CNC ($m=2$) & $1.5$ \\
        Row 2 & $1.375$ \\
        Row 3 & $1.25$ \\
        Row 4 & $1.375$ \\
        Row 5 & $0.875$ \\
        Row 6 & $1.\bar{3}$ \\
        Row 7 & $1.34375$ \\
        \hline
    \end{tabular}
    \caption{The square of the Hilbert-Schmidt norm $\Tr(A_\alpha^2)$ of vertices $A_\alpha$ of the $2$-qubit $\Lambda$ polytope. The first two rows are CNC-type vertices. In the remaining rows, the label ``Row X'' refers to the rows in which they appear in Table~II of Ref.~\cite{Reichardt2009}.}
    \label{Table:TwoQubitLambdaNorms}
\end{table}

For $n=3$, a complete enumeration of vertices of $\Lambda$ is out of reach. We can gain evidence for the conjectures by generating random vertices of $\Lambda$ and checking that they satisfy the conjectures. We can generate random vertices of $\Lambda$ by choosing a random pure quantum state $\ket{\psi}$ and solving the linear program
\begin{equation}
    \max\{\Tr(\ketbra{\psi}{\psi}X)\;|\;X\in\Lambda\}
\end{equation}
By the fundamental theorem of linear programming, the optimal value of this program will be achieved at a vertex of the feasible set, namely, $\Lambda$.

Using this method, we generated over $750{,}000$ random $3$-qubit vertices which fell into about 345{,}000 distinct orbits under the action of Clifford group. All of the spectra of the of the vertices generated are majorized by the type $m=1$ CNC vertices, except for those of the other CNC-type vertices. Therefore, Conjecture~\ref{Conjecture:LambdaMajorization} holds for all of the vertices generated. Some representative examples are shown in Figure~\ref{Figure:ThreeQubitLorenzCurves}. Similarly, all of the vertices generated have Hilbert-Schmidt norm less than or equal to the type $m=1$ CNC vertices. A histogram of the Hilbert-Schmidt norm of these vertices is shown in Figure~\ref{Figure:ThreeQubitHSNormHistogram}.

\begin{figure}
    \centering
    \includegraphics[width=0.8\linewidth]{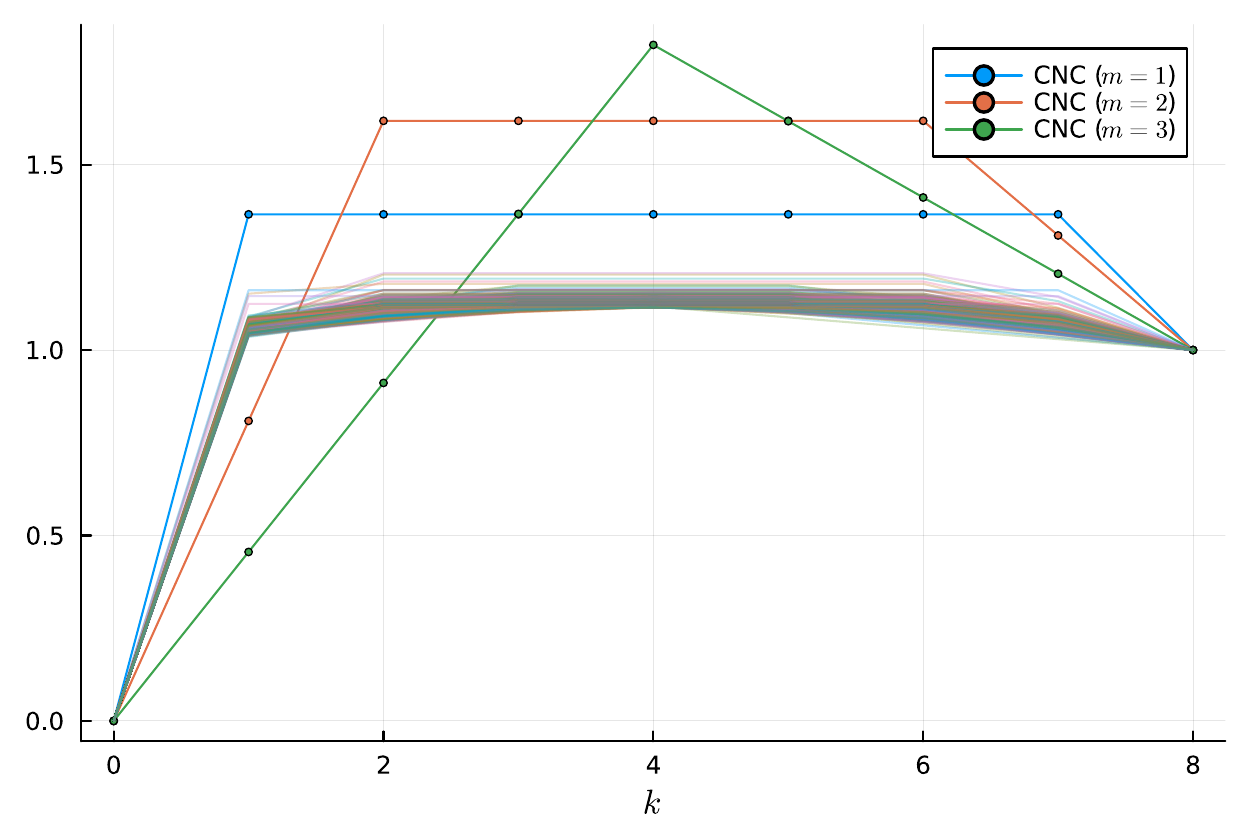}
    \caption{Lorenz curves, $S(k\,|\,\alpha):=\sum_{\ell=1}^k\lambda_\ell^\downarrow(A_\alpha)$, of three qubit $\Lambda$ polytope vertices $A_\alpha$. The CNC-type vertices are highlighted. Also shown are the top $1000$ out of $750{,}000$ randomly generated $3$-qubit vertices that come closest to surpassing the Lorenz curve of the $m=1$ CNC vertex.}
    \label{Figure:ThreeQubitLorenzCurves}
\end{figure}

\begin{figure}
    \centering
    \includegraphics[width=0.8\linewidth]{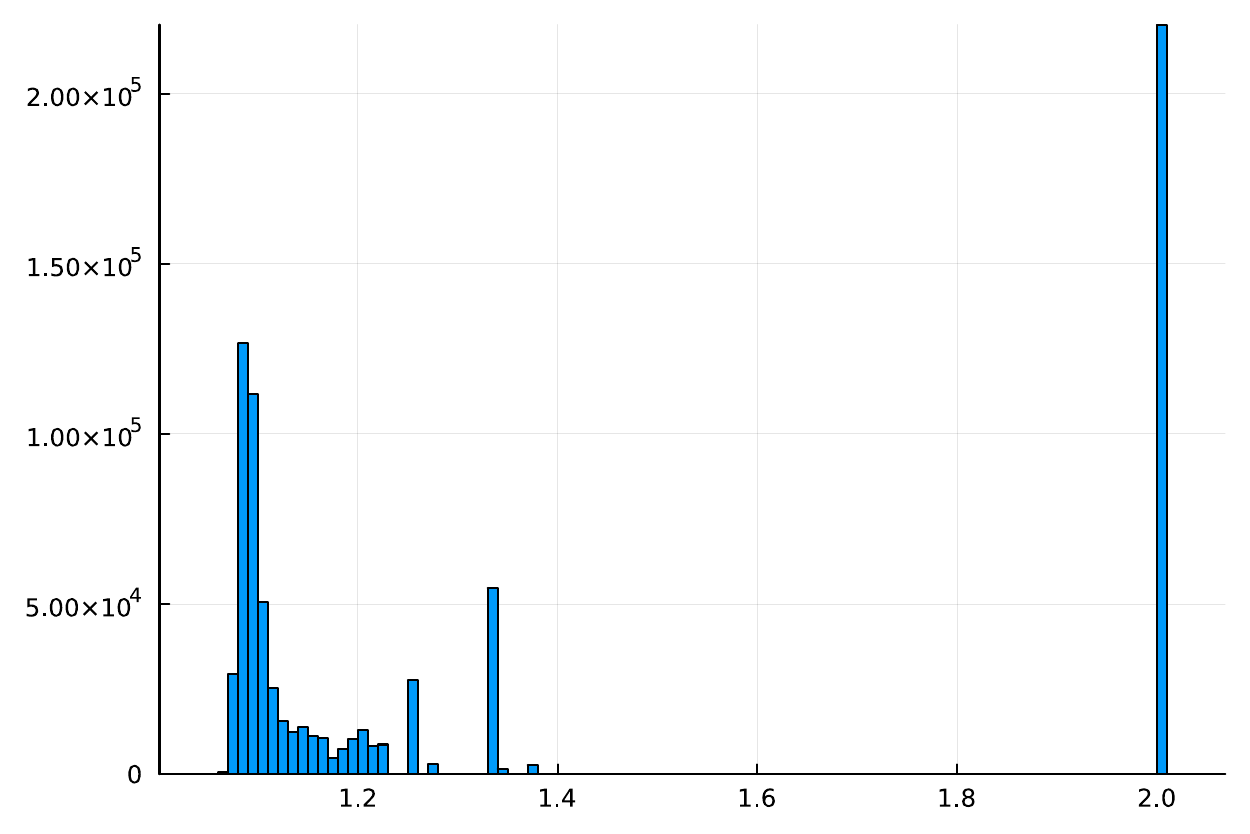}
    \caption{Histogram showing the square of the Hilbert-Schmidt norm for $750{,}000$ (non-unique) randomly generated vertices of the $3$-qubit $\Lambda$ polytope. The spike at $2$ results from the fact that the method of generating random vertices of $\Lambda$ is biased towards generating CNC-type vertices over the much more common non-CNC vertices.}
    \label{Figure:ThreeQubitHSNormHistogram}
\end{figure}

Generating $10{,}000$ random $4$-qubit vertices yielded similar results. All non-CNC vertices generated had spectra majorized by the $m=1$ CNC spectrum, and all had Hilbert-Schmidt norm less than $\sqrt{2}$. Therefore, the conjectures hold for of these randomly generated vertices.

\end{document}